\documentclass[11pt,a4paper]{article}

\usepackage{amsmath,amssymb,amsthm}
\usepackage{amsfonts}
\usepackage{mathtools}
\usepackage{booktabs}
\usepackage{multirow}
\usepackage{enumitem}
\usepackage{array}
\usepackage[margin=2.5cm]{geometry}
\usepackage[colorlinks=true,linkcolor=blue,citecolor=red,urlcolor=blue]{hyperref}
\allowdisplaybreaks

\newtheorem{theorem}{Theorem}[section]
\newtheorem{corollary}[theorem]{Corollary}
\newtheorem{lemma}[theorem]{Lemma}
\newtheorem{proposition}[theorem]{Proposition}
\theoremstyle{definition}
\newtheorem{definition}[theorem]{Definition}
\newtheorem{remark}[theorem]{Remark}
\newtheorem{example}[theorem]{Example}

\newcommand{\Z}{\mathbb{Z}}
\newcommand{\C}{{\mathcal C}}
\newcommand{\cH}{{\mathcal H}}
\newcommand{\cB}{{\mathcal B}}
\newcommand{\cM}{{\mathcal M}}
\newcommand{\cR}{{\mathcal R}}
\newcommand{\cV}{{\mathcal V}}
\newcommand{\cZ}{{\mathcal Z}}
\newcommand{\zero}{{\mathbf{0}}}
\newcommand{\one}{{\mathbf{1}}}
\newcommand{\two}{{\mathbf{2}}}
\newcommand{\four}{{\mathbf{4}}}
\newcommand{\uu}{\mathbf{u}}
\newcommand{\vv}{\mathbf{v}}
\newcommand{\ww}{\mathbf{w}}
\newcommand{\xx}{\mathbf{x}}
\newcommand{\yy}{\mathbf{y}}
\newcommand{\zz}{\mathbf{z}}
\newcommand{\bfe}{\mathbf{e}}
\newcommand{\wt}{{\rm wt}}
\newcommand{\rank}{\operatorname{rank}}
\newcommand{\kernel}{\operatorname{ker}}
\newcommand{\ord}{\operatorname{o}}
\newcommand{\K}{\operatorname{K}}
\newcommand{\Span}{\operatorname{span}}
\newcommand{\sym}{\operatorname{sym}}

\newcommand{\sel}{\operatorname{sel}}

\begin{document}

\title{\bf The kernel-block rank profiles of the $\mathbb{Z}_2\mathbb{Z}_4$-linear and the
$\mathbb{Z}_{2^s}$-linear Hadamard codes, and a complete classification of the
$\mathbb{Z}_2\mathbb{Z}_4\mathbb{Z}_8$-linear Hadamard codes\thanks{This work has been granted by the Juan de la Cierva 2024 grant (JDC2024-053082-I), funded by Ministerio de Ciencia, Innovación y Universidades (MICIU/AEI/10.13039/501100011033) and co-funded by the European Social Fund Plus (FSE+).}}

\author{
Dipak K. Bhunia\\
\small Departament de Matem\`atiques,\\ \small Universitat Polit\`ecnica de Catalunya,\\
\small Barcelona, Spain
}
\date{}

\maketitle

\begin{abstract}
The $\mathbb{Z}_2\mathbb{Z}_4\mathbb{Z}_8$-additive codes are subgroups of $\mathbb{Z}_2^{\alpha_1} \times \mathbb{Z}_4^{\alpha_2} \times \mathbb{Z}_8^{\alpha_3}$.  Similarly, the $\Z_{2^s}$-additive codes of length $n$ are subgroups of $\Z_{2^s}^n$. A $\mathbb{Z}_2\mathbb{Z}_4\mathbb{Z}_8$-linear or $\Z_{2^s}$-linear Hadamard code is a Hadamard code which is the Gray map image of a $\mathbb{Z}_2\mathbb{Z}_4\mathbb{Z}_8$-additive or $\Z_{2^s}$-additive code, respectively. The kernel of a binary code containing the zero word partitions the set of binary coordinates into blocks, two coordinates lying in the same block when every kernel word takes the same value in both of them, and the
\emph{kernel-block rank profile} of the code is the multiset of the dimensions of its linear span punctured on those blocks.  For the family $H^{t_1,t_2,t_3}$ of $\mathbb{Z}_2\mathbb{Z}_4\mathbb{Z}_8$-linear Hadamard codes, this invariant is known explicitly and gives a complete classification of the family. In this paper, we compute it for the $\mathbb{Z}_2\mathbb{Z}_4$-linear and $\mathbb{Z}_{2^s}$-linear Hadamard families with which those codes are compared, and we prove that it is constant for every nonlinear $\mathbb{Z}_2\mathbb{Z}_4$-linear Hadamard code and for every nonlinear $\mathbb{Z}_{2^s}$-linear Hadamard code $\bar H^{a_1,\dots,a_s}$ with $s\geq2$.  The second statement is obtained without any rank formula, by exhibiting coordinate permutations that preserve the code and act transitively on its kernel blocks, which makes the argument uniform in $s$.  We also prove a descent theorem: if $\bar H^{a_1,\dots,a_s}$ is nonlinear and $a_1\geq2$, then the code punctured on one of its kernel blocks is the $\mathbb{Z}_{2^{s-1}}$-linear Hadamard code $\bar H^{a_1,\dots,a_{s-1}}$. Consequently, the constant local rank equals $\rank(\bar H^{a_1,\dots,a_{s-1}})$ and is at least $t-\kappa+2$, where $2^t$ is the length of the code and $\kappa$ the dimension of the kernel. These results separate every nonlinear $\mathbb{Z}_2\mathbb{Z}_4\mathbb{Z}_8$-linear Hadamard code from every $\mathbb{Z}_4$-linear, $\mathbb{Z}_2\mathbb{Z}_4$-linear and $\mathbb{Z}_{2^s}$-linear Hadamard code of the same length, for every $s\geq2$, except for the single infinite family $H^{1,1,t-4}$ and $\bar H^{2,0,t-5}$ with $t\geq5$. The members of this infinite family agree in the rank, the dimension of the kernel, and the kernel-block rank profile, but a two-block refinement of the profile separates them.
\end{abstract}

\medskip
\noindent\textbf{Keywords:} Hadamard code, Gray map, $\Z_2\Z_4\Z_8$-linear
code, $\Z_{2^s}$-linear code, rank, kernel, kernel-block rank profile,
classification.

\smallskip
\noindent\textbf{2020 Mathematics Subject Classification:} 94B25, 94B60.

\section{Introduction}\label{sec:intro}

Let $\Z_{2^s}$ be the ring of integers modulo $2^s$ with $s\geq1$.  The set of
$n$-tuples over $\Z_{2^s}$ is denoted by $\Z_{2^s}^n$, and in this paper the
elements of $\Z_{2^s}^n$ are also called vectors.  A code over $\Z_2$ of
length $N$ is a nonempty subset of $\Z_2^N$, and it is linear if it is a
subspace of $\Z_2^N$.  Similarly, a nonempty subset of $\Z_{2^s}^n$ is a
$\Z_{2^s}$-additive code if it is a subgroup of the additive group of
$\Z_{2^s}^n$.  A $\Z_2\Z_4\Z_8$-additive code is a subgroup of
$\Z_2^{\alpha_1}\times\Z_4^{\alpha_2}\times\Z_8^{\alpha_3}$.  Note that a
$\Z_2\Z_4\Z_8$-additive code is a linear code over $\Z_2$ when
$\alpha_2=\alpha_3=0$, a $\Z_4$-additive or $\Z_8$-additive code when
$\alpha_1=\alpha_3=0$ or $\alpha_1=\alpha_2=0$, respectively, and a
$\Z_2\Z_4$-additive code when $\alpha_3=0$.  The order of a vector
$\uu\in\Z_{2^s}^n$, denoted by $\ord(\uu)$, is the smallest positive integer $m$
such that $m\uu=(0,\dots,0)$.  Also, the order of a vector
$\uu\in\Z_2^{\alpha_1}\times\Z_4^{\alpha_2}\times\Z_8^{\alpha_3}$, denoted by
$\ord(\uu)$, is the smallest positive integer $m$ such that
$m\uu=(0,\dots,0\mid 0,\dots,0\mid 0,\dots,0)$.
Throughout the paper, vectors are written in boldface, and $\zero$ and $\one$
denote the all-zero and the all-one vector, the length being always clear from
the context.

Two binary codes $C_1$ and $C_2$ of length $N$ are said to be equivalent if
there are a vector $\uu\in\Z_2^N$ and a permutation of coordinates $\pi$ such
that $C_2=\{\uu+\pi(\xx):\xx\in C_1\}$. The Hamming weight of a vector
$\uu\in\Z_2^N$, denoted by $\wt_H(\uu)$, is the number of its nonzero
coordinates, and the Hamming distance $d_H(\uu,\vv)$ of two vectors is the
number of coordinates in which they differ; hence, $d_H(\uu,\vv)
=\wt_H(\uu-\vv)$.  The minimum distance of a code $C$ over $\Z_2$ is
$d(C)=\min\{d_H(\uu,\vv):\uu,\vv\in C,\ \uu\not=\vv\}$. 

In \cite{Sole}, a Gray map from $\Z_4$ to $\Z_2^2$ is defined as
$\phi(0)=(0,0)$, $\phi(1)=(0,1)$, $\phi(2)=(1,1)$ and $\phi(3)=(1,0)$.  There
exist different generalizations of this Gray map, which go from $\Z_{2^s}$ to
$\Z_2^{2^{s-1}}$ \cite{Carlet,Codes2k,dougherty,Nechaev,Krotov:2007}.  In this
paper, we focus on Carlet's Gray map \cite{Carlet}, from $\Z_{2^s}$ to
$\Z_2^{2^{s-1}}$, which is a particular case of the one given in
\cite{Krotov:2007,ShiKrotov2019} satisfying
$\sum\lambda_i\phi_s(2^i)=\phi_s(\sum\lambda_i2^i)$ \cite{KernelZ2s}.
Specifically,
\begin{equation}\label{eq:Carlet}
\phi_s(u)=(u_{s-1},u_{s-1},\dots,u_{s-1})+(u_0,\dots,u_{s-2})Y_{s-1},
\end{equation}
where $u\in\Z_{2^s}$; $[u_0,u_1,\dots,u_{s-1}]_2$ is the binary expansion of
$u$, that is, $u=\sum_{i=0}^{s-1}u_i2^i$ with $u_i\in\{0,1\}$; and $Y_{s-1}$
is a matrix of size $(s-1)\times2^{s-1}$ whose columns are all the vectors of
$\Z_2^{s-1}$, ordered in ascending order by reading them as binary expansions
of the elements of $\Z_{2^{s-1}}$.  Note that $\phi_1$ is the identity map and
that $\phi_2$ is the Gray map $\phi$ given above.  We define
$\Phi_s:\Z_{2^s}^n\rightarrow\Z_2^{n2^{s-1}}$ as the component-wise extended
map of $\phi_s$, and a Gray map $\Phi$ from
$\Z_2^{\alpha_1}\times\Z_4^{\alpha_2}\times\Z_8^{\alpha_3}$ to $\Z_2^N$, where
$N=\alpha_1+2\alpha_2+4\alpha_3$, by
$$\Phi(\uu_1\mid \uu_2\mid \uu_3)=(\uu_1,\Phi_2(\uu_2),\Phi_3(\uu_3)),$$ for any $\uu_i\in\Z_{2^i}^{\alpha_i}$, where $1\leq i\leq3$. 

Let $\C\subseteq\Z_{2^s}^n$ be a $\Z_{2^s}$-additive code of length $n$.  We
say that the Gray map image of $\C$, say $C=\Phi_s(\C)$, is a
$\Z_{2^s}$-linear code of length $n2^{s-1}$.  Since $\C$ is a subgroup of
$\Z_{2^s}^n$, it is isomorphic to
$\Z_{2^s}^{t_1}\times\Z_{2^{s-1}}^{t_2}\times\dots\times\Z_2^{t_s}$, and we
say that $\C$, or equivalently $C=\Phi_s(\C)$, is of type
$(n;t_1,\dots,t_s)$.  Similarly, if
$\C\subseteq\Z_2^{\alpha_1}\times\Z_4^{\alpha_2}\times\Z_8^{\alpha_3}$ is a
$\Z_2\Z_4\Z_8$-additive code, we say that its Gray map image $C=\Phi(\C)$ is a
$\Z_2\Z_4\Z_8$-linear code of length $\alpha_1+2\alpha_2+4\alpha_3$.  Since
$\C$ can be seen as a subgroup of $\Z_8^{\alpha_1+\alpha_2+\alpha_3}$, it is
isomorphic to $\Z_8^{t_1}\times\Z_4^{t_2}\times\Z_2^{t_3}$, and we say that
$\C$, or equivalently $C=\Phi(\C)$, is of type
$(\alpha_1,\alpha_2,\alpha_3;t_1,t_2,t_3)$.  Note that a $\Z_2\Z_4$-linear
code $\C$ \cite{ccsg,BookZ2Z4} can be seen as a $\Z_2\Z_4\Z_8$-linear code of
type $(\alpha_1,\alpha_2,0;0,t_2,t_3)$.  In this case, we also say that the
type of $\C$ is directly $(\alpha_1,\alpha_2;t_2,t_3)$.  Unlike linear codes
over finite fields, linear codes over rings do not have a basis, but there
exists a generator matrix for these codes having minimum number of rows.  If
$\C$ is a $\Z_2\Z_4\Z_8$-additive code of type
$(\alpha_1,\alpha_2,\alpha_3;t_1,t_2,t_3)$, then $|\C|=8^{t_1}4^{t_2}2^{t_3}$
and there exists a generator matrix with $t_1+t_2+t_3$ rows.

A binary code of length $N$, $2N$ codewords and minimum distance $N/2$ is
called a Hadamard code.  Hadamard codes can be constructed from Hadamard
matrices \cite{Key,WMcwill}.  Note that linear Hadamard codes are in fact
first order Reed--Muller codes, or equivalently, the dual of extended Hamming
codes \cite{WMcwill}.  It is also important to note that Hadamard codes are
two weight codes, which have been widely studied in \cite{ShiTwoHomWeight,TwoWeightSole}.  Most Hadamard codes are, however,
nonlinear, and their classification is still an open problem.  One fruitful
way of attacking it is to realise some of them as Gray map images of additive
codes over rings or over mixed alphabets and then to decide which of the
resulting codes are equivalent, each such result giving a partial
classification of nonlinear Hadamard codes.  The $\Z_4$ representation of the
Kerdock, Preparata and Goethals codes \cite{Sole} was the starting point of
this point of view, and codes over $\Z_{p^s}$ go back to Blake \cite{Blake}
and Shankar \cite{Shankar}.

From a more practical point of view, since Hadamard codes are optimal and have
a high correction capability, they appear in different aspects related to the
transmission of information, such as in digital communication with satellites
\cite{H07}, in CDMA phones to modulate the transmission of information and
minimize interference with other transmissions \cite{SBT98} and, in general,
in different OCDMA multiple access systems to allow access to multiple users
asynchronously and simultaneously \cite{HYT04}.  Other applications are found
in cryptography \cite{Nyb91} or in information hiding (steganography and
watermarking) \cite{YLL03}.  See \cite{H07} for more applications in other
fields.

Two structural properties of codes over $\Z_2$ are the rank and the dimension
of the kernel.  The rank of a code $C$ over $\Z_2$ is simply the dimension of
the linear span, $\langle C\rangle$, of $C$.  The kernel of a code $C$ over
$\Z_2$ is defined as $\K(C)=\{\xx\in\Z_2^N:\xx+C=C\}$ \cite{BGH83}.  If the
all-zero vector belongs to $C$, then $\K(C)$ is a linear subcode of $C$.  Note
also that if $C$ is linear, then $\K(C)=C=\langle C\rangle$.  We denote the
rank of $C$ as $\rank(C)$ and the dimension of the kernel as $\kernel(C)$.
Both are equivalence invariants, so two codes with different pairs
$(r,k)=(\rank(C),\kernel(C))$ are nonequivalent.  The converse is false in
general, and it is precisely the failure of the converse that makes
classification results difficult.  In \cite{Z2Z4Z8Profile}, a new invariant
was introduced to repair that failure for one family of Hadamard codes, and
the present paper carries the same programme through for the families with
which those codes have to be compared.

The $\Z_{2^s}$-additive codes such that after the Gray map $\Phi_s$ give
Hadamard codes are called $\Z_{2^s}$-additive Hadamard codes and the
corresponding images are called $\Z_{2^s}$-linear Hadamard codes.  Similarly,
the $\Z_2\Z_4\Z_8$-additive codes such that after the Gray map $\Phi$ give
Hadamard codes are called $\Z_2\Z_4\Z_8$-additive Hadamard codes and the
corresponding images are called $\Z_2\Z_4\Z_8$-linear Hadamard codes.  It is
known that $\Z_4$-linear Hadamard codes, that is, $\Z_2\Z_4$-linear Hadamard
codes with $\alpha_1=0$, and $\Z_2\Z_4$-linear Hadamard codes with
$\alpha_1\not=0$ can be classified by using either the rank or the dimension
of the kernel \cite{Kro:2001:Z4_Had_Perf,PRV06}.  Moreover, in \cite{KV2015},
it is shown that each $\Z_4$-linear Hadamard code is equivalent to a
$\Z_2\Z_4$-linear Hadamard code with $\alpha_1\not=0$.  Later, in
\cite{KernelZ2s,HadamardZps,EquivZ2s,ZpsEquivalance}, a recursive construction
for $\Z_{p^s}$-linear Hadamard codes, with $p$ prime, is described, the
linearity is established, and a partial classification by using the dimension
of the kernel is obtained, giving the exact amount of nonequivalent such codes
for some parameters.  In \cite{fernandez2019mathbb}, a complete classification
of $\Z_8$-linear Hadamard codes by using the rank and dimension of the kernel
is provided, giving the exact amount of nonequivalent such codes. For
$s\geq4$, however, no formula for the rank is known, and therefore that
invariant is not even available there.  Finally, a recursive construction of
$\Z_2\Z_4\Z_8$-additive Hadamard codes $\cH^{t_1,t_2,t_3}$, with all the
$\alpha_i$ nonzero, was given in \cite{Z2Z4Z8Construction}. The linearity
and kernel of the corresponding $\Z_2\Z_4\Z_8$-linear Hadamard codes were
determined in \cite{Z2Z4Z8Linearity}, and their rank was determined in
\cite{Z2Z4Z8Rank}; we write $H^{t_1,t_2,t_3}=\Phi(\cH^{t_1,t_2,t_3})$ for the corresponding binary codes. For any $t\geq2$, the full classification of $\Z_p\Z_{p^2}$-linear generalized Hadamard codes of length $p^t$, with $\alpha_1\not=0$, $\alpha_2\not=0$, and $p\geq3$ prime, is given in \cite{ZpZp2Construction,ZpZp2Classification}, by using just the
dimension of the kernel.

The invariant of \cite{Z2Z4Z8Profile} is the following.  The kernel of a binary
code $C$ with $\zero\in C$ is a linear subcode, so it can be used to compare
coordinate positions.  Two binary coordinates are declared equivalent when
every kernel word takes the same value in both of them, and the classes of the resulting partition are called the \emph{kernel blocks} of $C$.  On each block $B$, we puncture the linear span and record the \emph{local rank} $\rho_C(B)=\dim\langle C\rangle|_B$, and the multiset
$$
\cR(C)=\{\!\{\rho_C(B):\ B\text{ a kernel block of }C\}\!\}
$$
is the \emph{kernel-block rank profile} of $C$.  It is an equivalence
invariant.  In \cite{Z2Z4Z8Profile}, the profile of $H^{t_1,t_2,t_3}$ was computed:
the kernel partition has $2^{t_1+t_2+t_3-1}$ blocks, all of size
$2^{2t_1+t_2}$, and the profile takes the two values $t_2+\binom{t_1+2}{2}$
and $t_2+2+\binom{t_1+1}{2}$, whose difference is $t_1-1$.  This classified the
family internally: two codes of the family with the same length are equivalent
if and only if their types coincide.

The present paper completes the classification by comparing those codes with
the families that were already known, and our results are of three kinds.

First, we compute the profile of the two comparison families.  Proposition
\ref{prop:z2z4-profile} shows that a nonlinear $\Z_2\Z_4$-linear Hadamard code
$H^{U,V}$ has $2^{U+V-1}$ kernel blocks, all of size $2^{U}$, and constant
profile with value $U+1$.  Theorem \ref{thm:z2s-profile-constant} shows that
the profile of every nonlinear $\Z_{2^s}$-linear Hadamard code $\bar H^{a_1,\dots,a_s}$ is constant as well, for every $s\geq2$.  Its proof computes no local rank: it produces two
families of coordinate permutations that map the code onto itself, one
translating the additive coordinates and one shifting the Gray coordinates,
and shows that together they carry any kernel block onto any other one.  No
rank formula is used, which is why the argument works uniformly in $s$,
including the range $s\geq4$ where no such formula exists.

Second, we identify the constant value in a form stronger than a numerical
formula.  Theorem \ref{thm:descent} states that, when $\bar H^{a_1,\dots,a_s}$
is nonlinear and $a_1\geq2$, the code punctured on one of its kernel blocks is
again a Hadamard code of the same kind, one alphabet lower, namely
$\bar H^{a_1,\dots,a_{s-1}}$.  The common local rank is therefore
$\rank(\bar H^{a_1,\dots,a_{s-1}})$, and Corollary
\ref{cor:local-rank-bound} gives the sharp bound
$\rho_{a_1,\dots,a_s}\geq t-\kappa+2$, with equality if and only if $s=2$, or
$s=3$ and $a_1=2$, where $2^t$ is the length of the code and $\kappa$ the dimension of the
kernel.

Third, we draw the classification.  Since the profile of $H^{t_1,t_2,t_3}$ is
nonconstant precisely when $t_1\geq2$, every such code is separated at once
from every $\Z_4$-linear, $\Z_2\Z_4$-linear and $\Z_{2^s}$-linear Hadamard
code of the same length.  When $t_1=1$, the profile is constant and equal to
the minimum value $t-\kappa+2$, so only the two extremal situations of
Corollary \ref{cor:local-rank-bound} remain, and the rank settles them except
for the single infinite family
$$
H^{1,1,t-4}\quad\text{and}\quad \bar H^{2,0,t-5},\qquad t\geq5,
$$
whose members share the length, the rank, the dimension of the kernel and the
whole profile.  For this family, we introduce a two-block refinement of the
profile, the maximum $\mu(C)$ of $\dim\langle C\rangle|_{B\cup B'}$ over the
pairs of distinct blocks, and compute $\mu=6$ on one side and $\mu=7$ on the
other.  The outcome is Theorem \ref{thm:cross-family}: no nonlinear
$\Z_2\Z_4\Z_8$-linear Hadamard code is equivalent to a $\Z_4$-linear, a
$\Z_2\Z_4$-linear or a $\Z_{2^s}$-linear Hadamard code of the same length, for
any $s\geq2$.  This is exactly the statement left as further research in
\cite[\S 6]{Z2Z4Z8Linearity}, where it had been established only for
$5\leq t\leq11$ and with the help of {\sc Magma} \cite{Magma}.

The paper is organised as follows.  Section \ref{sec:preliminaries} recalls
the kernel-block rank profile and its basic properties from \cite{Z2Z4Z8Profile},
the facts about $H^{t_1,t_2,t_3}$ that we need, and the two comparison
families.  Section \ref{sec:z2z4-profile} computes the profile of the
$\Z_2\Z_4$-linear Hadamard codes, and Section \ref{sec:z2s-profile} proves
that the profile of every nonlinear $\Z_{2^s}$-linear Hadamard code is
constant.  Section \ref{sec:descent} identifies the punctured code on a kernel
block and derives the local rank together with its lower bound.  Section
\ref{sec:cross-family} contains the classification, the two-block invariant
and the comparison tables, and Section \ref{sec:conclusions} summarises the
results and indicates further research.

\section{Preliminaries}\label{sec:preliminaries}

This section collects everything that is used later. 
Subsections \ref{subsec:profile-recalled} and \ref{subsec:mixed-recalled}
recall from \cite{Z2Z4Z8Profile,Z2Z4Z8Construction,Z2Z4Z8Linearity,Z2Z4Z8Rank} the
kernel-block rank profile and the facts about the $\Z_2\Z_4\Z_8$-linear
Hadamard codes, while Subsections
\ref{subsec:z2z4-family} and \ref{subsec:z2s-family} describe the two
families with which the $\Z_2\Z_4\Z_8$-linear Hadamard family will be compared. 

Throughout the paper, a code with two superscripts is $\Z_2\Z_4$-linear, one
with three superscripts and no bar is $\Z_2\Z_4\Z_8$-linear, and one with a
bar is $\Z_{2^s}$-linear.

\subsection{Puncturing and the kernel-block rank profile}
\label{subsec:profile-recalled}

We first fix the action of a permutation on vectors, since the bookkeeping
matters below.  A permutation $\pi$ of the set $\{1,\dots,N\}$ of coordinate
positions acts on $\Z_2^N$ by
\begin{equation}\label{eq:action}
\pi(\xx)_{\pi(p)}=x_p\qquad\text{for every }p\in\{1,\dots,N\},
\end{equation}
that is, the entry of $\xx$ at the position $p$ becomes the entry of
$\pi(\xx)$ at the position $\pi(p)$. The action is additive, so
$\pi(\xx+\yy)=\pi(\xx)+\pi(\yy)$ for all $\xx,\yy\in\Z_2^N$.

We will constantly restrict codes and linear spaces to selected coordinate
positions.  For $\xx\in\Z_2^N$ and $B\subseteq\{1,\dots,N\}$, we write
$\xx|_B$ for the vector obtained from $\xx$ by deleting the coordinates whose
index lies outside $B$, and, for a set $D\subseteq\Z_2^N$, we put
$D|_B=\{\xx|_B:\xx\in D\}$. This operation is called puncturing outside $B$.
Since deleting coordinates is a $\Z_2$-linear map, $D|_B$ is a linear space whenever
$D$ is one, and
\begin{equation}\label{eq:span-puncture}
\langle D\rangle|_B=\langle D|_B\rangle
\qquad\text{for every } D\subseteq\Z_2^N .
\end{equation}

The two notions interact as follows.  Let $S$ be a set of coordinate positions, and let
$\pi|_{S}$ denote the restriction of $\pi$ to $S$, a bijection from $S$ onto
$\pi(S)$. By \eqref{eq:action}, $\pi(\xx)|_{\pi(S)}$ is
obtained from $\xx|_{S}$ by relabelling the positions through $\pi|_{S}$.
Hence, for every linear space $V\subseteq\Z_2^N$, $\pi(V)|_{\pi(S)}$ is obtained 
from $V|_{S}$ by a coordinate permutation. In particular,
\begin{equation}\label{eq:relabel}
\dim\pi(V)|_{\pi(S)}=\dim V|_{S},
\end{equation}
for every linear space  $V\subseteq\Z_2^N$ and every  $S$ .

\begin{definition}[\cite{Z2Z4Z8Profile}]\label{def:kernel-block-profile}
Let $C\subseteq\Z_2^N$ be a binary code with $\zero\in C$ and put $K=\K(C)$.
Define a relation on the set $\{1,\dots,N\}$ of binary coordinates by
\begin{equation}\label{eq:kernel-relation}
i\sim_K j\quad\Longleftrightarrow\quad
x_i=x_j\ \text{ for every }\xx=(x_1,\dots,x_N)\in K .
\end{equation}
This is an equivalence relation; its classes are called the \emph{kernel
blocks} of $C$, and the corresponding partition of $\{1,\dots,N\}$ is denoted
by $\cB(K)$.  For a block $B\in\cB(K)$, we call
$\rho_C(B)=\dim\langle C\rangle|_B$ the \emph{local rank} of $C$ on $B$, and
the multiset
\begin{equation}\label{eq:profile}
\cR(C)=\{\!\{\rho_C(B):\ B\in\cB(K)\}\!\}
\end{equation}
is called the \emph{kernel-block rank profile} of $C$.
\end{definition}

Two conventions are attached to this definition and will be used without
further comment.  First, since $K$ is a linear space, the condition in
\eqref{eq:kernel-relation} may be checked on any basis of $K$.
Second, \eqref{eq:profile} is a multiset and not a set: the blocks carry no
preferred order, but the number of blocks having each local rank is part of
the invariant, and we write $\{\!\{\rho^{[M]}\}\!\}$ when the value $\rho$
occurs with multiplicity $M$.  We say that the profile is \emph{constant} when
all its entries are equal.

The next three statements are the basic properties of the invariant, proved in
\cite{Z2Z4Z8Profile}. 

\begin{lemma}[\cite{Z2Z4Z8Profile}]\label{lem:profile-invariant}
Let $C,D\subseteq\Z_2^N$ be equivalent binary codes with $\zero\in C$ and
$\zero\in D$, say $D=\yy+\pi(C)$ for a coordinate permutation $\pi$ and a
vector $\yy\in\Z_2^N$.  Then, $\pi$ maps $\cB(\K(C))$ bijectively onto
$\cB(\K(D))$, and $\rho_D(\pi(B))=\rho_C(B)$ for every $B\in\cB(\K(C))$.  In
particular, $\cR(C)=\cR(D)$.
\end{lemma}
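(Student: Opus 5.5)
The plan is to reduce everything to two facts: how the kernel transforms under an equivalence, and the relabelling identity \eqref{eq:relabel}.

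\emph{Step 1: the kernel.} Since $\zero\in D=\yy+\pi(C)$, there is $\xx_0\in C$ with $\yy=\pi(\xx_0)$. As the action of $\pi$ is additive, $D=\pi(\xx_0+C)$. Because $\zero\in C$ and $\xx_0\in C$, we have $\xx_0\in\K(C)$ exactly when $\xx_0+C=C$. I do not want to rely on this, so I will argue directly. For $\zz\in\Z_2^N$, the condition $\zz+D=D$ is equivalent to $\pi^{-1}(\zz)+\xx_0+C=\xx_0+C$. This in turn is equivalent to $\pi^{-1}(\zz)+C=C$, since translating by $\xx_0$ is a bijection. Hence $\K(D)=\pi(\K(C))$, and this holds without any hypothesis on $\xx_0$.

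\emph{Step 2: the blocks.} By \eqref{eq:action}, $\pi(\xx)_{\pi(i)}=x_i$. Therefore $i\sim_{\K(C)}j$ holds iff every $\xx\in\K(C)$ satisfies $x_i=x_j$. This is the same as saying that every $\pi(\xx)\in\K(D)$ has equal entries at $\pi(i)$ and $\pi(j)$, that is, $\pi(i)\sim_{\K(D)}\pi(j)$. So $\pi$ maps each class of $\cB(\K(C))$ onto a class of $\cB(\K(D))$, and since $\pi$ is a bijection of positions, this is a bijection of partitions.

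\emph{Step 3: the spans.} I next show $\langle D\rangle=\pi(\langle C\rangle)$. Since $\zero\in C$, we have $\xx_0\in C\subseteq\langle C\rangle$, so $\langle \xx_0+C\rangle\subseteq\langle C\rangle$. Symmetrically, $\zero\in\xx_0+C$ gives $\xx_0\in\xx_0+C$, and then $C=\xx_0+(\xx_0+C)\subseteq\langle\xx_0+C\rangle$. Thus the two spans coincide, and applying the linear map $\pi$ yields $\langle D\rangle=\pi(\langle C\rangle)$. Now \eqref{eq:relabel} with $V=\langle C\rangle$ and $S=B$ gives $\rho_D(\pi(B))=\dim\pi(\langle C\rangle)|_{\pi(B)}=\dim\langle C\rangle|_B=\rho_C(B)$. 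Equality of the multisets $\cR(C)=\cR(D)$ follows because the block bijection preserves the local ranks.

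The only delicate point is the translation vector $\yy$. It is handled by using $\zero\in D$ to write $\yy=\pi(\xx_0)$ with $\xx_0\in C$, and by using $\zero\in C$ to ensure the span is unchanged by the translate. The rest is bookkeeping with the action convention \eqref{eq:action}.
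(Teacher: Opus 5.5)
Your proof is correct and follows essentially the intended route: you re-derive inline the two facts recorded in Lemma~\ref{lem:rank-kernel-invariance}, namely $\K(D)=\pi(\K(C))$ and $\langle D\rangle=\pi(\langle C\rangle)$, and then combine \eqref{eq:action} with \eqref{eq:relabel}, which is exactly how the paper argues for the two-block analogue in Lemma~\ref{lem:mu-invariant}. The one slip is in Step~3, where the two justifications are swapped: $\xx_0\in C$ holds by the choice of $\xx_0$ (that is, because $\zero\in D$), and $\xx_0=\xx_0+\zero\in\xx_0+C$ holds because $\zero\in C$; both facts are true, so the argument is unaffected.
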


\begin{lemma}[\cite{Z2Z4Z8Profile}]\label{lem:linear-profile}
If $C$ is a linear Hadamard code of length $2^t$, then all its kernel blocks
are singletons and $\cR(C)=\{\!\{1^{[2^t]}\}\!\}$.
\end{lemma}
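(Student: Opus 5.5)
The statement to prove is Lemma~\ref{lem:linear-profile}. Let $C$ be a linear Hadamard code of length $N=2^t$. Since $C$ is linear, $\K(C)=C=\langle C\rangle$. The plan has two parts: first show that the kernel partition $\cB(C)$ consists of singletons, and then compute the local rank on each singleton.

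\emph{Singleton blocks.} Suppose $i\neq j$ lie in the same kernel block. Then every codeword of $C$ takes the same value at positions $i$ and $j$, so $C$ is contained in the hyperplane $\{\xx: x_i+x_j=0\}$. Let $\bfe_i+\bfe_j$ be the vector with ones exactly at positions $i$ and $j$. Then $\bfe_i+\bfe_j\in C^\perp$, so $C^\perp$ has minimum distance at most $2$. On the other hand, a linear Hadamard code is the first order Reed--Muller code, so its dual is the extended Hamming code, whose minimum distance is $4$. This gives the contradiction.

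The same conclusion can be reached directly, which avoids relying on the Reed--Muller identification beyond linearity. A binary linear code containing $\zero$ with $2N$ codewords and minimum distance $N/2$ has $\dim C=t+1$. Its codewords other than $\zero$ and $\one$ all have weight $N/2$; moreover $\one\in C$, because a linear Hadamard code contains the complement of each codeword. If coordinates $i$ and $j$ agreed on all of $C$, we would puncture coordinate $j$. Since $i$ and $j$ agree on every codeword, a nonzero codeword vanishing after the puncture would also vanish at $j$, hence be zero. So the puncture keeps $2N$ distinct codewords, now of length $N-1$, and every weight drops by at most one. This gives a code with $2N$ words whose minimum distance stays close to $N/2$, contradicting the Plotkin bound. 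I would present the dual-distance argument as the main route because it is cleaner.

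\emph{Local ranks.} Since each block is a singleton $B=\{i\}$, we have $\langle C\rangle|_B\subseteq\Z_2$, so $\rho_C(B)\leq 1$. Because $\one\in C$, the punctured space is exactly $\Z_2$, and therefore $\rho_C(B)=1$. There are $2^t$ blocks, so $\cR(C)=\{\!\{1^{[2^t]}\}\!\}$.

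The only substantive step is excluding a weight-$2$ dual codeword. I expect this to be immediate from the standard fact that the linear Hadamard code is $RM(1,t)$, whose dual is the extended Hamming code of minimum distance $4$, which is already recalled in the introduction.
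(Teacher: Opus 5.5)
Your main route is correct: since $\K(C)=C$, two coordinates $i\neq j$ share a kernel block exactly when $\bfe_i+\bfe_j\in C^{\perp}$. That is impossible because $C$ is, up to a coordinate permutation (which does not change dual weights), $\mathrm{RM}(1,t)$ with dual distance $4$; and $\one\in C$ then gives local rank $1$ on each of the $2^t$ singletons. The paper only quotes this lemma without proof, and your argument is the standard one behind it (equivalently, a generator matrix of $\mathrm{RM}(1,t)$ has the pairwise distinct columns $(1,v)^T$, $v\in\Z_2^t$). For $t\leq1$ the dual distance is not $4$, but then $C=\Z_2^{2^t}$ and the statement is trivial.

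Your alternative sketch, however, fails, so do not offer it as a backup.
\begin{itemize}
\item Puncturing coordinate $j$ gives a code of length $N-1$ with $2N$ words and minimum distance $N/2-1$. Such codes exist (puncture any Hadamard code at any coordinate). Moreover $2d=N-2<N-1=n$, so the Plotkin bound does not apply, and there is no contradiction.
\item Its inputs $\one\in C$ and ``all other weights equal $N/2$'' are asserted, not proved. For a linear code, ``contains the complement of each codeword'' just restates $\one\in C$.
\end{itemize}

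If you want an argument free of the Reed--Muller identification, shorten instead of puncturing, and use the Plotkin bound $M\leq 2d/(2d-n)$ for $2d>n$.
\begin{itemize}
\item First, no coordinate vanishes on all of $C$. Otherwise deleting it would give a code of length $N-1$ with $2N$ words and distance $N/2$, whereas Plotkin allows only $N$ words.
\item Hence $C_0=\{\xx\in C: x_i=0\}$ has $N$ words.
\item If $x_i=x_j$ on all of $C$, these words also vanish at $j$. Deleting $i$ and $j$ leaves a code of length $N-2$ with $N$ words and distance $N/2$, while Plotkin allows at most $N/2$ words when $N\geq4$.
\end{itemize}
The fact that no coordinate vanishes identically also gives $\rho_C(\{i\})=1$ directly, without needing $\one\in C$.
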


\begin{lemma}[\cite{Z2Z4Z8Profile}]\label{lem:plotkin-blocks}
Let $C\subseteq\Z_2^N$ be a binary code with $\zero\in C$ and $\one\in\K(C)$,
and let
\begin{equation}\label{eq:plotkin-extension}
C^{+}=\{(\xx,\xx):\xx\in C\}\cup\{(\xx,\xx+\one):\xx\in C\}
\end{equation}
be its Plotkin extension.  Then,
\begin{equation}\label{eq:plotkin-kernel-span}
\begin{aligned}
\K(C^{+})&=\{(\xx,\xx)+\varepsilon(\zero,\one):\xx\in\K(C),\ \varepsilon\in\Z_2\},\\
\langle C^{+}\rangle&=\{(\zz,\zz)+\varepsilon(\zero,\one):
\zz\in\langle C\rangle,\ \varepsilon\in\Z_2\}.
\end{aligned}   
\end{equation}
Every kernel block $B$ of $C$ gives rise to exactly two kernel blocks $B^{0}$
and $B^{1}$ of $C^{+}$, one inside each copy, and
$$
|B^{0}|=|B^{1}|=|B|,\qquad
\rho_{C^{+}}(B^{0})=\rho_{C^{+}}(B^{1})=\rho_{C}(B).
$$
In particular, $\cR(C^{+})$ is obtained from $\cR(C)$ by doubling all the
multiplicities, and $\one\in\K(C^{+})$.
\end{lemma}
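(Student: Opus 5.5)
We prove Lemma \ref{lem:plotkin-blocks} in three stages: first the kernel and the span of $C^{+}$, then the kernel blocks, and finally the local ranks. Throughout, $C^{+}=\{(\xx,\xx+\varepsilon\one):\xx\in C,\ \varepsilon\in\Z_2\}$, and $\zero\in C^{+}$ because $\zero\in C$.

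\emph{The kernel and the span.} For the span, note that $(\zero,\one)=(\zero,\zero+\one)-(\zero,\zero)$ lies in $\langle C^{+}\rangle$. Every element of $C^{+}$ is $(\xx,\xx)+\varepsilon(\zero,\one)$, so $\langle C^{+}\rangle$ is spanned by $\{(\zz,\zz):\zz\in\langle C\rangle\}$ together with $(\zero,\one)$. This is exactly the stated set.

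For the kernel, one inclusion is direct. If $\yy\in\K(C)$, then $(\yy,\yy)+(\xx,\xx+\varepsilon\one)=(\yy+\xx,\,\yy+\xx+\varepsilon\one)\in C^{+}$. Also $(\zero,\one)$ permutes $C^{+}$, since it swaps $\varepsilon$. Hence the stated set lies in $\K(C^{+})$.

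Conversely, let $(\uu,\vv)\in\K(C^{+})$. Since $\zero\in C^{+}$, we have $(\uu,\vv)\in C^{+}$, so $\vv=\uu+\varepsilon\one$ with $\uu\in C$. After subtracting $\varepsilon(\zero,\one)$, we may assume $\vv=\uu$. Adding $(\uu,\uu)$ to $(\xx,\xx)$ gives $(\uu+\xx,\uu+\xx)\in C^{+}$. Its first half gives $\uu+\xx\in C$ for all $\xx\in C$, so $\uu\in\K(C)$. In particular $\one\in\K(C^{+})$, because $\one\in\K(C)$ gives $(\one,\one)$ in the kernel.

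\emph{The kernel blocks.} Index the coordinates of $C^{+}$ as $(i,0)$ and $(i,1)$ for $1\le i\le N$. We use the spanning set $\{(\yy,\yy):\yy\in\K(C)\}\cup\{(\zero,\one)\}$ of $\K(C^{+})$.
\begin{itemize}
\item The word $(\zero,\one)$ separates $(i,0)$ from $(j,1)$ for all $i,j$. So no kernel block meets both halves.
\item Within one half, $(i,c)\sim(j,c)$ holds if and only if $y_i=y_j$ for all $\yy\in\K(C)$, that is, $i\sim_{\K(C)}j$.
\end{itemize}
Hence the blocks of $C^{+}$ are exactly $B^{c}=\{(i,c):i\in B\}$ for $B\in\cB(\K(C))$ and $c\in\{0,1\}$, with $|B^{c}|=|B|$.

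\emph{The local ranks.} Puncturing the span on $B^{0}$ gives $\{\zz|_B:\zz\in\langle C\rangle\}=\langle C\rangle|_B$. Puncturing on $B^{1}$ gives $\{\zz|_B+\varepsilon\one|_B\}$.

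This is the one step that needs care: we must show that $\one|_B\in\langle C\rangle|_B$. It does, because $\one\in\K(C)\subseteq C$, so adding it enlarges nothing. Both local ranks therefore equal $\rho_C(B)$.

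Since each block $B$ of $C$ produces exactly two blocks $B^{0},B^{1}$ of $C^{+}$, each with local rank $\rho_C(B)$, the profile $\cR(C^{+})$ is $\cR(C)$ with every multiplicity doubled.
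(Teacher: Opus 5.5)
Your proof is correct and complete. You read off the kernel and the span by writing every word of $C^{+}$ as $(\xx,\xx)+\varepsilon(\zero,\one)$. You get the blocks by testing the relation on the spanning set $\{(\yy,\yy):\yy\in\K(C)\}\cup\{(\zero,\one)\}$. On $B^{1}$ you correctly absorb $\one|_B$ using $\one\in\K(C)\subseteq\langle C\rangle$.

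The paper gives no proof of its own here and simply imports the lemma from \cite{Z2Z4Z8Profile}. Your direct verification is the natural argument, and it matches how the paper works with \eqref{eq:plotkin-kernel-span} in the proof of Lemma \ref{lem:mu-plotkin}.
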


Finally, we recall from \cite{Z2Z4Z8Profile} the elementary lemma describing the
effect of an equivalence on the span and on the kernel, which we use when the
translation vector has to be tracked and not only the two dimensions.

\begin{lemma}[\cite{Z2Z4Z8Profile}]\label{lem:rank-kernel-invariance}
Let $C,D\subseteq\Z_2^N$ be binary codes with $\zero\in C$ and $\zero\in D$,
and suppose that $D=\yy+\pi(C)$ for some coordinate permutation $\pi$ and some
$\yy\in\Z_2^N$.  Then, $\langle D\rangle=\pi(\langle C\rangle)$ and
$\K(D)=\pi(\K(C))$.  In particular, $\rank(D)=\rank(C)$ and
$\kernel(D)=\kernel(C)$.
\end{lemma}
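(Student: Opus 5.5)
The statement to prove is Lemma~\ref{lem:rank-kernel-invariance}: if $D=\yy+\pi(C)$ with $\zero\in C$ and $\zero\in D$, then $\langle D\rangle=\pi(\langle C\rangle)$ and $\K(D)=\pi(\K(C))$. The plan is to first pin down the translation vector using the two hypotheses $\zero\in C$ and $\zero\in D$, and then push everything through the additive action \eqref{eq:action}.

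Since $\zero\in D$, there is $\xx_0\in C$ with $\yy+\pi(\xx_0)=\zero$, so $\yy=\pi(\xx_0)$ (we are in characteristic $2$). Hence
$$D=\pi(\xx_0+C),$$
so $D$ is a permutation image of the translate $\xx_0+C$. This translate contains $\zero$, because $\xx_0\in C$.

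For the span, I would show $\langle \xx_0+C\rangle=\langle C\rangle$. The inclusion $\subseteq$ holds because $\xx_0\in C$, so every $\xx_0+\xx$ with $\xx\in C$ lies in $\langle C\rangle$. Conversely, $\xx_0=\xx_0+\zero\in \xx_0+C$ because $\zero\in C$. Then any $\xx\in C$ equals $(\xx_0+\xx)+\xx_0$, a sum of two elements of $\xx_0+C$. Since $\pi$ acts as a linear bijection, it commutes with taking spans, which gives $\langle D\rangle=\pi(\langle C\rangle)$.

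For the kernel, the first step is $\K(\xx_0+C)=\K(C)$. Indeed, $\zz+(\xx_0+C)=\xx_0+C$ holds if and only if $\zz+C=C$, obtained by adding $\xx_0$ to both sides. The second step is $\K(\pi(E))=\pi(\K(E))$ for any code $E$. By additivity of the action, $\pi(\zz)+\pi(E)=\pi(\zz+E)$, and this equals $\pi(E)$ if and only if $\zz+E=E$, since $\pi$ is a bijection on $\Z_2^N$. Applying the second step with $E=\xx_0+C$ gives $\K(D)=\pi(\K(C))$.

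The dimension statements $\rank(D)=\rank(C)$ and $\kernel(D)=\kernel(C)$ follow because $\pi$ is a linear isomorphism of $\Z_2^N$. The only point needing care is the observation $\yy=\pi(\xx_0)$ with $\xx_0\in C$. Without it the translation could move the span to a coset, and the span would then be $\langle C\rangle+\langle\yy\rangle$ rather than $\langle C\rangle$.
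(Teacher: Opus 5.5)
Your proof is correct. The paper gives no proof of its own here, since the lemma is quoted from \cite{Z2Z4Z8Profile}, and your argument is the standard one: use $\zero\in D$ to write $\yy=\pi(\xx_0)$ with $\xx_0\in C$, so that $D=\pi(\xx_0+C)$, and then use $\zero\in C$ together with the linearity of the action \eqref{eq:action}.

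Two minor remarks. The kernel identity needs neither hypothesis, because $\K(\yy+E)=\K(E)$ for every code $E$ and every $\yy$. In your closing aside, with only $\zero\in C$ assumed, the span of $\yy+\pi(C)$ would be $\pi(\langle C\rangle)+\langle\yy\rangle$ rather than $\langle C\rangle+\langle\yy\rangle$.
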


\subsection{Boolean functions attached to binary coordinates}
\label{subsec:boolean}

Every rank computed in this paper is the dimension of a space of Boolean
functions, and the translation is provided by the following three lemmas from
\cite{Z2Z4Z8Rank, Z2Z4Z8Profile}.

Let $\C$ be an additive code, over $\Z_{2^s}$ or over a mixed alphabet, let
$\cM$ be its set of messages, that is, the set of tuples of coefficients of a
fixed generating set, and let $\Phi$ be the corresponding Gray map.  For a
binary coordinate $\ell$ of $\Phi(\C)$, the \emph{binary coordinate function}
$f_\ell:\cM\rightarrow\Z_2$ sends a message to the value of the corresponding
binary codeword at $\ell$.  Writing every message coefficient in binary
identifies $\cM$ with a space $\Z_2^{m}$, so each $f_\ell$ is a Boolean
function of the binary message digits.

\begin{lemma}[\cite{Z2Z4Z8Rank}]\label{lem:evaluation-principle}
Let $\Omega$ be a finite set, let $f_1,\dots,f_N:\Omega\rightarrow\Z_2$. Fix
an ordering $\omega_1,\dots,\omega_{|\Omega|}$ of $\Omega$, and let
$C_\Omega\subseteq\Z_2^N$ be the set of rows of the $|\Omega|\times N$ binary
matrix $\bigl(f_j(\omega_i)\bigr)_{i,j}$.  Then, for every
$B\subseteq\{1,\dots,N\}$,
\begin{equation}\label{eq:restricted-evaluation-rank}
\dim\langle C_\Omega\rangle|_B=\dim\Span\{f_j:\ j\in B\}.
\end{equation}
\end{lemma}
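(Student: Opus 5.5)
The plan is to identify the linear span of $C_\Omega$ with a space of Boolean functions on $\Omega$, and then read off the punctured dimension from how $\langle C_\Omega\rangle$ and $\Span\{f_j\}$ sit inside two transposed matrix spaces. Let $F$ be the $|\Omega|\times N$ binary matrix $\bigl(f_j(\omega_i)\bigr)_{i,j}$. By construction, $C_\Omega$ is the set of rows of $F$, so $\langle C_\Omega\rangle$ is the row space of $F$. The column $j$ of $F$ is exactly the value vector $\bigl(f_j(\omega_1),\dots,f_j(\omega_{|\Omega|})\bigr)^T$ of $f_j$.

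The key steps, in order, are as follows. First, I will use \eqref{eq:span-puncture} to write $\langle C_\Omega\rangle|_B=\langle C_\Omega|_B\rangle$. The set $C_\Omega|_B$ is the set of rows of the submatrix $F_B$ formed by the columns of $F$ indexed by $B$. Hence $\dim\langle C_\Omega\rangle|_B$ equals the row rank of $F_B$. Second, I will invoke the equality of row rank and column rank over $\Z_2$: this number equals the dimension of the column space of $F_B$, which is the span of the value vectors of $f_j$ for $j\in B$. Third, I will note that the map sending a function $g:\Omega\rightarrow\Z_2$ to its value vector $(g(\omega_1),\dots,g(\omega_{|\Omega|}))$ is a $\Z_2$-linear isomorphism from the space of all functions $\Omega\rightarrow\Z_2$ onto $\Z_2^{|\Omega|}$. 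It is injective because the $\omega_i$ enumerate all of $\Omega$. This map therefore carries $\Span\{f_j: j\in B\}$ isomorphically onto the column space of $F_B$, and the two dimensions agree, which gives \eqref{eq:restricted-evaluation-rank}.

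No step here is a genuine obstacle; the argument is essentially row rank equals column rank. The only point needing care is the bookkeeping around repetition. If distinct messages give equal rows, or some $f_j$ coincide as functions, $C_\Omega$ is still the set of rows and the span is unchanged, so nothing breaks. Injectivity of the evaluation map also needs every element of $\Omega$ to appear in the ordering, which the hypothesis guarantees. I will state both facts explicitly so that the lemma can later be applied with $\Omega=\cM$ and $f_\ell$ the binary coordinate functions.
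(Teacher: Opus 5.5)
Your proof is correct and complete. $\langle C_\Omega\rangle|_B$ is the row space of the column submatrix $F_B$, row rank equals column rank, and the evaluation map $g\mapsto(g(\omega_1),\dots,g(\omega_{|\Omega|}))$ is a linear isomorphism carrying $\Span\{f_j:j\in B\}$ onto the column space of $F_B$. The paper gives no proof of this lemma and simply quotes it from \cite{Z2Z4Z8Rank}; your row-rank/column-rank argument is the standard one and needs nothing added.
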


Applied with $\Omega=\cM$, this says that the local rank of $\Phi(\C)$ on a
set $B$ of binary coordinates is the dimension of the span of the coordinate
functions attached to the positions of $B$.

\begin{lemma}[{\cite{Z2Z4Z8Rank}}]\label{lem:gray-digits}
Let $\Omega$ be a finite set and let $u:\Omega\rightarrow\Z_{2^s}$ be any map.
For $0\leq i\leq s-1$, let $u_i:\Omega\rightarrow\Z_2$ send $\omega$ to the
$i$th binary digit of $u(\omega)$, and, for $1\leq\ell\leq 2^{s-1}$, let
$g_\ell:\Omega\rightarrow\Z_2$ send $\omega$ to the $\ell$th entry of
$\phi_s(u(\omega))$.  Then,
$$
\Span\{g_1,\dots,g_{2^{s-1}}\}=\Span\{u_0,u_1,\dots,u_{s-1}\}.
$$
\end{lemma}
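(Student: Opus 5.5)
The plan is to compute each coordinate function $g_\ell$ explicitly from Carlet's formula \eqref{eq:Carlet} and prove the two inclusions separately. Everything holds pointwise on $\Omega$, so no property of $u$ is needed beyond its being a map. Fix $\ell\in\{1,\dots,2^{s-1}\}$ and let $(y_{0,\ell},\dots,y_{s-2,\ell})^{T}$ be the $\ell$th column of $Y_{s-1}$. By the ordering convention, this column is the binary expansion of $\ell-1$. Applying \eqref{eq:Carlet} to $u(\omega)$ and reading off the $\ell$th entry gives, for every $\omega\in\Omega$,
$$
g_\ell(\omega)=u_{s-1}(\omega)+\sum_{i=0}^{s-2}y_{i,\ell}\,u_i(\omega),
$$
with arithmetic in $\Z_2$. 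Thus, as functions on $\Omega$, $g_\ell=u_{s-1}+\sum_{i=0}^{s-2}y_{i,\ell}u_i$.

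This identity immediately gives $\Span\{g_1,\dots,g_{2^{s-1}}\}\subseteq\Span\{u_0,\dots,u_{s-1}\}$.

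For the reverse inclusion, I will pick out specific columns of $Y_{s-1}$.
\begin{itemize}
\item The first column ($\ell=1$) is the zero vector, so $g_1=u_{s-1}$.
\item For $0\leq i\leq s-2$, the column with index $\ell=2^i+1$ is the $i$th unit vector, so $g_{2^i+1}=u_{s-1}+u_i$. Hence $u_i=g_1+g_{2^i+1}$.
\end{itemize}
Therefore every digit function $u_0,\dots,u_{s-1}$ lies in the span of the $g_\ell$, and equality follows.

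The case $s=1$ should be checked separately. There $\phi_1$ is the identity, so $g_1=u_0$ and the statement is trivial; the formula above also covers it, since $Y_0$ is empty. The only real point of care is the indexing: columns are numbered from $1$ while the expansion encodes $\ell-1$. This is what guarantees that both the zero column and every unit-vector column actually occur. I do not expect any genuine obstacle.
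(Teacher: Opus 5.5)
Your proof is correct and complete. Reading off the $\ell$th entry of \eqref{eq:Carlet} gives $g_\ell=u_{s-1}+\sum_{i=0}^{s-2}y_{i,\ell}u_i$, which is the coordinate formula \eqref{eq:z2s-coordinate-value} that the paper uses later, with the column indexed by $h\in\Z_2^{s-1}$. This gives one inclusion, and the zero column together with the $s-1$ unit columns gives the other, through $g_1=u_{s-1}$ and $u_i=g_1+g_{2^i+1}$.

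The paper does not prove this lemma itself; it quotes it from \cite{Z2Z4Z8Rank}. So there is no in-paper argument to compare with, and your direct verification is the natural one.

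One small overstatement is your closing remark about indexing. The zero and unit columns occur simply because $Y_{s-1}$ has every vector of $\Z_2^{s-1}$ as a column. The argument therefore does not depend on the ordering or indexing conventions; only the particular values of $\ell$ at which those columns sit do.
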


\begin{lemma}[\cite{Carlet,HadamardZps}]\label{lem:gray-shift}
Let $s\geq1$ and let $\lambda,w\in\Z_{2^s}$.  Then,
$\phi_s(2^{s-1}\lambda+w)=\phi_s(2^{s-1}\lambda)+\phi_s(w)$.
\end{lemma}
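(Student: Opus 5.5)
The plan is to verify the identity directly from Carlet's formula \eqref{eq:Carlet}, splitting on the parity of $\lambda$. The key observation is that adding $2^{s-1}$ modulo $2^s$ only changes the top binary digit and leaves the lower digits untouched.

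First I reduce $2^{s-1}\lambda$ modulo $2^s$. Write $\lambda=2q+\lambda_0$ with $\lambda_0\in\{0,1\}$. Then $2^{s-1}\lambda\equiv 2^{s-1}\lambda_0 \pmod{2^s}$, so $2^{s-1}\lambda$ is either $0$ or $2^{s-1}$ in $\Z_{2^s}$.

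If $\lambda_0=0$, the claim reads $\phi_s(w)=\phi_s(0)+\phi_s(w)$. This holds because $\phi_s(0)=\zero$: all binary digits of $0$ vanish, so both summands in \eqref{eq:Carlet} are zero.

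If $\lambda_0=1$, let $[w_0,\dots,w_{s-1}]_2$ be the binary expansion of $w$. Adding $2^{s-1}$ modulo $2^s$ flips only the digit $w_{s-1}$; a possible carry out of position $s-1$ is discarded modulo $2^s$. Hence $2^{s-1}+w$ has binary expansion $[w_0,\dots,w_{s-2},w_{s-1}+1]_2$, with the last digit computed in $\Z_2$. Substituting into \eqref{eq:Carlet} gives
$$\phi_s(2^{s-1}+w)=(w_{s-1}+1,\dots,w_{s-1}+1)+(w_0,\dots,w_{s-2})Y_{s-1}=\one+\phi_s(w).$$
The same formula applied to $2^{s-1}$, whose lower digits are all zero and whose top digit is $1$, gives $\phi_s(2^{s-1})=\one$. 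Combining the two displays yields $\phi_s(2^{s-1}\lambda+w)=\phi_s(2^{s-1}\lambda)+\phi_s(w)$.

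The only step needing care is the degenerate case $s=1$. There $Y_0$ is an empty matrix and $\phi_1$ is the identity on $\Z_2$, and the statement reduces to additivity of the identity map, which is immediate. Beyond checking that this case is consistent with the convention for $Y_0$, I do not expect any real obstacle.
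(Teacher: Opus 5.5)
Your proof is correct. Splitting on the parity of $\lambda$ and using that adding $2^{s-1}$ modulo $2^s$ flips only the top digit $w_{s-1}$ gives $\phi_s(2^{s-1}+w)=\one+\phi_s(w)$ and $\phi_s(2^{s-1})=\one$ directly from \eqref{eq:Carlet}, and the case $s=1$ is covered as well (the general argument already works with $Y_0$ empty). The paper gives no proof of its own and only cites \cite{Carlet,HadamardZps}; your digit-level check is the standard argument and the same reasoning the paper uses in the proof of Lemma \ref{lem:theta}, where it reads off from \eqref{eq:Carlet} that $\phi_s(2^{s-1}\varepsilon)$ is $\one$ or $\zero$.
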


\begin{lemma}[\cite{Z2Z4Z8Profile}]\label{lem:mobius}
Let $\Omega$ be a finite set, let $\theta=(\theta_1,\dots,\theta_n)$ be free
binary parameters and suppose that, for every $\theta\in\Z_2^n$, a function
$G_\theta:\Omega\rightarrow\Z_2$ is given by
\begin{equation}\label{eq:parameter-polynomial}
G_\theta=\sum_{I\subseteq\{1,\dots,n\}}\Bigl(\prod_{i\in I}\theta_i\Bigr)f_I,
\end{equation}
where the functions $f_I:\Omega\rightarrow\Z_2$ do not depend on $\theta$.
Then, $\Span\{G_\theta:\theta\in\Z_2^n\}=\Span\{f_I:I\subseteq\{1,\dots,n\}\}$.
\end{lemma}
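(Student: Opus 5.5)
The statement to prove is Lemma~\ref{lem:mobius}: the span of all specialisations $G_\theta$ coincides with the span of the coefficient functions $f_I$. I would prove the two inclusions separately.

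\emph{First inclusion.} For a fixed $\theta\in\Z_2^n$, each product $\prod_{i\in I}\theta_i$ is just a scalar in $\Z_2$. Hence \eqref{eq:parameter-polynomial} expresses $G_\theta$ directly as a $\Z_2$-linear combination of the $f_I$. This gives $\Span\{G_\theta\}\subseteq\Span\{f_I\}$.

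\emph{Reverse inclusion.} Here I would invert the relation by Möbius inversion on the Boolean lattice. Identify $\theta$ with its support $J=\{i:\theta_i=1\}$, and write $G_J$ for the corresponding function. Then $\prod_{i\in I}\theta_i=1$ exactly when $I\subseteq J$, so
$$G_J=\sum_{I\subseteq J}f_I.$$
Over $\Z_2$, Möbius inversion on subsets then yields
$$f_I=\sum_{J\subseteq I}(-1)^{|I\setminus J|}G_J=\sum_{J\subseteq I}G_J.$$

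To verify this formula I would substitute the expression for $G_J$:
$$\sum_{J\subseteq I}G_J=\sum_{I'\subseteq I}f_{I'}\cdot\#\{J:I'\subseteq J\subseteq I\}.$$
The count equals $2^{|I\setminus I'|}$. Modulo $2$ this is $1$ if $I'=I$ and $0$ otherwise. So the sum is exactly $f_I$, which shows that each $f_I$ lies in $\Span\{G_\theta\}$.

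Combining the two inclusions gives equality. There is no real obstacle. The only point needing care is the parity count of the intervals $[I',I]$ in the Boolean lattice, and it is immediate that $2^k$ is odd only when $k=0$.
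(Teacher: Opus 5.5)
Your proof is correct and essentially the standard argument for this lemma. The forward inclusion is immediate. For the reverse inclusion, the explicit inversion $f_I=\sum_{J\subseteq I}G_J$ over $\Z_2$ settles it, since the interval count $2^{|I\setminus I'|}$ is odd only when $I'=I$.

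The paper itself gives no proof and only quotes the lemma from \cite{Z2Z4Z8Profile}. Its label suggests the same M\"obius-inversion route you took.
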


\begin{lemma}[\cite{Z2Z4Z8Rank}]\label{lem:carries}
Let $z_1,\dots,z_n\in\Z_2$ and let $w=z_1+\cdots+z_n$ be their ordinary
integer sum.  Then, for every $r\geq0$, the $r$th binary digit of $w$ equals
$\sym_{2^r}(z_1,\dots,z_n)$, where $\sym_j$ denotes the $j$th elementary
symmetric Boolean polynomial.  In particular, the bottom digit, the carry to
the next digit and the direct carry to the third digit are $\sym_1$, $\sym_2$
and $\sym_4$, respectively.
\end{lemma}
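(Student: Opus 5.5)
The plan is to reduce the statement to a congruence for binomial coefficients, namely that $\binom{w}{2^r}\bmod 2$ is the $r$th binary digit of $w$, and to prove that congruence with a generating-function identity over $\Z_2$.

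First, evaluate the symmetric polynomial at the given point. By definition, $\sym_j(z_1,\dots,z_n)=\sum_{|J|=j}\prod_{i\in J}z_i$, computed in $\Z_2$. A product $\prod_{i\in J}z_i$ equals $1$ exactly when $J$ is contained in the support $\{i: z_i=1\}$, which has $w$ elements. So $\sym_j(z_1,\dots,z_n)$ is the number of $j$-subsets of a $w$-set, taken modulo $2$:
$$
\sym_j(z_1,\dots,z_n)\equiv\binom{w}{j}\pmod 2 .
$$
It therefore suffices to show that $\binom{w}{2^r}\equiv w_r\pmod 2$, where $w=\sum_i w_i2^i$ is the binary expansion of $w$.

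Second, work in the polynomial ring $\Z_2[x]$. The Frobenius identity $(a+b)^2=a^2+b^2$ gives $(1+x)^{2^i}=1+x^{2^i}$ by induction on $i$. Hence
$$
(1+x)^w=\prod_{i}(1+x)^{w_i2^i}=\prod_{i}\bigl(1+x^{2^i}\bigr)^{w_i}.
$$
On the left, the coefficient of $x^{2^r}$ is $\binom{w}{2^r}$ reduced modulo $2$. On the right, expanding the product gives the monomials $x^{\sum_{i\in S}2^i}$, where $S$ ranges over subsets of $\{i: w_i=1\}$. By uniqueness of binary expansions, $\sum_{i\in S}2^i=2^r$ forces $S=\{r\}$, and this $S$ is available exactly when $w_r=1$. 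So the coefficient of $x^{2^r}$ on the right is $w_r$, which proves the congruence. This is Lucas's theorem in the special case $j=2^r$, which could be cited instead.

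Finally, the named special cases follow by taking $r=0,1,2$. The bottom digit is $\sym_1$, the digit of weight $2$ (the carry into the next position) is $\sym_2$, and the digit of weight $4$ (the direct carry to the third digit) is $\sym_4$. I expect no real obstacle. The only points needing care are that $\sym_j$ is computed in $\Z_2$ while $w$ is the integer sum, and that the uniqueness-of-binary-expansion step is stated explicitly.
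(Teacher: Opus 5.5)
The paper gives no proof of this lemma: it is quoted from \cite{Z2Z4Z8Rank} and only applied, in the proof of Proposition~\ref{prop:z2z4-profile} and in Lemma~\ref{lem:mu-base}. There is therefore no in-text argument to compare yours with.

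Your proof is correct and complete as it stands.
\begin{itemize}
\item Evaluating $\sym_j$ at a $0/1$ point counts the $j$-subsets of the support, which gives $\binom{w}{j}\bmod 2$.
\item The identity $(1+x)^w=\prod_i(1+x^{2^i})^{w_i}$ in $\Z_2[x]$, together with uniqueness of binary expansions, then yields $\binom{w}{2^r}\equiv w_r\pmod 2$.
\item This holds uniformly in $r$, including the range $2^r>n$, where both sides vanish.
\end{itemize}

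Your reading of the two carries as the digits $w_1$ and $w_2$ of the total sum matches how the paper uses the lemma. For example, it takes the carry of $\sum_j d_j\lambda'_j$ to be $\sym_2(d_R)$.
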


Every Boolean function has a unique algebraic normal form, so distinct
square-free monomials in independent binary variables are linearly
independent; this fact justifies every dimension count below.

\subsection{The $\Z_2\Z_4\Z_8$-linear Hadamard codes and their profile}
\label{subsec:mixed-recalled}

Let $\two$ and $\four$ denote the vectors having the element $2$, respectively
$4$, repeated in every coordinate.  For integers $t_1\geq1$, $t_2\geq0$ and
$t_3\geq1$, the matrix $A^{t_1,t_2,t_3}$ of \cite{Z2Z4Z8Construction} has
$t_1$ rows of order $8$, $t_2$ rows of order $4$ and $t_3$ rows of order $2$,
the first of which is always $\ww_0=(\one\mid\two\mid\four)$.  We write
$\cH^{t_1,t_2,t_3}$ for the additive code it generates and
$H^{t_1,t_2,t_3}=\Phi(\cH^{t_1,t_2,t_3})$ for its Gray image, which is a
binary Hadamard code of length $2^t$ with
\begin{equation}\label{eq:length-relation}
t+1=3t_1+2t_2+t_3 .
\end{equation}
When $t_3=1$, the additive coordinates are described explicitly in
\cite[Proposition 2.3]{Z2Z4Z8Rank}: after deleting from every column of $A^{t_1,t_2,1}$ its first entry, the $\Z_2$ part consists of all the vectors of $\Z_2^{t_1+t_2}$, the $\Z_4$ part of all the normalized primitive vectors of $\Z_4^{t_1+t_2}$, and the $\Z_8$ part of all the vectors $(q_1,\dots,q_{t_1},2r_1,\dots,2r_{t_2})$
with $(q_1,\dots,q_{t_1})$ normalized primitive in $\Z_8^{t_1}$ and
$(r_1,\dots,r_{t_2})\in\Z_4^{t_2}$ arbitrary, each one occurring exactly once.
Here, a vector over $\Z_{2^s}$ is \emph{primitive} when at least one of its
entries is odd, and \emph{normalized primitive} when moreover its first odd
entry equals $1$.  For $t_3>1$, the matrix is obtained from $A^{t_1,t_2,1}$ by
applying $t_3-1$ times the duplication that appends a row of order $2$.

\begin{theorem}[\cite{Z2Z4Z8Construction,Z2Z4Z8Linearity,Z2Z4Z8Rank}]
\label{thm:recalled-mixed}
Let $t_1\geq1$, $t_2\geq0$ and $t_3\geq1$.  Then, $H^{t_1,t_2,t_3}$ is linear
if and only if $(t_1,t_2)=(1,0)$. For the nonlinear ones,
$\K(H^{t_1,t_2,t_3})$ is the Gray image of the subcode of the elements of
order at most two; hence, a basis of it is
$\{\Phi(\tfrac{\ord(\ww)}{2}\ww)\}$ with $\ww$ running over the rows of
$A^{t_1,t_2,t_3}$, and
$
\kernel\bigl(H^{t_1,t_2,t_3}\bigr)=t_1+t_2+t_3 .
$
Moreover, for every admissible type,
\begin{equation}\label{eq:rank-binomial}
\rank(H^{t_1,t_2,t_3})=t_3-1+4t_1+4\binom{t_1}{2}+2\binom{t_1}{3}
+\binom{t_1}{4}+t_2\binom{t_1+2}{2}+\binom{t_2+1}{2}.
\end{equation}
\end{theorem}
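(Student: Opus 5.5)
The statement combines three results from \cite{Z2Z4Z8Construction,Z2Z4Z8Linearity,Z2Z4Z8Rank}: that $H^{t_1,t_2,t_3}$ is Hadamard, the description of its linearity and kernel, and the rank formula \eqref{eq:rank-binomial}. I would first reduce to the case $t_3=1$ and then handle linearity, kernel and rank separately, using the explicit description of the columns of $A^{t_1,t_2,1}$.

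\emph{Reduction to $t_3=1$.} The duplication step that appends a row of order $2$ should produce, after a coordinate permutation, the Plotkin extension $C^{+}$ of \eqref{eq:plotkin-extension}. The appended row has Gray image $(\zero,\one)$, and $\one=\Phi(\ww_0\cdot\text{order-}2\text{ multiple})\in\K(C)$. By Lemma~\ref{lem:plotkin-blocks}, $\langle C^{+}\rangle$ and $\K(C^{+})$ each gain exactly one dimension, and $C^{+}$ is linear if and only if $C$ is. This matches the summand $t_3-1$ in \eqref{eq:rank-binomial} and the summand $t_3$ in the kernel dimension. The new kernel generator is the Gray image of the new order-$2$ row, so the description of the kernel by order-$\le 2$ elements also propagates. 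It therefore suffices to treat $A^{t_1,t_2,1}$.

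\emph{Kernel and linearity for $t_3=1$.} Let $\uu\in\cH^{t_1,t_2,1}$ have order at most $2$. Its $\Z_4$ entries lie in $\{0,2\}$ and its $\Z_8$ entries lie in $\{0,4\}$. Lemma~\ref{lem:gray-shift}, applied coordinatewise with $s=2,3$, gives $\Phi(\uu+\vv)=\Phi(\uu)+\Phi(\vv)$ for every codeword $\vv$, so $\Phi(\uu)\in\K$. This subcode has $2^{t_1+t_2+1}$ elements, and the basis $\{\Phi(\tfrac{\ord(\ww)}{2}\ww)\}$ follows.

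For the reverse inclusion, and hence for nonlinearity when $(t_1,t_2)\neq(1,0)$, I would take a codeword $\uu$ of order $4$ or $8$. I would find a row $\vv$ and a column $c$ such that $\phi(u_c)+\phi(v_c)\neq\phi(u_c+v_c+\text{order-2 correction})$. I would then show, using the list of columns (all normalized primitive vectors occur), that no single codeword can absorb these discrepancies simultaneously in all columns. Equivalently, $\Phi(\uu)+\Phi(\vv)$ would have to be at distance $2^{t-1}$ from all codewords but one, and the explicit columns should rule this out. If $(t_1,t_2)=(1,0)$, the code is generated by $\ww_0$ and order-$2$ rows only. The Gray images are then affine in the message bits, so the code is linear.

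\emph{Rank.} Here I would apply Lemma~\ref{lem:evaluation-principle} with messages $\mu=(\mu_1,\dots)$ written in binary digits. On a $\Z_8$ column $c$, the coordinate value is determined by $\sum_i\mu_ic_i\in\Z_8$. By Lemma~\ref{lem:gray-digits}, the span of the four coordinate functions equals the span of the three binary digits of this sum, and the analogous statements hold for $\Z_4$ and $\Z_2$ columns. I would write each product $\mu_ic_i$ in binary and add the bits as integers. Lemma~\ref{lem:carries} then expresses digit $2$ through $\sym_1,\sym_2,\sym_4$ of the bit terms, which are products of message bits and column bits.

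Next I would view the column entries as free parameters and apply Lemma~\ref{lem:mobius}, separately for each of the three column types. The $\Z_8$ part is the product of normalized primitive vectors in $\Z_8^{t_1}$ with arbitrary $2r\in 2\Z_4^{t_2}$. The normalization and primitivity constraints must be handled by splitting according to the position of the first odd entry. This yields the span as a space of explicit square-free monomials in the message bits. The count should be as follows: monomials of degree up to $4$ among the $t_1$ order-$8$ generators give $4t_1+4\binom{t_1}{2}+2\binom{t_1}{3}+\binom{t_1}{4}$; mixed monomials between order-$4$ and order-$8$ digits give $t_2\binom{t_1+2}{2}$; quadratic order-$4$ terms give $\binom{t_2+1}{2}$; and one further dimension comes from $\ww_0$.

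The main obstacle is this last step. One must carry out the expansion of the top $\Z_8$ digit, including the double carry $\sym_4$, and verify that the primitivity restriction on the columns does not kill any monomial. Otherwise the Möbius argument only gives an upper bound. My first attempt would be to verify independence by exhibiting, for each monomial, a column subfamily on which it survives.
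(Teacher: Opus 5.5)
The paper does not prove this theorem; it recalls it from \cite{Z2Z4Z8Construction,Z2Z4Z8Linearity,Z2Z4Z8Rank}, so your proposal has to stand on its own. Two parts of it are correct:
\begin{itemize}
\item the reduction to $t_3=1$ through Lemma~\ref{lem:plotkin-blocks};
\item the inclusion of the Gray image of the order-$\leq 2$ subcode in $\K(H)$, via Lemma~\ref{lem:gray-shift}.
\end{itemize}

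\textbf{The kernel statement.} The real content here is the reverse inclusion: no codeword of order $4$ or $8$ lies in $\K(H^{t_1,t_2,1})$ when $(t_1,t_2)\neq(1,0)$. You do not prove this. ``The explicit columns should rule this out'' is not an argument, and your column-discrepancy test never uses $(t_1,t_2)\neq(1,0)$, where the conclusion fails.

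Your justification of linearity in that case also misidentifies the generators. The row $\ww_0=(\one\mid\two\mid\four)$ has order $2$, and $A^{1,0,1}$ also contains the order-$8$ row $\ww_1=(0,1\mid1\mid1)$. The code is linear because $\ww_1$ has all entries in $\{0,1\}$, so every binary digit of $\epsilon\ww_0+x\ww_1$ is linear in $\epsilon$ and in the digits of $x$.

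In that linear code, discrepancies of your kind occur and are absorbed by other codewords. For instance, $\Phi(\ww_1)+\Phi(3\ww_1)=\Phi(2\ww_1)$, while $\ww_1+3\ww_1=4\ww_1$ and $2\ww_1-4\ww_1$ has order $4$. Likewise, in the $\Z_{2^s}$ family, codewords of order $2^s$ do lie in the kernel (Proposition~\ref{prop:z2s-kernel-basis}).

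The useful reformulation is this. Carlet's map is $\Z_2$-linear in the binary digits, so $\Phi(\uu)+\Phi(\vv)=\Phi(\uu\oplus\vv)$, with $\oplus$ the digitwise sum. Hence $\Phi(\uu)\in\K(H)$ if and only if $\uu\oplus\vv\in\cH$ for every $\vv\in\cH$. What is missing is, for every $\uu$ of order at least $4$, a $\vv$ with $\uu\oplus\vv\notin\cH$. Any such argument must use the presence of at least two generators of order at least $4$. This is the substance of \cite{Z2Z4Z8Linearity}.

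\textbf{The rank.} Splitting the columns by the position of their first odd entry does make the column digits free parameters, so Lemma~\ref{lem:mobius} gives the exact span on each class. The gap is the next step: you expect a space of square-free monomials and count them, and that is false.

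For $H^{2,0,1}$, the coordinate functions span exactly twelve functions: $\epsilon$, $a_1,a_2,b_1,b_2,c_1,c_2$, $a_1a_2$, $a_1b_2$, $a_2b_1$, $a_2b_2$, and $b_1b_2+a_1a_2(b_1+b_2)$. The last one comes from the top digit $\epsilon+c_1+c_2+b_1b_2+a_1a_2(b_1+b_2)$ of $4\epsilon+x_1+x_2$. This agrees with \eqref{eq:rank-binomial}, yet neither $a_1b_1$ nor $b_1b_2$ lies in the span. So independence cannot be checked monomial by monomial.

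Your attribution of the terms is also inconsistent. For $t_3=1$ the summand $t_3-1$ vanishes, and $\epsilon$ is already one of the $4t_1$ functions (for $(t_1,t_2)=(1,0)$ they are $\epsilon,a_1,b_1,c_1$). Adding a separate dimension for $\ww_0$ therefore overshoots by one; the count has been read off the formula rather than derived.

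What is needed is the explicit list of coefficient functions for every column class, including the $\sym_2$ and $\sym_4$ carry terms, together with the dimension of their sum. That is what \cite{Z2Z4Z8Rank} supplies. Once it is done, nonlinearity for $(t_1,t_2)\neq(1,0)$ follows immediately, because the rank then exceeds $3t_1+2t_2+t_3=t+1$.
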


For the computations of Section \ref{sec:cross-family}, we also need the
notation of \cite{Z2Z4Z8Profile} for the message digits and the resulting
description of the local spaces.  Fix $t_3=1$, write $\ww_1,\dots,\ww_{t_1}$
for the rows of order $8$ and $\vv_1,\dots,\vv_{t_2}$ for the rows of order
$4$ of $A^{t_1,t_2,1}$.  Then, every additive codeword is
$\epsilon\ww_0+\sum_ix_i\ww_i+\sum_jy_j\vv_j$ with $\epsilon\in\Z_2$,
$x_i\in\Z_8$ and $y_j\in\Z_4$. We expand the coefficients in binary as
\begin{equation}\label{eq:message-digits}
x_i=a_i+2b_i+4c_i,\qquad y_j=d_j+2e_j .
\end{equation}
The digits $a_i,b_i,c_i,d_j,e_j$ and $\epsilon$ are then treated as
independent binary variables.  For $S\subseteq\{1,\dots,t_1\}$ and
$R\subseteq\{1,\dots,t_2\}$, we abbreviate
\begin{equation}\label{eq:symmetric-notation}
\alpha_S=\sum_{i\in S}a_i,\quad \beta_S=\sum_{i\in S}b_i,\quad
\gamma_S=\sum_{i\in S}c_i,\quad \delta_R=\sum_{j\in R}d_j,\quad
\eta_R=\sum_{j\in R}e_j,
\end{equation}
and we write $\sym_2(a_S)$, $\sym_2(b_S)$, $\sym_2(d_R)$ and $\sym_4(a_S)$ for
the elementary symmetric polynomials of the corresponding sublists; for
instance, $\sym_2(a_S)=\sum_{\{i,k\}\subseteq S}a_ia_k$ and
$\sym_4(a_S)=\sum_{\{i,j,k,l\}\subseteq S}a_ia_ja_ka_l$.

The kernel blocks of $H^{t_1,t_2,1}$ are indexed in \cite{Z2Z4Z8Profile} by the
\emph{kernel labels}: the block $B_{u,v}$, with $(u,v)\in\Z_2^{t_1}\times
\Z_2^{t_2}$, consists of the binary coordinates at which the kernel generators
$\Phi(4\ww_i)$ and $\Phi(2\vv_j)$ take the values $u_i$ and $v_j$.  The next
proposition collects the outcome of that analysis; it is the input we need
here.

\begin{proposition}[\cite{Z2Z4Z8Profile}]\label{prop:recalled-local-bases}
Let $H^{t_1,t_2,1}$ be nonlinear.  Then, every $B_{u,v}$ is nonempty, the
kernel partition has $2^{t_1+t_2}$ blocks, all of size $2^{2t_1+t_2}$, and the
punctured span on a block is spanned by the following functions, which are
moreover linearly independent.
\begin{enumerate}[label=\textup{(\roman*)}]
\item On $B_{\zero,\zero}$:
\begin{equation}\label{eq:B00-basis}
\epsilon,\quad a_i\ (1\leq i\leq t_1),\quad d_j\ (1\leq j\leq t_2),\quad
b_i\ (1\leq i\leq t_1),\quad a_ia_k\ (1\leq i<k\leq t_1).
\end{equation}
\item On $B_{\zero,v}$ with $v\not=\zero$, writing $R=\{j:v_j=1\}$,
$F_v=\epsilon+\eta_R+\sym_2(d_R)$ and $g_i=b_i+a_i\delta_R$:
\begin{equation}\label{eq:B0v-basis}
a_i\ (1\leq i\leq t_1),\quad d_j\ (1\leq j\leq t_2),\quad F_v,\quad
g_i\ (1\leq i\leq t_1),\quad a_ia_k\ (1\leq i<k\leq t_1).
\end{equation}
\item On $B_{u,v}$ with $u\not=\zero$, writing $S=\{i:u_i=1\}$,
$R=\{j:v_j=1\}$, $i_0=\min S$, $\Psi_i=b_i+a_iQ_0$ and
\begin{equation}\label{eq:Q0F0}
\begin{aligned}
Q_0&=\beta_S+\delta_R+\sym_2(a_S),\\
F_0&=\epsilon+\gamma_S+\eta_R+\sym_2(b_S)+\sym_2(d_R)+\sym_4(a_S)
    +\beta_S\delta_R+\bigl(\beta_S+\delta_R\bigr)\sym_2(a_S):
\end{aligned}
\end{equation}
\begin{equation}\label{eq:Buv-basis}
a_i\ (1\leq i\leq t_1),\quad d_j\ (1\leq j\leq t_2),\quad Q_0,\quad F_0,
\quad \Psi_i\ (i\not=i_0),\quad a_ia_k\ (i,k\not=i_0,\ i<k).
\end{equation}
\end{enumerate}
\end{proposition}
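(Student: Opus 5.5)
The plan is to work entirely on the level of binary coordinate functions. By Lemma \ref{lem:evaluation-principle}, with $\Omega=\cM$ the message space in the digits $\epsilon,a_i,b_i,c_i,d_j,e_j$ of \eqref{eq:message-digits}, the local rank on a block $B$ is $\dim\Span\{f_\ell:\ell\in B\}$. Everything therefore reduces to three tasks: describe, for each column of $A^{t_1,t_2,1}$ and each Gray position, which kernel label $(u,v)$ it carries; write down the coordinate functions; and span and count them.

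\textbf{Step 1: the blocks.} We use the explicit column description of \cite[Proposition 2.3]{Z2Z4Z8Rank}. For each column $c$ we compute the entries of the kernel generators $4\ww_i$ and $2\vv_j$ at $c$, and then their Gray images. For a $\Z_8$ coordinate, $4\ww_i$ has entry $4q_i$, whose image $\phi_3(4q_i)$ is constant, equal to $q_i\bmod 2$. The generator $2\vv_j$ has entry $4r_j$, so it contributes $r_j\bmod 2$. For $\Z_4$ and $\Z_2$ coordinates, the images of $2(\cdot)$ reduce to parities of the appropriate entries. Thus the label of a coordinate is an explicit parity vector of its column. Since $\one\in\K$ via $\ww_0$, and the remaining kernel generators are exactly the $\Phi(4\ww_i)$, $\Phi(2\vv_j)$, these labels determine the blocks. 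Counting the columns with a prescribed parity pattern, multiplied by the number of Gray positions, gives $2^{2t_1+t_2}$ positions for every $(u,v)$. This shows that all blocks are nonempty and of equal size, and hence that there are $2^{t_1+t_2}$ of them.

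\textbf{Step 2: the coordinate functions.} For a column $c$, the additive entry is $\epsilon c_0+\sum_i x_ic_i+\sum_j y_jc'_j$. By Lemma \ref{lem:gray-digits}, within one additive coordinate the Gray positions span the same space as the binary digits of this entry. We expand these digits with Lemma \ref{lem:carries}: the carries $\sym_2$ and $\sym_4$ of the odd contributions produce exactly the quadratic and quartic terms $\sym_2(a_S)$, $\sym_4(a_S)$, $\beta_S\delta_R$, and so on. Within a block $B_{u,v}$, the columns are parametrized as follows. The odd-part indicators are fixed by $(u,v)$, through $S$ and $R$. The remaining free binary parameters $\theta$ are the higher digits of $q_i$, the bits of $r_j$, and the Gray position. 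Each digit function is then a polynomial in $\theta$ with $\theta$-free coefficients. Lemma \ref{lem:mobius} shows that the span over the block equals the span of these coefficient functions. Reading them off in each of the three cases should produce the lists \eqref{eq:B00-basis}, \eqref{eq:B0v-basis} and \eqref{eq:Buv-basis}, possibly after an invertible triangular change of generators: for instance, replacing $b_i$ by $\Psi_i=b_i+a_iQ_0$, or absorbing $\sym_2(d_R)$ into $F_v$.

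\textbf{Step 3: independence and the main obstacle.} Linear independence follows from uniqueness of the algebraic normal form. Each listed function has a distinct leading monomial that does not occur in the others: $\epsilon$ appears only in $F_0$ or $F_v$, $c_i$ or $e_j$ only there, $b_i$ only in $\Psi_i$ or $g_i$, and $a_ia_k$ are distinct quadratics. The step I expect to be hardest is case (iii). There, normalization of $q$ forces $q_{i_0}$ to be odd with first odd entry $1$, which removes one free parameter. This is why $\Psi_{i_0}$ and the products $a_{i_0}a_k$ disappear and are replaced by $Q_0$. Tracking the third digit, with its mixed carries $\sym_4(a_S)$, $(\beta_S+\delta_R)\sym_2(a_S)$ and $\beta_S\delta_R$, to obtain exactly $F_0$ will require a careful carry computation. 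I would first verify it for $|S|\le 2$ and then extend it symmetrically.
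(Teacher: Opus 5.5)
The paper gives no proof of this proposition; it is quoted from \cite{Z2Z4Z8Profile}. The closest in-paper model is the proof of Proposition~\ref{prop:z2z4-profile}. Your toolkit (Lemmas~\ref{lem:evaluation-principle}, \ref{lem:gray-digits}, \ref{lem:carries}, \ref{lem:mobius}) is the right one. For the blocks with $u\neq\zero$ your parametrization is also correct: these blocks lie entirely in the $\Z_8$ part, $S$ and $R$ are fixed, and the free digits of the $q_i$, the $r_j$ and the Gray index form a full cube.

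\textbf{The gap: Step~2 fails for the blocks with $u=\zero$.} Since $4\ww_i$ vanishes on every $\Z_2$ and $\Z_4$ coordinate, the label of a $\Z_4$ column $(2,s_1,\dots,s_{t_1},s'_1,\dots,s'_{t_2})^T$ is $(\zero,\,s'\bmod 2)$. It records nothing about the parities of $s_1,\dots,s_{t_1}$, so your claim that ``the odd-part indicators are fixed by $(u,v)$'' is false for these blocks.
\begin{itemize}
\item $B_{\zero,v}$ contains $\Z_4$ columns with every odd set $R'\subseteq\{1,\dots,t_1\}$ among the first $t_1$ entries (only nonempty $R'$ when $v=\zero$).
\item $B_{\zero,\zero}$ also contains the whole $\Z_2$ part. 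This affects your Step~1 count too, since that block mixes positions of Gray multiplicity $1$ and $2$.
\item The high digit killed by normalization is that of $s_{\min R'}$, or of $s'_{\min R}$ when $R'=\emptyset$. So the block is not a product family, and Lemma~\ref{lem:mobius} cannot be applied to it in one stroke.
\end{itemize}

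This is not cosmetic: it is exactly where the generators of (i) and (ii) come from. Fix $R'$. The bottom digit is $\alpha_{R'}+\delta_R$. The top digit has parameter-free part $\epsilon+\beta_{R'}+\eta_R+\sym_2(a_{R'})+\sym_2(d_R)+\alpha_{R'}\delta_R$, and the coefficients of its free high digits are the $a_i$ and $d_j$. One such sub-family therefore spans only $t_1+t_2+1$ dimensions, short of $\rho_0=t_2+\binom{t_1+2}{2}$ by $\binom{t_1+1}{2}$. The generators $g_i=b_i+a_i\delta_R$ and $a_ia_k$ appear only after you add the spans over different $R'$:
\begin{itemize}
\item $R'=\emptyset$ gives $F_v$ when $v\neq\zero$. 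For $v=\zero$ the $\Z_2$ part supplies $\epsilon$, $a_i$, $d_j$ instead, so one may put $F_\zero=\epsilon$ and $g_i=b_i$.
\item $R'=\{i\}$ gives $F_v+g_i$.
\item $R'=\{i,k\}$ gives $F_v+g_i+g_k+a_ia_k$.
\item Larger $R'$ add nothing new.
\end{itemize}
So you must split each $u=\zero$ block by $R'$ and into its $\Z_2$ and $\Z_4$ pieces, apply Lemma~\ref{lem:mobius} on each piece, and sum. As written, Step~2 does not produce \eqref{eq:B00-basis} or \eqref{eq:B0v-basis}.

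\textbf{Smaller corrections in case (iii).}
\begin{itemize}
\item Normalization $q_{i_0}=1$ fixes two digits of $q_{i_0}$, not one. The missing $a_{i_0}$ is recovered from $W_0=\alpha_S$, and $Q_0$ is the coefficient of the Gray bit multiplying $W_1$.
\item The carry computation needs no preliminary case $|S|\le2$. Write the entry as $A+2B+4C$ with $A=\sum_{i\in S}a_i$. Its top digit is $\sym_4(a_S)+C+\sym_2$ of the bit $\sym_2(a_S)$ together with the bits making up $B$. Its parameter-free part is exactly $F_0$, and its coefficient of the second digit of $q_i$ is $\Psi_i$.
\item Your independence argument misstates the private monomials. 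For $i\in S$ the monomial $b_i$ also occurs in $Q_0$, and $a_ia_k$ with $i,k\in S\setminus\{i_0\}$ occurs in both $Q_0$ and $\Psi_i$. Use $\epsilon$, $b_{i_0}$ and $a_ib_{i_0}$ as the private monomials of $F_0$, $Q_0$ and $\Psi_i$. After those are eliminated, the remaining generators are distinct monomials.
\end{itemize}
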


\begin{theorem}[\cite{Z2Z4Z8Profile}]\label{thm:profile}
Let $H^{t_1,t_2,t_3}$ be nonlinear, that is, $(t_1,t_2)\not=(1,0)$, and put
\begin{equation}\label{eq:rho-def}
\rho_0=t_2+\binom{t_1+2}{2},
\qquad
\rho_1=t_2+2+\binom{t_1+1}{2}.
\end{equation}
Then, the kernel partition of $H^{t_1,t_2,t_3}$ has exactly
$2^{t_1+t_2+t_3-1}$ blocks, all of size $2^{2t_1+t_2}$, and
\begin{equation}\label{eq:profile-formula}
\cR\bigl(H^{t_1,t_2,t_3}\bigr)
=\bigl\{\!\bigl\{\ \rho_0^{\,[2^{t_2+t_3-1}]},\
\rho_1^{\,[(2^{t_1}-1)2^{t_2+t_3-1}]}\ \bigr\}\!\bigr\}.
\end{equation}
Moreover, $\rho_0-\rho_1=t_1-1$, so the profile is constant if and only if
$t_1=1$, in which case
\begin{equation}\label{eq:profile-t1-one}
\cR\bigl(H^{1,t_2,t_3}\bigr)
=\bigl\{\!\bigl\{(t_2+3)^{\,[2^{t_2+t_3}]}\bigr\}\!\bigr\}.
\end{equation}
\end{theorem}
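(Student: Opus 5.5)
The plan is to reduce to the case $t_3=1$ and then read off the block structure and local ranks from Proposition \ref{prop:recalled-local-bases}. First, I will show that for $t_3>1$ the code $H^{t_1,t_2,t_3}$ is, up to a coordinate permutation, the Plotkin extension of $H^{t_1,t_2,t_3-1}$.

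The duplication that appends a row of order $2$ to $A^{t_1,t_2,t_3-1}$ doubles every column block. Its Gray image is therefore $\{(\xx,\xx)\}\cup\{(\xx,\xx+\one)\}$ after regrouping coordinates. Here $\one=\Phi(\ww_0\cdot\ord(\ww_0)/2)=\Phi(4\ww_0)$, since $\Phi(\one\mid\two\mid\four)$ is the all-one vector. By Theorem \ref{thm:recalled-mixed}, this vector lies in the kernel whenever the code is nonlinear.

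Lemma \ref{lem:plotkin-blocks}, together with Lemma \ref{lem:profile-invariant} for the regrouping permutation, then shows the following. Each step $t_3-1\to t_3$ doubles the number of blocks, keeps their sizes, and doubles every multiplicity in the profile. Hence it suffices to prove that $H^{t_1,t_2,1}$ has $2^{t_1+t_2}$ blocks of size $2^{2t_1+t_2}$, with profile $\{\!\{\rho_0^{[2^{t_2}]},\rho_1^{[(2^{t_1}-1)2^{t_2}]}\}\!\}$. Multiplying by $2^{t_3-1}$ then gives \eqref{eq:profile-formula}.

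For $t_3=1$, the block count and sizes are given directly by Proposition \ref{prop:recalled-local-bases}. The local ranks come from counting the linearly independent basis functions listed there, with Lemma \ref{lem:evaluation-principle} converting these counts into $\dim\langle C\rangle|_B$.
\begin{itemize}
\item On $B_{\zero,v}$, for both $v=\zero$ and $v\neq\zero$, the count is $1+t_1+t_2+t_1+\binom{t_1}{2}=t_2+\binom{t_1+2}{2}=\rho_0$. There are $2^{t_2}$ such blocks.
\item On $B_{u,v}$ with $u\neq\zero$, the count is $t_1+t_2+2+(t_1-1)+\binom{t_1-1}{2}$. Since $t_1+(t_1-1)+\binom{t_1-1}{2}=\binom{t_1+1}{2}$, this equals $t_2+2+\binom{t_1+1}{2}=\rho_1$. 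There are $(2^{t_1}-1)2^{t_2}$ such blocks.
\end{itemize}
Then $\rho_0-\rho_1=\binom{t_1+2}{2}-\binom{t_1+1}{2}-2=t_1+1-2=t_1-1$, and the profile is constant exactly when $t_1=1$. In that case both values equal $t_2+3$, with total multiplicity $2^{t_2+t_3}$, which gives \eqref{eq:profile-t1-one}.

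The main obstacle is the reduction step. I must verify carefully that the duplication operation of \cite{Z2Z4Z8Construction} really yields the Plotkin extension up to a permutation, including the $\Z_4$ and $\Z_8$ coordinates, where adding $2$ or $4$ must Gray-map to complementation. This follows from Lemma \ref{lem:gray-shift} with $\lambda=1$: on those coordinates, adding $2^{s-1}$ adds $\phi_s(2^{s-1})=\one$. I also need $\one\in\K(H^{t_1,t_2,t_3-1})$, which holds by the kernel description in Theorem \ref{thm:recalled-mixed} as long as the smaller code is nonlinear. Nonlinearity depends only on $(t_1,t_2)$, so this is automatic.
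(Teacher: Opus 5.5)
Your proposal is correct and takes the same route the paper uses for the analogous Proposition~\ref{prop:z2z4-profile} (the statement itself is only quoted here from \cite{Z2Z4Z8Profile}): read off the case $t_3=1$ from Proposition~\ref{prop:recalled-local-bases}, then apply Lemma~\ref{lem:plotkin-blocks} $t_3-1$ times, since each duplication is a Plotkin extension up to a permutation of coordinates. One slip is cosmetic: $\ww_0$ has order $2$, so $\tfrac{\ord(\ww_0)}{2}\ww_0=\ww_0$ while $4\ww_0=\zero$; the kernel vector you actually need, and use, is $\Phi(\ww_0)=\one$.
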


Finally, we recall the list of the pairs, one member of which lies in the
$\Z_2\Z_4\Z_8$-linear Hadamard family and the other in one of the two comparison families, that the rank and the dimension of the kernel fail to separate.

\begin{proposition}[\cite{Z2Z4Z8Rank}]\label{prop:recalled-cross-collisions}
Let $H^{t_1,t_2,t_3}$ be nonlinear and of length $2^t$.
\begin{enumerate}[label=\textup{(\roman*)}]
\item If a $\Z_4$-linear or a $\Z_2\Z_4$-linear Hadamard code of the same
length has the same rank and the same dimension of the kernel, then
$(t_1,t_2)=(2,1)$ and that code is permutation equivalent to $H^{5,t-9}$, with
$t\geq10$; conversely, $H^{2,1,t-7}$ and $H^{5,t-9}$ do share all three
parameters.
\item If a nonlinear $\Z_8$-linear Hadamard code $\bar H$ of the same length
has the same rank and the same dimension of the kernel, then the pair
$(H^{t_1,t_2,t_3},\bar H)$ is one of the following five, and each of the five
does share all three parameters:
\begin{equation}\label{eq:z8-five}
\begin{array}{llll}
\bigl(H^{1,1,t-4},\ \bar H^{2,0,t-5}\bigr), & t\geq5, &
(r,k)=(t+3,\,t-2),\\[1mm]
\bigl(H^{2,1,t-7},\ \bar H^{1,5,t-12}\bigr), & t\geq12, &
(r,k)=(t+11,\,t-4),\\[1mm]
\bigl(H^{1,5,t-12},\ \bar H^{4,0,t-11}\bigr), & t\geq13, &
(r,k)=(t+21,\,t-6),\\[1mm]
\bigl(H^{2,6,t-17},\ \bar H^{4,3,t-17}\bigr), & t\geq18, &
(r,k)=(t+51,\,t-9),\\[1mm]
\bigl(H^{4,5,t-21},\ \bar H^{6,2,t-21}\bigr), & t\geq22, &
(r,k)=(t+117,\,t-12).
\end{array}
\end{equation}
\end{enumerate}
\end{proposition}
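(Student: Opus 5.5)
The plan is to reduce everything to two integers that do not depend on the number of order-two generators. Then the comparison becomes a finite Diophantine problem in the remaining type parameters. For $H^{t_1,t_2,t_3}$, Theorem~\ref{thm:recalled-mixed} together with \eqref{eq:length-relation} gives
$$
t+1-\kernel(H^{t_1,t_2,t_3})=2t_1+t_2,
$$
and $\rank(H^{t_1,t_2,t_3})-t$ is, by \eqref{eq:rank-binomial}, a polynomial $P(t_1,t_2)$ with no $t_3$-dependence. The $t_3-1$ term cancels against $t=3t_1+2t_2+t_3-1$, leaving
$$
P(t_1,t_2)=t_1+4\binom{t_1}{2}+2\binom{t_1}{3}+\binom{t_1}{4}+t_2\binom{t_1+2}{2}+\binom{t_2+1}{2}-2t_2 .
$$
I would do the same for the two comparison families.

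For a nonlinear $\Z_2\Z_4$-linear Hadamard code $H^{\gamma,\delta}$ (also covering the $\Z_4$-linear ones), I would use the known formulas of \cite{PRV06,Kro:2001:Z4_Had_Perf}. These give $t+1=\gamma+2\delta$, $\kernel=\gamma+\delta+1$ and $\rank=\gamma+2\delta+\binom{\delta-1}{2}$, so that $t-\kernel$ and $\rank-t$ depend only on $\delta$. For the nonlinear $\Z_8$-linear codes $\bar H^{a_1,a_2,a_3}$, I would take the rank and kernel formulas of \cite{fernandez2019mathbb,KernelZ2s}. Again $t-\kernel$ and $\rank-t$ are functions of $(a_1,a_2)$ only, since each increase of $a_3$ raises $t$, $\rank$ and $\kernel$ by one.

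Equality of $(t,\rank,\kernel)$ is then equivalent to two equations: first, equality of the ``kernel deficiency'' $2t_1+t_2$ with the corresponding linear form in $\delta$ or in $(a_1,a_2)$; second, equality of $P(t_1,t_2)$ with the corresponding rank excess. In case (i) the first equation fixes $\delta$ as a linear function of $2t_1+t_2$. Substituting into the second, one gets a comparison between a quadratic in $\delta$ and a polynomial of degree four in $t_1$. For $t_1\ge3$ the quartic term dominates, so only small $t_1$ remain. A direct check of those, with $t_2$ eliminated via the first equation, should leave only $(t_1,t_2)=(2,1)$ with $\delta=5$. The length condition then gives $\gamma=t-9$, hence the code $H^{5,t-9}$.

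For the permutation-equivalence claim in (i), I would combine two facts: $\Z_4$-linear Hadamard codes are equivalent to $\Z_2\Z_4$-linear ones with $\alpha_1\ne0$ \cite{KV2015}, and within that family the kernel dimension already determines the code up to equivalence \cite{PRV06}. The converse directions in (i) and (ii) are direct substitutions into the formulas, and the bounds $t\ge10$ and $t\ge5,12,13,18,22$ come from requiring the $\Z_2$-part parameters $t_3\ge1$ and $a_3\ge0$ (resp.\ $\gamma\ge0$) to be admissible.

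The main obstacle is case (ii). There both sides are genuinely two-parameter: the $\Z_8$ rank formula is itself quadratic-to-cubic in $(a_1,a_2)$, so a single dominance argument does not close the search. I would first solve the kernel equation for $a_2$ (or $a_1$) in terms of $2t_1+t_2$ and the other $\Z_8$ parameter. Substituting into the rank equation should give, for each fixed $t_1$, an equation in two variables that is quadratic in $t_2$. I would then use growth estimates in $t_1$ to bound $t_1\le4$ or so, and also bound the free $\Z_8$ parameter via the discriminant. This reduces the problem to a finite enumeration that produces exactly the five pairs in \eqref{eq:z8-five}. The real risk is getting the bounding inequalities tight enough for that enumeration to be genuinely finite and small.
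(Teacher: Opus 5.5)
Your reduction to the two quantities $D=t+1-\kernel$ and $\rank-t$, which do not depend on the order-two parameters $t_3$, $V$ and $c$, is the right framework. The paper only quotes this result from \cite{Z2Z4Z8Rank}, so there is no proof here to compare with. Taken on its own terms, both parts of your proposal have a genuine problem.

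\textbf{Part (i).} The formulas you quote, $\kernel=\gamma+\delta+1$ and $\rank=\gamma+2\delta+\binom{\delta-1}{2}$, are those of the $\Z_4$-linear codes. For the $\Z_2\Z_4$-linear family with $\alpha_1\neq0$, \eqref{eq:z2z4-invariants} gives $t+1=2U+V$, $\kernel=U+V$ and $\rank=V+2U+\binom{U}{2}$, with $V\geq1$.
\begin{itemize}
\item With your own formulas the kernel equation gives $\delta-1=2t_1+t_2=5$. Hence $\delta=6$ and $\gamma=t-11$, a $\Z_4$-linear code that exists only for $t\geq11$. Your $\delta=5$, $\gamma=t-9$ does not follow.
\item This loses $t=10$. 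There $H^{5,1}$ has $V=1$, is equivalent to no $\Z_4$-linear code, and still shares all three parameters with $H^{2,1,3}$.
\item The bound $t\geq10$ comes from $V=t-9\geq1$. It does not come from $t_3\geq1$ (which only gives $t\geq8$) or from $\gamma\geq0$.
\end{itemize}
The Diophantine core does survive. Both parametrizations give $\rank-t=\binom{D}{2}+1$, whereas $\rank(H^{t_1,t_2,t_3})-t=\binom{D}{2}+t_2\binom{t_1}{2}+2\binom{t_1}{3}+\binom{t_1}{4}$. So the condition is $t_2\binom{t_1}{2}+2\binom{t_1}{3}+\binom{t_1}{4}=1$, whose only solution is $(t_1,t_2)=(2,1)$.

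\textbf{Part (ii).} The decisive step is missing, and the route you sketch would not supply it. For $a\geq2$ one has $\sigma=1$ and $D=2a+b-1$. Once the kernels agree, the $\binom{D}{2}$ terms cancel and the rank condition becomes
\[
t_2\binom{t_1}{2}+2\binom{t_1}{3}+\binom{t_1}{4}=\frac{a(a-3)}{2}\,b+\frac{(a+1)(a-2)(a+3)(a-4)}{24},\qquad t_2=2a+b-1-2t_1 .
\]
This is linear in $b$, not quadratic in $t_2$, so no discriminant enters. Its coefficient $\binom{t_1}{2}-\frac{a(a-3)}{2}$ equals $-(a-3)$ at $t_1=a-2$ and $1$ at $t_1=a-1$. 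Since the leading terms balance along $t_1\approx a-1$, no growth estimate in $t_1$ alone can bound $t_1$; one must bound $a$. This is done in two regimes.
\begin{itemize}
\item Write the equation as $\bigl(\binom{t_1}{2}-\frac{a(a-3)}{2}\bigr)t_2=R(t_1,a)$. For $t_1\geq a-1$, $R$ is decreasing in $t_1$ and $R(a-1,a)=-\frac{a^2-a-4}{2}<0$ for $a\geq3$. This forces $t_2<0$.
\item Write it as $\bigl(\binom{t_1}{2}-\frac{a(a-3)}{2}\bigr)b=Q(t_1,a)$. For $t_1\leq a-2$, $Q$ is decreasing in $t_1$ and $Q(a-2,a)=\frac{(a-2)(a^2-10a+15)}{6}>0$ once $a\geq9$. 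Since the coefficient is negative there, this forces $b<0$.
\end{itemize}
Only $2\leq a\leq8$ remains. That is a finite check, and it yields exactly the four pairs with $a\in\{2,4,6\}$.

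The pair $(H^{2,1,t-7},\bar H^{1,5,t-12})$ comes from $a=1$, which you never separate. There $\sigma=2$ and $D=b$. By Theorem \ref{thm:recalled-equivalence}, $\bar H^{1,b,c}$ is equivalent to $\bar H^{b+1,c+1}$, so again $\rank-t=\binom{D}{2}+1$ and the case reduces to part (i).
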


\subsection{The $\Z_2\Z_4$-linear Hadamard codes}\label{subsec:z2z4-family}

Deleting the $\Z_8$ part specialises the construction of Subsection
\ref{subsec:mixed-recalled} to the recursive construction of the
$\Z_2\Z_4$-additive Hadamard codes of type $(\alpha_1,\alpha_2;U,V)$ with
$\alpha_1\not=0$, $\alpha_2\not=0$, $U\geq 1$ and $V\geq 1$ given in \cite{PRV06,RSV08}. For the
general theory of $\Z_2\Z_4$-additive codes, we refer to \cite{ccsg,BookZ2Z4}.
Deleting that part merges the two extension steps that add a row of order $8$
and a row of order $4$ into a single step adding a row of order $4$, and the
remaining step becomes the duplication that adds a row of order $2$. Indeed, we start from 
\begin{equation*}
A^{1,1}=\left(\begin{array}{cc|c}
1&1&2\\
0&1&1
\end{array}\right)
\end{equation*} and then apply the two constructions below. 
If $A^{\ell-1,1}=(A_1\mid A_2)$ with $\ell\geq2$ is already been constructed, we form
\begin{equation}\label{eq:z2z4-construction-order4}
A^{\ell,1}=\left(\begin{array}{cc|ccccc}
A_1&A_1&M_1&A_2&A_2&A_2&A_2\\
\zero&\one&\one&\zero&\one&\two&\mathbf{3}
\end{array}\right),
\end{equation}
where $M_1=2A_1$ up to a permutation of columns. The construction \eqref{eq:z2z4-construction-order4} is applied 
until $\ell=U$. If $A^{U,\ell-1}=(A_1\mid A_2)$ with $\ell\geq2$ is already constructed, we form
\begin{equation}\label{eq:z2z4-construction-order2}
A^{U,\ell}=\left(\begin{array}{cc|cc}
A_1&A_1&A_2&A_2\\
\zero&\one&\zero&\two
\end{array}\right),
\end{equation} and the construction \eqref{eq:z2z4-construction-order2} is applied until
$\ell=V$. Thus, in this way, we obtain $A^{U, V}$.

Every $\Z_2\Z_4$-additive Hadamard code
with $\alpha_1\not=0$ and $\alpha_2\not=0$ is permutation equivalent to one of
the codes so obtained \cite{PRV06,RSV08}, so it is enough to work with these.
We write $\cH^{U,V}$ for the code generated by $A^{U,V}$ and
$H^{U,V}=\Phi(\cH^{U,V})$ for its Gray image, whose binary length is $2^t$
with
\begin{equation}\label{eq:z2z4-length}
t+1=2U+V .
\end{equation}
The description of the additive coordinates specialises as well: after
deleting the first entry of every column of $A^{U,1}$, the $\Z_2$ part
consists of all the vectors of $\Z_2^{U}$ and the $\Z_4$ part of all the
normalized primitive vectors of $\Z_4^{U}$, and for $V>1$, the matrix is
obtained by applying the duplication exactly $V-1$ times.  By \cite{PRV06},
the code $H^{U,V}$ is linear if and only if $U=1$ and, for the nonlinear ones,
\begin{equation}\label{eq:z2z4-invariants}
\kernel\bigl(H^{U,V}\bigr)=U+V,
\qquad
\rank\bigl(H^{U,V}\bigr)=V+2U+\binom{U}{2}.
\end{equation}
Also by \cite{PRV06}, the kernel of a nonlinear $H^{U,V}$ is the Gray map
image of the subcode of the elements of order at most two, exactly as in
Theorem \ref{thm:recalled-mixed}.

Finally, by \cite{KV2015}, every $\Z_4$-linear Hadamard code is equivalent to
a $\Z_2\Z_4$-linear Hadamard code with $\alpha_1\not=0$ and $\alpha_2\not=0$.
In all the comparisons below it therefore suffices to consider the family
$H^{U,V}$ with $U\geq2$ and $V\geq1$, the $\Z_4$-linear Hadamard codes being
automatically covered.  We write the parameters of a $\Z_2\Z_4$-linear
Hadamard code as $U$ and $V$, and never as $t_2$ and $t_3$, in order to keep
them apart from the parameters of the $\Z_2\Z_4\Z_8$-linear Hadamard codes.

\subsection{The $\Z_{2^s}$-linear Hadamard codes}\label{subsec:z2s-family}

We recall the construction of \cite{KernelZ2s}.  Let $s\geq2$ and put
\begin{equation}\label{eq:Ti}
T_i=\bigl\{j\cdot2^{i-1}:\ 0\leq j\leq 2^{s-i+1}-1\bigr\}
=2^{i-1}\Z_{2^s},\qquad 1\leq i\leq s.
\end{equation}
Thus, $T_1=\Z_{2^s}$, and each $T_i$ is an ideal of $\Z_{2^s}$ whose elements
have order at most $2^{s-i+1}$.  Let $a_1,\dots,a_s$ be nonnegative integers
with $a_1\geq1$, put $m=a_1+\cdots+a_s$, and let $\bar A^{a_1,\dots,a_s}$ be
the matrix with $m$ rows whose columns are exactly all the vectors $\zz^T$
with
\begin{equation}\label{eq:z2s-columns}
\zz\in\{1\}\times T_1^{a_1-1}\times T_2^{a_2}\times\cdots\times T_s^{a_s}.
\end{equation}
We write $\ww_1,\dots,\ww_m$ for the rows, and we call \emph{level} of
$\ww_i$ the index $j$ such that $\ww_i$ is one of the $a_j$ rows whose entries
range over $T_j$ in \eqref{eq:z2s-columns}. Thus, the first $a_1$ rows have
level $1$, the next $a_2$ rows have level $2$, and so on.  A row $\ww_i$ of
level $j$ has $\ord(\ww_i)=2^{\,s-j+1}$, because its entries run over the
whole of $T_j$ and, in particular, include $2^{\,j-1}$, whose order in
$\Z_{2^s}$ is $2^{\,s-j+1}$.  By construction
$\ww_1=\one$, so the additive code $\bar\cH^{a_1,\dots,a_s}$ generated by
$\bar A^{a_1,\dots,a_s}$ contains every constant word $\lambda\one$ with
$\lambda\in\Z_{2^s}$.  We put
$\bar H^{a_1,\dots,a_s}=\Phi_s(\bar\cH^{a_1,\dots,a_s})$.  Every
$\Z_{2^s}$-additive Hadamard code is permutation equivalent to one of the
codes $\bar\cH^{a_1,\dots,a_s}$ \cite{KernelZ2s}, so it is enough to work
with these.
By \cite{KernelZ2s}, $\bar H^{a_1,\dots,a_s}$ is a binary Hadamard
code of length $2^t$ with
\begin{equation}\label{eq:z2s-length}
t+1=\sum_{i=1}^{s}(s-i+1)a_i.
\end{equation}
Define
\begin{equation}\label{eq:sigma}
\sigma=\begin{cases}
1,&\text{if }a_1\geq2,\\
s,&\text{if }a_1=1\text{ and }a_2=\cdots=a_s=0,\\
\min\{i:\ a_i>0,\ 2\leq i\leq s\},&\text{otherwise.}
\end{cases}
\end{equation}
The code $\bar H^{a_1,\dots,a_s}$ is linear if and only if
$(a_1,\dots,a_s)$ is either $(1,0,\dots,0,a_s)$ or $(1,0,\dots,0,1,a_s)$ when
$s\geq3$, by \cite{KernelZ2s}, and if and only if $a_1\leq2$ when
$s=2$, by \cite{Kro:2001:Z4_Had_Perf}. Moreover, by
\cite[Theorem 3]{KernelZ2s}, the nonlinear ones satisfy
\begin{equation}\label{eq:z2s-kernel}
\kernel\bigl(\bar H^{a_1,\dots,a_s}\bigr)=\sigma+\sum_{i=1}^{s}a_i .
\end{equation}

\begin{proposition}[{\cite{KernelZ2s}}]
\label{prop:z2s-kernel-basis}
Let $\bar H^{a_1,\dots,a_s}$ be nonlinear.  Then, a basis of
$\K(\bar H^{a_1,\dots,a_s})$ is
\begin{equation}\label{eq:z2s-kernel-basis}
\Phi_s\Bigl(\tfrac{\ord(\ww_i)}{2}\ww_i\Bigr)\ (1\leq i\leq m),
\qquad
\Phi_s\bigl(2^{\nu}\one\bigr)\ (0\leq\nu\leq\sigma-2),
\qquad
\Phi_s\bigl((2^{s-1}-1)\one\bigr),
\end{equation}
the middle family being empty when $\sigma=1$.
\end{proposition}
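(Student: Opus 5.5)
The plan is to verify three facts about the listed vectors: each lies in $\K(\bar H^{a_1,\dots,a_s})$; they are linearly independent over $\Z_2$; and there are $\sigma+m$ of them, which is $\kernel(\bar H^{a_1,\dots,a_s})$ by \eqref{eq:z2s-kernel}. The count is immediate: $m$ vectors in the first family, $\sigma-1$ in the middle one and one more at the end. So the work lies in membership and independence.

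For the first family, $\tfrac{\ord(\ww_i)}{2}\ww_i$ lies in $2^{s-1}\Z_{2^s}^{n}$. By Lemma \ref{lem:gray-shift}, $\Phi_s(2^{s-1}\lambda+\uu)=\Phi_s(2^{s-1}\lambda)+\Phi_s(\uu)$ coordinatewise. Taking $\uu$ in the additive code and $2^{s-1}\lambda=\tfrac{\ord(\ww_i)}{2}\ww_i$ shows that $\Phi_s(\tfrac{\ord(\ww_i)}{2}\ww_i)+\bar H\subseteq\bar H$.

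The same lemma shows that $\Phi_s$ is $\Z_2$-linear on $2^{s-1}\Z_{2^s}^n$, so the $\Z_2$-span of these $m$ images is the Gray image of $\{2^{s-1}\xx\}\cap\bar\cH$. That space has dimension $m$, because the column set \eqref{eq:z2s-columns} makes the rows reduced modulo their order-two parts independent. Concretely, the first entry of each row's order-two multiple can be read off on suitably chosen columns, for instance those with a single nonzero entry $2^{j-1}$ in the chosen row.

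For the constant words I would use Carlet's addition formula for $\phi_s$, namely $\phi_s(u)+\phi_s(v)=\phi_s(u+v-2(u\odot v))$, where $\odot$ is the digitwise product of the binary expansions. This follows from \eqref{eq:Carlet}, since $\phi_s$ is affine in the digit vector. Applied to $v=2^\nu$ it gives $\Phi_s(\uu)+\Phi_s(2^\nu\one)=\Phi_s(\uu+2^\nu\one-2^{\nu+1}\uu^{(\nu)})$, where $\uu^{(\nu)}$ is the binary vector of $\nu$th digits of $\uu$. Since $2^\nu\one$ lies in $\bar\cH$, it then suffices to show that $2^{\nu+1}\uu^{(\nu)}\in\bar\cH$ for every codeword $\uu$ when $\nu\leq\sigma-2$.

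To do this, write $\uu=\sum x_i\ww_i$ and compute the $\nu$th digit columnwise by Lemma \ref{lem:carries}. When $\sigma\geq2$ we have $a_1=1$ and no rows of level below $\sigma$ except $\ww_1=\one$. Every other row then has entries divisible by $2^{\sigma-1}$, so its contribution does not affect digits below $\sigma-1$. Hence $\uu^{(\nu)}$ is constant and $2^{\nu+1}\uu^{(\nu)}$ is a multiple of $\one$, which lies in the code.

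The last vector is $\Phi_s((2^{s-1}-1)\one)$. Here I would use the analogous formula for $v=2^{s-1}-1$, whose lower digits are all ones. This gives $\uu\odot v=\uu \bmod 2^{s-1}$, so the required correction $-2(\uu\bmod 2^{s-1})=-2\uu+2^{s}(\cdot)$ equals $-2\uu$, which lies in the code. This settles membership.

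For independence, I would restrict everything to a single block of $2^{s-1}$ Gray coordinates of one column. The constant vectors $\Phi_s(2^\nu\one)$ for $\nu\leq s-2$, together with $\Phi_s((2^{s-1}-1)\one)$, restrict to distinct nonzero rows of $Y_{s-1}$ and to their sum with the parity row, and these are independent of each other. They are also independent from the image of $2^{s-1}\Z_{2^s}$, which is spanned by $\one$ on every block. Combining this with the independence of the first family gives all $\sigma+m$ vectors independent.

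I expect the main obstacle to be the carry bookkeeping for the middle family: the claim that the low digits of every codeword are constant up to digit $\sigma-2$. It has to be checked against the exact definition of $\sigma$ in \eqref{eq:sigma}, including the edge case $\sigma=s$, where the code is linear precisely in the excluded cases.
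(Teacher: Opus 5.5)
The paper does not prove this proposition; it quotes it from \cite{KernelZ2s}, as it does \eqref{eq:z2s-kernel}. Your strategy of membership, independence, and a count against \eqref{eq:z2s-kernel} is legitimate in that setting, and your membership arguments are correct:
\begin{itemize}
\item Lemma \ref{lem:gray-shift} handles the order-two family.
\item The middle family follows from $\phi_s(u)+\phi_s(2^\nu)=\phi_s(u+2^\nu-2^{\nu+1}u_\nu)$ together with $\uu=x\one+\uu'$, every entry of $\uu'$ lying in $2^{\sigma-1}\Z_{2^s}$. This is the same bookkeeping as in Lemma \ref{lem:theta}.
\item The last vector follows from $\phi_s(u)+\phi_s(2^{s-1}-1)=\phi_s(2^{s-1}-1-u)$.
\end{itemize}
Be aware that the inclusion of the kernel in the span of these vectors is not proved but imported through \eqref{eq:z2s-kernel}. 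What you obtain is therefore a verification relative to that formula, not an independent computation of the kernel.

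The independence paragraph, however, is false as written, and it is exactly where nonlinearity has to enter. Write $\yy_\nu$ for the $\nu$th row of $Y_{s-1}$. On the $2^{s-1}$ Gray coordinates of one column, $\Phi_s(2^\nu\one)$ restricts to $\yy_\nu$. Likewise $\Phi_s((2^{s-1}-1)\one)$ restricts to $\yy_0+\cdots+\yy_{s-2}$, with no all-one term, because the top digit of $2^{s-1}-1$ is $0$. If $\nu$ runs over $0\leq\nu\leq s-2$, as you write, the last vector is the sum of the others and the family is dependent. This is precisely what happens when $\sigma=s$, i.e.\ for the linear codes $\bar H^{1,0,\dots,0,a_s}$.

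The correct claim is that $\yy_0,\dots,\yy_{\sigma-2}$, $\yy_0+\cdots+\yy_{s-2}$ and $\one$ are independent. This holds because $\one,\yy_0,\dots,\yy_{s-2}$ are independent and $\sigma-2<s-2$, so $\yy_{s-2}$ occurs only in the last vector. That inequality, $\sigma\leq s-1$, is Lemma \ref{lem:sigma-bound}, and it needs nonlinearity. It is also what gives $\dim L_\sigma=\sigma$ in the proof of Lemma \ref{lem:z2s-kernel-structure}, which contains the corrected computation. Your worry about $\sigma=s$ in the carry bookkeeping is misplaced: the carry argument goes through verbatim there, since those codes are linear and everything is in the kernel. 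The edge case only threatens independence.

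A minor point in the first family: no column has a single nonzero entry in a row $i\geq2$, because every column begins with $1$. Instead, evaluate first at $(1,0,\dots,0)$, where only $2^{s-1}\ww_1$ is nonzero. Then evaluate at the column with $2^{j_i-1}$ in position $i$ and $0$ elsewhere after the first entry, where only $2^{s-1}\ww_1$ and $2^{s-j_i}\ww_i$ are nonzero.
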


The first $m$ vectors are the Gray images of the elements of order two of a
generating set. The last
$\sigma$ are Gray images of constant words, and they are what makes the
$\Z_{2^s}$ case differ from the $\Z_2\Z_4\Z_8$ one.  For $s=3$, the rank was determined
in \cite{fernandez2019mathbb}:
\begin{equation}\label{eq:z8-rank}
\rank\bigl(\bar H^{a,b,c}\bigr)
=\frac{a^4-2a^3+35a^2+14a}{24}+\frac{b}{2}\bigl(a^2+a+b+1\bigr)+c+1 .
\end{equation}
No formula for the rank is known for $s\geq4$, which is exactly why the
comparison of \cite{Z2Z4Z8Rank} stops at $\Z_8$.

The families obtained for different values of $s$ are not disjoint, and the
overlap is described by the following theorem, which raises the alphabet by
$\ell$ levels at the cost of one unit of $a_1$ and $\ell$ units of $a_s$.

\begin{theorem}[{\cite{EquivZ2s}}]\label{thm:recalled-equivalence}
Let $s\geq2$ and let $\bar H^{a_1,\dots,a_s}$ be a $\Z_{2^s}$-linear Hadamard
code with $a_s\geq1$.  Then, for every $\ell$ with $1\leq\ell\leq a_s$, the
code $\bar H^{a_1,\dots,a_s}$ is permutation equivalent to the
$\Z_{2^{s+\ell}}$-linear Hadamard code
$$
\bar H^{\,1,\,0^{\ell-1},\,a_1-1,\,a_2,\dots,a_{s-1},\,a_s-\ell},
$$
where $0^{\ell-1}$ stands for a string of $\ell-1$ zeros.
\end{theorem}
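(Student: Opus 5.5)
The plan is to reduce to the case $\ell=1$ and then match the two codes coordinate by coordinate through an explicit identity for Carlet's Gray map.

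\textbf{Reduction to $\ell=1$.} Suppose the case $\ell=1$ is proved. Apply it to $\bar H^{a_1,\dots,a_s}$ with $a_s\geq1$; this gives $\bar H^{1,a_1-1,a_2,\dots,a_{s-1},a_s-1}$ over $\Z_{2^{s+1}}$. If $a_s-1\geq1$, apply the case $\ell=1$ again. The first parameter is now $1$, so it produces a $0$, and we get $\bar H^{1,0,a_1-1,a_2,\dots,a_{s-1},a_s-2}$ over $\Z_{2^{s+2}}$. After $\ell$ steps we reach exactly the stated code $\bar H^{1,0^{\ell-1},a_1-1,a_2,\dots,a_{s-1},a_s-\ell}$. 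Permutation equivalence is transitive, so this finishes the induction.

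\textbf{A Gray map identity.} Let $u=\lambda_0+2v\in\Z_{2^{s+1}}$ with $\lambda_0\in\{0,1\}$ and $v\in\Z_{2^s}$. I read \eqref{eq:Carlet} for $\phi_{s+1}$ column by column. The column indexed by $(c_0,\dots,c_{s-1})$ has entry
$$v_{s-1}+\lambda_0c_0+\sum_{i\geq1}v_{i-1}c_i .$$
Order the columns so that those with $c_0=0$ come first. Then, up to a fixed coordinate permutation,
$$\phi_{s+1}(\lambda_0+2v)=\bigl(\phi_s(v),\ \phi_s(v)+\lambda_0\one\bigr).$$

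\textbf{The source code.} Take a row $\ww_m$ of level $s$, whose entries lie in $\{0,2^{s-1}\}$. Split the columns of $\bar A^{a_1,\dots,a_s}$ according to the entry of $\ww_m$. The remaining $m-1$ rows have identical column sets on the two halves, namely all columns of $\bar A^{a_1,\dots,a_{s-1},a_s-1}$. A codeword is $\xx+\varepsilon\ww_m$ with $\xx$ in the code generated by the other rows. By Lemma \ref{lem:gray-shift}, adding $2^{s-1}$ complements the Gray image. Hence
$$\bar H^{a_1,\dots,a_s}=\bigl\{(\yy,\yy+\varepsilon\one):\ \yy\in\bar H^{a_1,\dots,a_{s-1},a_s-1},\ \varepsilon\in\Z_2\bigr\},$$
after grouping coordinates.

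\textbf{The target code.} In the target over $\Z_{2^{s+1}}$, a column is $(1,\zz')$, where every entry of $\zz'$ lies in $2\Z_{2^{s+1}}$. Dividing these entries by $2$ identifies the level $i+1$ sets over $\Z_{2^{s+1}}$ with the level $i$ sets $T_i$ over $\Z_{2^s}$. The rows of $\zz'$ then correspond exactly to all rows of $\bar A^{a_1,\dots,a_{s-1},a_s-1}$ except the all-one row. Write the coefficient of the all-one row as $\lambda=\lambda_0+2\lambda'$. Every coordinate of a target codeword then has the form $\lambda_0+2v$. The value $v\in\Z_{2^s}$ is the corresponding coordinate of $\lambda'\one+\sum(\text{coeff})\cdot(\zz'/2)$. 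This is well defined because $2x\bmod 2^{s+1}$ depends only on $x\bmod 2^s$, and the coefficients may be reduced mod $2^s$. As the message varies, the vector of $v$'s runs over $\bar\cH^{a_1,\dots,a_{s-1},a_s-1}$ and $\lambda_0=\varepsilon$ is free. By the Gray map identity, the target code equals the set displayed above up to a coordinate permutation, which proves the case $\ell=1$. The lengths agree automatically by \eqref{eq:z2s-length}, and I will use this only as a check.

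The main obstacle I expect is the bookkeeping in the Gray map identity. One must verify that the ordering of columns of $Y_s$ is compatible with a single global permutation for all coordinates simultaneously. One must also check that the two halves of the source code are aligned with the same column order. I would first check both points on $s=2$, $\ell=1$ before writing the general argument.
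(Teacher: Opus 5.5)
Your proposal is correct and gives a complete proof. The paper itself does not prove this theorem: it is imported from \cite{EquivZ2s} and only used, read backwards, in Remark \ref{rem:chain-reduction}. So there is no internal argument to compare yours with.

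Your route is self-contained and uses only \eqref{eq:Carlet} and Lemma \ref{lem:gray-shift}. The level-$s$ row exhibits $\bar H^{a_1,\dots,a_s}$ as the Plotkin extension \eqref{eq:plotkin-extension} of $\bar H^{a_1,\dots,a_{s-1},a_s-1}$; the paper uses the same observation in the proof of Corollary \ref{cor:exceptional-family}. The low digit $\lambda_0$ of the coefficient of the all-one row exhibits the target code over $\Z_{2^{s+1}}$ as the Plotkin extension of that same code. This works through the same kind of digit shift $W=\varepsilon+2W'$ that drives Theorem \ref{thm:descent}. There the digit $h_0$ is eliminated by the even-weight condition; in your identity you split the columns of $Y_s$ according to $c_0$ instead.

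The identity $\phi_{s+1}(\lambda_0+2v)=(\phi_s(v),\phi_s(v)+\lambda_0\one)$ is right. For $s=2$ it reads $\phi_3(u)=(v_1,\,v_1+\lambda_0,\,v_1+v_0,\,v_1+v_0+\lambda_0)$, whose even and odd positions give $\phi_2(v)$ and $\phi_2(v)+\lambda_0\one$.

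Neither of the bookkeeping points you flag is a real obstacle.
\begin{itemize}
\item The reordering of the columns of $Y_s$ puts those with $c_0=0$ first and those with $c_0=1$ second, each group ordered by $(c_1,\dots,c_{s-1})$. It does not depend on the entry $u$, so one global permutation works in every block.
\item For the alignment, fix one ordering of the column set of $\bar A^{a_1,\dots,a_{s-1},a_s-1}$. Use it for both halves of the source, and transport it to the target through the bijection $(1,\zz')\mapsto(1,\zz'/2)$. That bijection sends target level $j+1$ to level $j$, including $T'_{s+1}=\{0,2^s\}\mapsto T_s=\{0,2^{s-1}\}$.
\end{itemize}

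The induction on $\ell$ is also correct. The $k$th application of the case $\ell=1$ only needs the current last parameter $a_s-k+1$ to be at least $1$, which holds for $k\leq\ell\leq a_s$. Each step turns the previous first parameter $1$ into a new $0$, so after $\ell$ steps you obtain exactly the string $0^{\ell-1}$.
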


Read in the opposite direction, Theorem \ref{thm:recalled-equivalence} allows us to lower
the alphabet whenever the first parameter equals $1$, and this is how we will
use it in Remark \ref{rem:chain-reduction}.

It is convenient to index the binary coordinates of $\bar H^{a_1,\dots,a_s}$
by pairs.  Let $\cZ$ be the set of additive coordinates of
$\bar\cH^{a_1,\dots,a_s}$, which we identify with the set of columns
\eqref{eq:z2s-columns}.  By \eqref{eq:Carlet}, the $2^{s-1}$ binary
coordinates produced by an additive coordinate $\zz\in\cZ$ are naturally
indexed by the vectors $h\in\Z_2^{s-1}$, and we write $(\zz,h)$ for the
corresponding binary coordinate.  Let $\xx=(x_1,\dots,x_m)$ be a message, let
$\uu\in\bar\cH^{a_1,\dots,a_s}$ be the corresponding additive codeword and let
\begin{equation}\label{eq:z2s-symbol}
W_{\xx}(\zz)=\sum_{i=1}^{m}x_i z_i\ \in\Z_{2^s}
\end{equation}
be its entry at $\zz$.  Then, $\Phi_s(\uu)$ takes at $(\zz,h)$ the value
\begin{equation}\label{eq:z2s-coordinate-value}
\phi_s\bigl(W_{\xx}(\zz)\bigr)_h
=W_{\xx}(\zz)_{s-1}+\sum_{i=0}^{s-2}h_i\,W_{\xx}(\zz)_i ,
\end{equation}
where $W_{\xx}(\zz)_0,\dots,W_{\xx}(\zz)_{s-1}$ are the binary digits of
$W_{\xx}(\zz)$.  When the message is fixed and no confusion can arise, we
abbreviate $W=W_{\xx}(\zz)$.  We will also use the following elementary
observation.

\begin{lemma}\label{lem:sigma-bound}
If $\bar H^{a_1,\dots,a_s}$ is nonlinear, then $1\leq\sigma\leq s-1$.
\end{lemma}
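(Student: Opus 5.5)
We argue by cases on the definition \eqref{eq:sigma} of $\sigma$; the lower bound $\sigma\geq1$ is immediate in every case, so the task is to rule out $\sigma=s$ and $\sigma>s-1$ using the linearity criteria recalled in Subsection~\ref{subsec:z2s-family}.

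If $a_1\geq2$, then $\sigma=1\leq s-1$ because $s\geq2$, and there is nothing more to check. If $a_1=1$ and $a_2=\cdots=a_s=0$, then $\sigma=s$. We will show this case cannot occur for a nonlinear code. For $s\geq3$, the tuple $(1,0,\dots,0,0)$ has the form $(1,0,\dots,0,a_s)$ with $a_s=0$, so the code is linear by \cite{KernelZ2s}. For $s=2$, the tuple is $(1,0)$ and $a_1=1\leq2$, so the code is linear by \cite{Kro:2001:Z4_Had_Perf}. Hence this case is excluded.

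In the remaining case $a_1=1$ and some $a_i>0$ with $2\leq i\leq s$, we have $\sigma=\min\{i:a_i>0,\ i\geq2\}\leq s$. We must exclude $\sigma=s$, that is, $a_2=\cdots=a_{s-1}=0$ and $a_s>0$. Then the tuple is $(1,0,\dots,0,a_s)$, which is linear by \cite{KernelZ2s} when $s\geq3$ and has $a_1=1\leq2$ when $s=2$, again giving a linear code. Therefore $\sigma\leq s-1$.

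In fact, for $s=2$ nonlinearity forces $a_1\geq3$, so only the first case arises there. The only point requiring care is matching each excluded tuple to the recalled linearity list, including the degenerate value $a_s=0$ and the separate criterion for $s=2$. There is no real obstacle beyond this bookkeeping.
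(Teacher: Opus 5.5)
Your proof is correct and follows essentially the same argument as the paper. The case $a_1\geq2$ gives $\sigma=1$ directly, and for $a_1=1$ you observe that $\sigma=s$ would force the tuple $(1,0,\dots,0,a_s)$, which the recalled linearity criteria declare linear; the paper merges your second and third cases into the single condition $a_2=\cdots=a_{s-1}=0$, while you invoke the separate $s=2$ criterion explicitly, a harmless and slightly more careful refinement.
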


\begin{proof}
If $a_1\geq2$, then $\sigma=1$ by \eqref{eq:sigma} and $s\geq2$, so
$\sigma\leq s-1$.  Assume now $a_1=1$.  If $a_i=0$ for every $i$ with
$2\leq i\leq s-1$, then $(a_1,\dots,a_s)=(1,0,\dots,0,a_s)$ and, by the
linearity criterion recalled above, the code would be linear, which is
excluded by hypothesis.  Hence, $a_i>0$ for at least one index $i$ with
$2\leq i\leq s-1$, and the third case of \eqref{eq:sigma} gives
$\sigma=\min\{i:a_i>0,\ 2\leq i\leq s\}\leq s-1$.
\end{proof}
\section{The profile of the $\Z_2\Z_4$-linear Hadamard codes}
\label{sec:z2z4-profile}

For the $\Z_2\Z_4$-linear Hadamard codes, the profile can be computed directly
by the same method as for the $\Z_2\Z_4\Z_8$-linear Hadamard codes. We determine the
kernel labels, count the coordinates carrying each label, and compute the span
of the corresponding coordinate functions.  The computation is shorter here, because
there are no rows of order $8$.

\begin{proposition}\label{prop:z2z4-profile}
Let $H^{U,V}$ be a nonlinear $\Z_2\Z_4$-linear Hadamard code, that is, with
$U\geq2$ and $V\geq1$.  Then, its kernel partition has $2^{U+V-1}$ blocks, all
of size $2^{U}$, and
\begin{equation}\label{eq:z2z4-profile}
\cR\bigl(H^{U,V}\bigr)
=\bigl\{\!\bigl\{(U+1)^{\,[2^{U+V-1}]}\bigr\}\!\bigr\}.
\end{equation}
In particular, the profile is constant.
\end{proposition}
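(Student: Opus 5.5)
The plan is to reduce to the case $V=1$ with the Plotkin extension, and then to compute the case $V=1$ directly from kernel labels and coordinate functions.

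\emph{Reduction to $V=1$.} The duplication step \eqref{eq:z2z4-construction-order2} appends a row of order $2$ that is $(\zero,\one\mid\zero,\two)$. Under $\Phi$, after a coordinate permutation that groups the two copies, this turns $H^{U,\ell-1}$ into its Plotkin extension $(H^{U,\ell-1})^{+}$ of \eqref{eq:plotkin-extension}. The first row $(\one\mid\two)$ has order two and Gray image $\one$. Since the kernel of a nonlinear $H^{U,V}$ is the Gray image of the subcode of elements of order at most two, we get $\one\in\K(H^{U,\ell-1})$. So Lemma \ref{lem:plotkin-blocks} applies, and Lemma \ref{lem:profile-invariant} handles the permutation. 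Each step therefore doubles the number of blocks and keeps their sizes and local ranks. It thus suffices to show that $H^{U,1}$ has $2^{U}$ blocks, all of size $2^U$, each of local rank $U+1$.

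\emph{Kernel labels for $V=1$.} Write the messages as $\epsilon(\one\mid\two)+\sum_j y_j\vv_j$ with $y_j=d_j+2e_j$. The kernel is spanned by $\one$ and the vectors $\Phi(2\vv_j)$.
\begin{itemize}
\item On a $\Z_2$ coordinate, $2\vv_j$ vanishes.
\item On a $\Z_4$ coordinate $q$ (normalized primitive in $\Z_4^U$), $\Phi(2q_j)=(q_j\bmod 2)(1,1)$.
\end{itemize}
Hence the label of a coordinate is $u=q\bmod 2\in\Z_2^U\setminus\{\zero\}$ for the Gray coordinates of $q$, and $u=\zero$ for the $\Z_2$ part. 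This gives $2^U$ blocks.

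The sizes are counted as follows.
\begin{itemize}
\item $B_\zero$ is the whole $\Z_2$ part, of size $2^U$.
\item For $u\neq\zero$ with support $S$ and $i_0=\min S$, write $q=u+2r$. Then $r_{i_0}=0$ is forced and the other $U-1$ entries of $r$ are free, so there are $2^{U-1}$ additive coordinates, that is, $2^U$ binary ones.
\end{itemize}

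\emph{Local ranks.} We use Lemma \ref{lem:evaluation-principle}.

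On $B_\zero$ the coordinate functions are $\epsilon+\sum_j z_jd_j$ for $z\in\Z_2^U$. Their span is $\langle\epsilon,d_1,\dots,d_U\rangle$, of dimension $U+1$.

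On $B_u$, the entry is $W=2\epsilon+\sum_{j\in S}(1+2r_j)y_j+\sum_{j\notin S}2r_jy_j$. By Lemma \ref{lem:carries} its digits are
$$W_0=\delta_S,\qquad W_1=\epsilon+\eta_S+\sym_2(d_S)+\sum_{j\neq i_0}r_jd_j.$$
By Lemma \ref{lem:gray-digits}, the two Gray coordinates of each $q$ span $\langle W_0,W_1\rangle$. Letting $r$ vary and applying Lemma \ref{lem:mobius}, the span over the block is
$$\langle\delta_S,\ F_u,\ d_j\ (j\neq i_0)\rangle,\qquad F_u=\epsilon+\eta_S+\sym_2(d_S).$$
These $U+1$ functions are independent by the algebraic normal form: $\delta_S$ is the only one involving $d_{i_0}$ linearly, and $F_u$ is the only one involving $\epsilon$. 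Hence $\rho=U+1$ on every block.

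The main care point is the bookkeeping in $W_1$: the carry $\sym_2(d_S)$ and the terms $r_jd_j$ must be tracked exactly. The argument also relies on the fact that the duplication really is the Plotkin extension up to a permutation compatible with Lemma \ref{lem:plotkin-blocks}.
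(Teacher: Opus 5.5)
Your proposal is correct and follows essentially the same route as the paper: the duplication is realised as the Plotkin extension of Lemma \ref{lem:plotkin-blocks}, which reduces everything to $V=1$. Your labels are the parity vectors of the $\Z_4$ columns, and normalization gives $2^{U-1}$ additive coordinates per nonzero label. The local spans come from Lemmas \ref{lem:carries}, \ref{lem:gray-digits} and \ref{lem:mobius}, and your $\langle\delta_S,F_u,d_j\ (j\neq i_0)\rangle$ is the paper's $\Span\{d_1,\dots,d_U,\psi\}$. The differences are only cosmetic: you do the reduction first, and you write the columns as $u+2r$.
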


\begin{proof}
We first treat the case $V=1$ and then pass to a general $V$ by duplication.

Assume therefore that $V=1$, and let
$\ww_0=(\one\mid\two)$ be the distinguished row of order $2$, while
$\ww_1,\dots,\ww_{U}$ are the rows of order $4$ of $A^{U,1}$.  By
\cite{PRV06}, recalled at the end of Subsection \ref{subsec:z2z4-family}, the kernel of a nonlinear $H^{U,V}$ is the Gray map image of
the subcode of the elements of order at most two. Therefore, $\Phi(\ww_0)$ and
the $\Phi(2\ww_j)$ form a basis of the kernel.  Since $\phi_2(2)=(1,1)$,
we have $\Phi(\ww_0)=\Phi(\one\mid\two)=\one$, which takes the value $1$ everywhere
and so separates no two coordinates.  As in Subsection
\ref{subsec:mixed-recalled}, we may therefore attach to each binary coordinate
the vector of the values taken there by the remaining $U$ generators, and call
it the label of that coordinate. By \eqref{eq:kernel-relation}, two
coordinates lie in the same kernel block exactly when they have the same
label.

Let us compute the labels.  At a binary coordinate produced by an additive
coordinate of the $\Z_2$ part, every entry of the column lies in $\Z_2$, so
the entry of $2\ww_j$ there is $2$ times an element of $\Z_2$, hence zero; the
label is $\zero\in\Z_2^{U}$.  At a binary coordinate produced by the additive
coordinate of the $\Z_4$ part whose column is $(2,s_1,\dots,s_{U})^T$, the row
$\ww_j$ has the entry $s_j\in\Z_4$, so $2\ww_j$ has the entry $2s_j$ modulo
$4$, which is $0$ when $s_j$ is even and $2$ when $s_j$ is odd. Since
 $\phi_2(2)=(1,1)$, both binary coordinates produced
there receive the same value $s_j\bmod2$.  The label is therefore the parity
vector of $(s_1,\dots,s_{U})$, and it is nonzero, because that vector is
normalized primitive and hence has at least one odd entry.  Consequently, the
label $\zero$ occurs exactly on the $\Z_2$ part, every nonzero label occurs
only in the $\Z_4$ part, and, as we check next while counting, every label
does occur.

We now determine the sizes of the blocks.  By Subsection
\ref{subsec:z2z4-family}, the $\Z_2$ part of $A^{U,1}$ consists of all the
vectors of $\Z_2^{U}$, each occurring once, so it has $2^{U}$ additive
coordinates, each producing one binary coordinate. Hence, the block of label
$\zero$ has $2^{U}$ elements.  Fix now a nonzero label and let
$R\not=\emptyset$ be the corresponding set of odd positions, with
$j_0=\min R$.  The additive coordinates of this block are those whose column
is $(2,s_1,\dots,s_U)^T$ with $s_j$ odd exactly for $j\in R$.  Writing $R'$
for the set of positions at which the higher binary digit of $s_j$ equals $1$,
normalization means that the first odd entry, namely $s_{j_0}$, equals $1$,
that is, $j_0\notin R'$; equivalently
$R'\subseteq\{1,\dots,U\}\setminus\{j_0\}$; hence, one higher digit is
prescribed and the remaining $U-1$ are free.  There are therefore $2^{U-1}$
such additive coordinates, each producing two binary ones, and the block has
$2\cdot2^{U-1}=2^{U}$ elements.  Hence, all the $2^{U}$ labels occur, and all the
blocks have $2^{U}$ elements; as a check, $2^{U}\cdot2^{U}=2^{2U}=2^{t}$ by
\eqref{eq:z2z4-length} with $V=1$, which is the length of the code.

There remain the local ranks.  Write a codeword as
$\epsilon\ww_0+\sum_{j=1}^{U}y_j\ww_j$ with $\epsilon\in\Z_2$ and
$y_j=d_j+2e_j\in\Z_4$, and keep the notation \eqref{eq:symmetric-notation}.
By Lemma \ref{lem:evaluation-principle}, the local rank on a block is the
dimension of the span of the coordinate functions attached to it.

Consider first the block of label $\zero$.  It consists of the $2^{U}$ binary
coordinates of the $\Z_2$ part, and a coordinate there has column
$(1,\lambda_1,\dots,\lambda_U)^T$ with $\lambda\in\Z_2^{U}$ arbitrary.  The
value of the codeword at that position is
$\epsilon+\sum_{j}\lambda_jy_j$ read in $\Z_2$, that is, $\epsilon+\delta_R$
with $R=\{j:\lambda_j=1\}$, since only the bottom digit $d_j$ of $y_j$
survives modulo $2$.  Letting $\lambda$, and hence $R$, vary over all
subsets, these functions span
$
\Span\{\epsilon,d_1,\dots,d_{U}\},
$
a space of dimension $U+1$, its generators being distinct variables.

Consider next a block of nonzero label, and keep the sets $R$ and $R'$ and the
index $j_0$ introduced while counting the blocks. Write $s_j=\lambda'_j+2\mu'_j$ for
the binary expansion of $s_j$ in $\Z_4$, and put $R=\{j:\lambda'_j=1\}$ and
$R'=\{j:\mu'_j=1\}$ as above.  The entry of the codeword at such a
coordinate is $2\epsilon+\sum_jy_js_j$ in $\Z_4$, and
$$
\sum_{j}y_js_j
=\sum_{j}(d_j+2e_j)(\lambda'_j+2\mu'_j)
=\sum_{j}d_j\lambda'_j+2\sum_{j}\bigl(d_j\mu'_j+e_j\lambda'_j\bigr)
\quad\text{in }\Z_4 .
$$
By Lemma \ref{lem:carries}, the ordinary sum $\sum_jd_j\lambda'_j$ has bottom
digit $\delta_R$ and carry $\sym_2(d_R)$, since the bits being added are the
$d_j$ with $j\in R$.  Hence, the bottom and the top digit function of the
coordinate are
$$
\delta_R
\qquad\text{and}\qquad
\psi+\delta_{R'},
\qquad\text{where}\quad
\psi=\epsilon+\eta_R+\sym_2(d_R).
$$
By the normalization recalled above, the set $R'$ runs over all the
subsets of $\{1,\dots,U\}\setminus\{j_0\}$.  Writing
$\delta_{R'}=\sum_{j\not=j_0}\mu'_jd_j$, where the $\mu'_j$ are the free
higher digits of the column, the family of top digits has exactly the shape
\eqref{eq:parameter-polynomial}, with constant term $\psi$ and with the
functions $d_j$, $j\not=j_0$, as the coefficients of the individual
parameters.  Lemma \ref{lem:mobius} therefore shows that the top digits span
$\Span\{\psi,\ d_j\ (j\not=j_0)\}$.  The bottom digit supplies the missing
function: since $j_0\in R$, we have
$\delta_R=d_{j_0}+\sum_{j\in R,\,j\not=j_0}d_j$, and every summand other than
$d_{j_0}$ has just been obtained, so $d_{j_0}$ lies in the span as well.  By Lemma \ref{lem:gray-digits}, 
the local space is therefore $\Span\{d_1,\dots,d_{U},\psi\}$.  Its dimension is
$U+1$: the functions $d_1,\dots,d_U$ are distinct variables, and $\psi$ is the
only one among the $U+1$ listed functions containing the variable $\epsilon$,
so no nontrivial combination of them can vanish.

Hence, for $V=1$, there are $2^{U}$ blocks, all of size $2^{U}$ and all of
local rank $U+1$.

We pass to a general $V$.  Each of the remaining $V-1$ steps of the
construction is the duplication that appends a row of order $2$, whose entries
are $\zero$ on the first copy and equal to $\one$ on the $\Z_2$ part and to
$\two$ on the $\Z_4$ part of the second copy.  Using
$\phi_2(2)=(1,1)$, the Gray image of that row is the vector that vanishes on
the coordinates coming from the first copy and equals $1$ on those coming from
the second one, so the construction is precisely the Plotkin extension
\eqref{eq:plotkin-extension}, up to the permutation of coordinates that lists
the two second copies after the two first ones.  The hypothesis $\one\in\K$
of Lemma \ref{lem:plotkin-blocks} holds for
$H^{U,1}$ because $\Phi(\ww_0)=\one$, and the last assertion of that
lemma supplies it again after each step.  That lemma
therefore applies and, at each step, doubles the number of blocks while
leaving their sizes and their local ranks unchanged.  Applying it $V-1$ times
gives $2^{U}2^{V-1}=2^{U+V-1}$ blocks, all of size $2^{U}$ and all of local
rank $U+1$, which is \eqref{eq:z2z4-profile}.  Note that the number of blocks
is $2^{\kernel(H^{U,V})-1}$ by \eqref{eq:z2z4-invariants}, as one expects,
since the all-one vector lies in the kernel.
\end{proof}

\begin{example}\label{ex:z2z4-profile}
Take $U=2$ and $V=1$.  Then, $t+1=2U+V=5$ by \eqref{eq:z2z4-length}, and
$H^{2,1}$ has length $2^4=16$.  The matrix $A^{2,1}$ has one row of order $2$
and two rows of order $4$; its $\Z_2$ part consists of the four vectors of
$\Z_2^2$ and its $\Z_4$ part consists of the six normalized primitive vectors of
$\Z_4^2$, namely $(1,0)$, $(1,1)$, $(1,2)$, $(1,3)$, $(0,1)$ and $(2,1)$.
There are $2^{U}=4$ labels.  The label $(0,0)$ is carried by the four binary
coordinates of the $\Z_2$ part.  The label $(1,0)$ is carried by the columns
whose parity vector is $(1,0)$, that is, by $(2,1,0)^T$ and $(2,1,2)^T$; the
label $(0,1)$ by $(2,0,1)^T$ and $(2,2,1)^T$; and the label $(1,1)$ by
$(2,1,1)^T$ and $(2,1,3)^T$.  Each of these three labels is thus carried by
two additive coordinates, hence by four binary ones, so all four blocks have
size $2^{U}=4$, as predicted.  On the block of label $\zero$, the local space
is $\Span\{\epsilon,d_1,d_2\}$; on the block of label $(1,1)$, where
$R=\{1,2\}$ and $j_0=1$, it is $\Span\{d_1,d_2,\psi\}$ with
$\psi=\epsilon+e_1+e_2+d_1d_2$.  Both have dimension $U+1=3$, and the same
holds for the two remaining blocks, so
$\cR(H^{2,1})=\{\!\{3^{[4]}\}\!\}$.
\end{example}
\section{The profile of the $\Z_{2^s}$-linear Hadamard codes}
\label{sec:z2s-profile}

For the alphabets $\Z_{2^s}$, no computation like the one of Section
\ref{sec:z2z4-profile} is available, because it would require knowing the rank
of the codes involved, and no rank formula is known for $s\geq4$.  We therefore proceed differently and prove that the profile is constant without
computing any local rank, by producing coordinate permutations that map the
code onto itself and act transitively on its kernel blocks.

We abbreviate $\bar H=\bar H^{a_1,\dots,a_s}$ and
$\bar\cH=\bar\cH^{a_1,\dots,a_s}$ throughout this section, and we keep the
indexing $(\zz,h)$ of the binary coordinates introduced at the end of
Subsection \ref{subsec:z2s-family}.  We produce two families of permutations:
the first one translates the additive coordinate $\zz$ and leaves the Gray
index $h$ untouched, and the second one shifts $h$ and leaves $\zz$ untouched.

\subsection{Two families of permutations preserving the code}
\label{subsec:permutations}

Write $G=\{0\}\times T_{j_2}\times\cdots\times T_{j_m}$, where $j_i$ is the
level of the row $\ww_i$.  This is a subgroup of $\Z_{2^s}^m$, and
\eqref{eq:z2s-columns} identifies the set $\cZ$ of the columns with the coset
$(1,0,\dots,0)+G$.

\begin{lemma}\label{lem:tau}
For every $\delta\in G$, the map $\pi_\delta$ on the binary coordinates
defined by $(\zz,h)\mapsto(\zz+\delta,h)$ is a permutation, and it maps
$\bar H$ onto $\bar H$.
\end{lemma}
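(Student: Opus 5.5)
First I check that $\pi_\delta$ is a permutation of the binary coordinates. Since $\cZ=(1,0,\dots,0)+G$ is a coset of the subgroup $G$ and $\delta\in G$, the map $\zz\mapsto\zz+\delta$ sends $\cZ$ bijectively onto itself, with inverse $\zz\mapsto\zz-\delta$. Leaving $h$ untouched then gives a bijection of the set of pairs $(\zz,h)$.

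Next I compute the effect of $\pi_\delta$ on a codeword. Let $\xx$ be a message and $\uu$ the corresponding additive codeword. By \eqref{eq:action}, $\pi_\delta(\Phi_s(\uu))$ takes at $(\zz,h)$ the value that $\Phi_s(\uu)$ takes at $(\zz-\delta,h)$. By \eqref{eq:z2s-coordinate-value}, this value is $\phi_s(W_{\xx}(\zz-\delta))_h$. Linearity of \eqref{eq:z2s-symbol} in $\zz$ gives
\[
W_{\xx}(\zz-\delta)=W_{\xx}(\zz)-\lambda,\qquad \lambda=\sum_{i=1}^m x_i\delta_i\in\Z_{2^s},
\]
where $\lambda$ does not depend on $\zz$. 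Hence the permuted additive word is $\uu-\lambda\one$, and so $\pi_\delta(\Phi_s(\uu))=\Phi_s(\uu-\lambda\one)$. The point is that the Gray map acts coordinatewise, so permuting additive coordinates with $h$ fixed commutes with $\Phi_s$.

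It remains to see that $\uu-\lambda\one$ lies in $\bar\cH$. The code contains $\ww_1=\one$ and hence every constant word $\lambda\one$, as noted after \eqref{eq:z2s-columns}, so $\uu-\lambda\one\in\bar\cH$. It is legitimate to regard the entries of $\delta$ as elements of $\Z_{2^s}$ multiplied by $x_i\in\Z_{2^s}$, so no order issue arises. Therefore $\pi_\delta(\bar H)\subseteq\bar H$, and equality follows because $\pi_\delta$ is injective and the sets are finite. I expect no real obstacle here. The only care needed is the direction convention of \eqref{eq:action}, which merely replaces $\delta$ by $-\delta$ and does not affect the argument.
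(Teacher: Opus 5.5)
Your proposal is correct and follows the paper's proof essentially step for step. Both arguments use the bijectivity of $\zz\mapsto\zz+\delta$ on the coset $\cZ$, the identity $\uu(\zz-\delta)=\uu(\zz)-\lambda$ with $\lambda$ independent of $\zz$, and the membership of $\lambda\one$ in $\bar\cH$ because $\ww_1=\one$. The only cosmetic difference is that you sum $\lambda$ from $i=1$, whereas the paper drops that term explicitly since $\delta_1=0$.
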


\begin{proof}
The map $\zz\mapsto\zz+\delta$ is a bijection of the coset
$\cZ=(1,0,\dots,0)+G$ onto itself, so $\pi_\delta$ is a permutation of the
binary coordinates.

We now check that $\pi_\delta$ maps $\bar H$ onto itself.  Let $\xx$ be a
message and let $\uu\in\bar\cH$ be the corresponding additive codeword, whose
entry at the additive coordinate $\zz$ is $\uu(\zz)=\sum_ix_iz_i$ by
\eqref{eq:z2s-symbol}.  By the definition of the action of a coordinate
permutation, applying $\pi_\delta$ to the binary word $\Phi_s(\uu)$ produces
the word whose value at $(\zz+\delta,h)$ is the value of $\Phi_s(\uu)$ at
$(\zz,h)$; equivalently, its value at $(\zz,h)$ is the value of $\Phi_s(\uu)$
at $(\zz-\delta,h)$, that is, $\phi_s\bigl(\uu(\zz-\delta)\bigr)_h$.  Now
$$
\uu(\zz-\delta)=\sum_{i=1}^{m}x_i(z_i-\delta_i)
=\sum_{i=1}^{m}x_iz_i-\sum_{i=1}^{m}x_i\delta_i
=\uu(\zz)-\lambda,
$$
where $\lambda=\sum_{i=2}^{m}x_i\delta_i\in\Z_{2^s}$, the term $i=1$ being absent 
because $\delta_1=0$ by the definition of $G$. The crucial point is that $\lambda$ 
depends only on the message $\xx$ and on $\delta$, and not on $\zz$; it is therefore 
the same constant at every additive coordinate.  Hence, the image word is $\Phi_s(\uu-\lambda\one)$. 
Since $\ww_1=\one$ is a row of $\bar A^{a_1,\dots,a_s}$, the constant word $\lambda\one=\lambda\ww_1$ 
belongs to $\bar\cH$, and therefore $\uu-\lambda\one\in\bar\cH$. So, the image word belongs to
$\bar H=\Phi_s(\bar\cH)$.  As $\pi_\delta$ is a bijection, it maps $\bar H$
onto $\bar H$.
\end{proof}

The second family shifts the Gray coordinates inside every additive
coordinate.  Here is the point where the parameter $\sigma$ of
\eqref{eq:sigma} enters: not every shift preserves the code, and $\sigma$ of
them are shown below to do so, which is all that the proof needs.  For
$\uu\in\bar\cH$ and $0\leq\nu\leq s-1$, we write $\uu^{(\nu)}$ for the binary
vector formed by the $\nu$th binary digits of the entries of $\uu$.

We will repeatedly use the identity
\begin{equation}\label{eq:parity-identity}
2^{s-1}(w\bmod 2)=2^{s-1}w\quad\text{in }\Z_{2^s}, 
\end{equation}
for every $w$, which holds because the difference of the two sides is $2^{s-1}$ times an even
integer, and $2^{s-1}\cdot2=2^{s}=0$ in $\Z_{2^s}$.

\begin{lemma}\label{lem:theta}
Let $\bar H=\bar H^{a_1,\dots,a_s}$ be nonlinear.  For $0\leq\nu\leq s-2$,
let $\theta_\nu$ be the permutation of the binary coordinates defined by
$(\zz,h)\mapsto(\zz,h+\bfe_\nu)$, where $\bfe_\nu$ is the $\nu$th vector of
the standard basis of $\Z_2^{s-1}$.  Then, for every $\uu\in\bar\cH$,
\begin{equation}\label{eq:theta-action}
\theta_\nu\bigl(\Phi_s(\uu)\bigr)=\Phi_s\bigl(\uu+2^{s-1}\uu^{(\nu)}\bigr),
\end{equation}
and $\theta_\nu$ maps $\bar H$ onto $\bar H$ for every $\nu$ with
$0\leq\nu\leq\sigma-1$.
\end{lemma}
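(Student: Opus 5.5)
We prove the identity \eqref{eq:theta-action} coordinatewise and then show that the added word $2^{s-1}\uu^{(\nu)}$ lies in $\bar\cH$ whenever $\nu\leq\sigma-1$.

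\textbf{The identity.} Fix $\uu\in\bar\cH$ and an additive coordinate $\zz$, and write $W=\uu(\zz)$ with digits $W_0,\dots,W_{s-1}$. By \eqref{eq:action}, the value of $\theta_\nu(\Phi_s(\uu))$ at $(\zz,h)$ is the value of $\Phi_s(\uu)$ at $(\zz,h+\bfe_\nu)$. By \eqref{eq:z2s-coordinate-value} this equals
$$
W_{s-1}+\sum_{i=0}^{s-2}(h_i+\delta_{i\nu})W_i=\phi_s(W)_h+W_\nu .
$$
This is linear in $h$. On the other hand, $\phi_s(2^{s-1}W_\nu)$ is the constant vector with all entries equal to $W_\nu$. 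By Lemma \ref{lem:gray-shift} with $\lambda=W_\nu$, we get $\phi_s(W+2^{s-1}W_\nu)=\phi_s(W)+W_\nu\one$. Comparing the two expressions at each coordinate gives \eqref{eq:theta-action}.

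\textbf{Reduction to membership.} Since $\bar\cH$ is a group, $\theta_\nu$ maps $\bar H$ into itself as soon as $2^{s-1}\uu^{(\nu)}\in\bar\cH$ for every $\uu\in\bar\cH$. Bijectivity and finiteness then give that $\theta_\nu$ maps $\bar H$ onto $\bar H$. By \eqref{eq:parity-identity}, $2^{s-1}\uu^{(\nu)}=2^{s-1}\lfloor\uu/2^\nu\rfloor$ entrywise, so it suffices to show this word is in $\bar\cH$.

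\textbf{Case $\nu=0$.} Here $2^{s-1}\uu^{(0)}=2^{s-1}\uu\in\bar\cH$, so there is nothing to prove. This settles $\sigma=1$ entirely.

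\textbf{Case $1\leq\nu\leq\sigma-1$, the main obstacle.} By \eqref{eq:sigma} this forces $a_1=1$ and $a_2=\dots=a_\nu=0$, since $\nu<\sigma$. Write $\uu=x_1\one+\vv$, where $\vv=\sum_{i\geq2}x_i\ww_i$ has all rows of level $\geq\sigma>\nu$. Every entry of $\vv$ is then divisible by $2^{\sigma-1}\geq2^{\nu}$. I will split into two sub-cases.

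If $\sigma-1>\nu$, adding $\vv$ does not disturb digits $0,\dots,\nu$, so $\uu^{(\nu)}=(x_1)_\nu\one$. Then $2^{s-1}\uu^{(\nu)}$ is a constant word in $\bar\cH$.

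If $\sigma-1=\nu$, the low $\nu$ digits of every entry of $\uu$ equal those of $x_1$. Hence digit $\nu$ of $x_1+\vv(\zz)$ is $(x_1)_\nu+(\vv(\zz)/2^{\nu})\bmod2$, with no carry from below. This gives
$$
2^{s-1}\uu^{(\nu)}=2^{s-1}(x_1)_\nu\one+2^{s-1-\nu}\vv .
$$
The word $2^{s-1-\nu}\vv$ lies in $\bar\cH$. Here I am using \eqref{eq:parity-identity} and the fact that $\vv/2^\nu$ is a well-defined element of $\Z_{2^s}^n$ up to multiples of $2^{s-\nu}$, which are killed by $2^{s-1-\nu}\cdot2^{\nu}$ bookkeeping.

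The step I would verify most carefully is this carry and divisibility bookkeeping, checking that no carries from lower digits or from $x_1$ spoil digit $\nu$. Nonlinearity enters only through Lemma \ref{lem:sigma-bound}, which guarantees that $\sigma\leq s-1$, so the indices $\nu\leq\sigma-1$ lie in range.
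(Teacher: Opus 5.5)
Your proof is correct and follows essentially the same route as the paper: the same coordinatewise identity via \eqref{eq:z2s-coordinate-value} and Lemma~\ref{lem:gray-shift}, the trivial case $\nu=0$, and for $\nu\geq1$ the decomposition $\uu=x_1\one+\vv$ with every entry of $\vv$ divisible by $2^{\nu}$, giving $2^{s-1}\uu^{(\nu)}=2^{s-1}(x_1)_\nu\one+2^{s-1-\nu}\vv$. The differences are cosmetic. You test membership of $2^{s-1}\uu^{(\nu)}$ rather than of $\uu+2^{s-1}\uu^{(\nu)}$, which is equivalent because $\bar\cH$ is a group. Your split into $\nu<\sigma-1$ and $\nu=\sigma-1$ is unnecessary, since the second computation covers both cases (the term $2^{s-1-\nu}\vv$ simply vanishes when $\nu<\sigma-1$).
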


\begin{proof}
We prove \eqref{eq:theta-action} first.  By the definition of the action of a
coordinate permutation, applying $\theta_\nu$ to $\Phi_s(\uu)$ produces the
word whose value at $(\zz,h+\bfe_\nu)$ is the value of $\Phi_s(\uu)$ at
$(\zz,h)$.  Since $\bfe_\nu+\bfe_\nu=\zero$ in $\Z_2^{s-1}$, the map
$\theta_\nu$ is an involution, so the image word takes at $(\zz,h)$ the value
of $\Phi_s(\uu)$ at $(\zz,h+\bfe_\nu)$, that is,
$\phi_s\bigl(\uu(\zz)\bigr)_{h+\bfe_\nu}$.  In the expression
\eqref{eq:z2s-coordinate-value} for $\phi_s(w)_h$, the index $h$ appears only
through the sum $\sum_{i\leq s-2}h_iw_i$, and replacing $h$ by $h+\bfe_\nu$
changes exactly the $\nu$th summand, adding $w_\nu$ to it.  Hence,
$$
\phi_s\bigl(\uu(\zz)\bigr)_{h+\bfe_\nu}
=\phi_s\bigl(\uu(\zz)\bigr)_h+\uu(\zz)_\nu ,
$$
where $\uu(\zz)_\nu$ is the $\nu$th binary digit of the entry of $\uu$ at
$\zz$.  On the other hand, $\phi_s(2^{s-1}\varepsilon)$ is the all-one vector
when $\varepsilon=1$ and the all-zero vector when $\varepsilon=0$, directly
from \eqref{eq:Carlet}; so the word $\Phi_s(2^{s-1}\uu^{(\nu)})$ takes at
$(\zz,h)$ the value $\uu(\zz)_\nu$, whatever $h$ is.  Therefore,
$\theta_\nu(\Phi_s(\uu))=\Phi_s(\uu)+\Phi_s(2^{s-1}\uu^{(\nu)})$, and by Lemma
\ref{lem:gray-shift}, applied at each additive coordinate separately, the
right-hand side equals $\Phi_s\bigl(\uu+2^{s-1}\uu^{(\nu)}\bigr)$.  This is
\eqref{eq:theta-action}.

Consequently, $\theta_\nu$ maps $\bar H$ into itself, hence onto itself, as
soon as $\uu+2^{s-1}\uu^{(\nu)}\in\bar\cH$ for every $\uu\in\bar\cH$.  We now
verify this condition for $0\leq\nu\leq\sigma-1$, distinguishing the case
$\nu=0$ from the case $\nu\geq1$.

Let $\nu=0$.  Since $\uu^{(0)}$ is the
vector of the bottom digits of $\uu$, that is, of the entries of $\uu$ reduced
modulo $2$, applying \eqref{eq:parity-identity} entrywise gives
$2^{s-1}\uu^{(0)}=2^{s-1}\uu$.  Hence,
$$
\uu+2^{s-1}\uu^{(0)}=\uu+2^{s-1}\uu=(1+2^{s-1})\uu\in\bar\cH ,
$$
since $\bar\cH$ is a $\Z_{2^s}$-module.  This settles the case
$\sigma=1$, where $\nu=0$ is the only admissible index.

Let now $\sigma\geq2$ and $1\leq\nu\leq\sigma-1$.  By
\eqref{eq:sigma}, $\sigma\geq2$ forces $a_1=1$. Moreover, the second case of
\eqref{eq:sigma} is excluded, because it describes a linear code, so
$\sigma=\min\{i:a_i>0,\ 2\leq i\leq s\}$.  Hence, $\ww_1=\one$ is the only
row of level $1$, every other row has level at least $\sigma$, and by
\eqref{eq:Ti} the entries of those rows lie in
$T_\sigma=2^{\sigma-1}\Z_{2^s}$.  Write accordingly
$\uu=x\one+\uu'$ with $x\in\Z_{2^s},$
where $\uu'$ is the part of $\uu$ generated by $\ww_2,\dots,\ww_m$; hence, every
entry of $\uu'$ therefore lies in $2^{\sigma-1}\Z_{2^s}$.  Since
$\nu\leq\sigma-1$, we may factor $\uu'=2^{\nu}\uu''$ for some word $\uu''$
over $\Z_{2^s}$.

We compute the digit of index $\nu$ of an entry of $\uu$.  At the additive
coordinate $\zz$ that entry is $x+2^{\nu}\uu''(\zz)$.  Adding a multiple of
$2^\nu$ does not affect the digits of index smaller than $\nu$, and the digit
of index $\nu$ of $x+2^{\nu}\uu''(\zz)$ is the bottom digit of
$\lfloor x/2^{\nu}\rfloor+\uu''(\zz)$, which is $x_\nu+\uu''(\zz)_0$ in
$\Z_2$.  In vector form, $\uu^{(\nu)}=x_\nu\one+(\uu''\bmod2)$. Therefore, using
\eqref{eq:parity-identity} and then $\uu'=2^\nu\uu''$, we obtain
\begin{equation*}
\begin{aligned}
2^{s-1}\uu^{(\nu)}
&=2^{s-1}x_\nu\one+2^{s-1}\bigl(\uu''\bmod2\bigr)
 =2^{s-1}x_\nu\one+2^{s-1}\uu''\\
&=2^{s-1}x_\nu\one+2^{\,s-1-\nu}\,2^{\nu}\uu''
 =2^{s-1}x_\nu\one+2^{\,s-1-\nu}\uu' .
\end{aligned}
\end{equation*}
Adding $\uu=x\one+\uu'$ to both sides and grouping the two parts,
$
\uu+2^{s-1}\uu^{(\nu)}=\bigl(x+2^{s-1}x_\nu\bigr)\one
 +\bigl(1+2^{\,s-1-\nu}\bigr)\uu' .
$
By Lemma \ref{lem:sigma-bound}, we have $\nu\leq\sigma-1\leq s-2$, so
$s-1-\nu\geq1$ and $2^{\,s-1-\nu}$ is even; hence $1+2^{\,s-1-\nu}$ is odd and
hence a unit of $\Z_{2^s}$.  Consequently,
$(1+2^{\,s-1-\nu})\uu'$ lies in the subcode generated by $\ww_2,\dots,\ww_m$,
and $(x+2^{s-1}x_\nu)\one$ lies in the subcode generated by $\ww_1$; hence, their sum
lies in $\bar\cH$, which is what we had to prove.
\end{proof}

\subsection{The kernel blocks, and transitivity}\label{subsec:z2s-blocks}

The third ingredient describes the kernel blocks.  We determine them directly from the basis
\eqref{eq:z2s-kernel-basis} of $\K(\bar H)$ recalled in Proposition
\ref{prop:z2s-kernel-basis}.  We first fix some notation.  Let $j_i$ be
the level of the row $\ww_i$.  For $2\leq i\leq m$, the entry $z_i$ of an
additive coordinate $\zz$ lies in $T_{j_i}=2^{\,j_i-1}\Z_{2^s}$ by
\eqref{eq:z2s-columns}, so we may write $z_i=2^{\,j_i-1}\zeta_i$ with
$\zeta_i$ determined modulo $2^{\,s-j_i+1}$.  Since $s-j_i+1\geq1$, the parity
$\chi_i(\zz)=\zeta_i\bmod2$ does not depend on the representative $\zeta_i$,
and we write $\chi(\zz)=\bigl(\chi_2(\zz),\dots,\chi_m(\zz)\bigr)\in\Z_2^{m-1}$.

\begin{lemma}\label{lem:z2s-kernel-structure}
Let $\bar H$ be nonlinear.  Then, two binary coordinates $(\zz,h)$
and $(\zz',h')$ lie in the same kernel block if and only if
\begin{equation}\label{eq:z2s-block-condition}
\chi_i(\zz)=\chi_i(\zz')\ (2\leq i\leq m)
\qquad\text{and}\qquad
h+h'\in L_\sigma^{\perp},
\end{equation}
where
\begin{equation}\label{eq:Lsigma}
L_\sigma=\Span\bigl\{\bfe_0,\dots,\bfe_{\sigma-2},\
 \bfe_{\sigma-1}+\bfe_{\sigma}+\cdots+\bfe_{s-2}\bigr\}\subseteq\Z_2^{s-1},
\qquad \dim L_\sigma=\sigma .
\end{equation}
Moreover, the map $\zz\mapsto\chi(\zz)$ is onto $\Z_2^{m-1}$, and the number
of kernel blocks is $2^{\,\kernel(\bar H)-1}$.
\end{lemma}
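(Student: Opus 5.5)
Since $\K(\bar H)$ is linear, the relation \eqref{eq:kernel-relation} can be tested on the basis \eqref{eq:z2s-kernel-basis} of Proposition \ref{prop:z2s-kernel-basis}. The plan is to compute the value of each basis vector at a binary coordinate $(\zz,h)$ using \eqref{eq:z2s-coordinate-value}. Each value should split into a function of $\zz$ alone or a linear function of $h$ alone. From this we read off the two conditions in \eqref{eq:z2s-block-condition}.

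\textbf{The vectors $\Phi_s(\tfrac{\ord(\ww_i)}{2}\ww_i)$.} For a row of level $j_i$, we have $\ord(\ww_i)/2=2^{s-j_i}$, so the entry at $\zz$ is $2^{s-j_i}z_i=2^{s-1}\zeta_i=2^{s-1}\chi_i(\zz)$ by \eqref{eq:parity-identity}. By \eqref{eq:Carlet}, $\phi_s(2^{s-1}\varepsilon)$ is constant and equal to $\varepsilon$. Hence this kernel word takes the value $\chi_i(\zz)$ at every $(\zz,h)$, independently of $h$. For $i=1$, we have $\ww_1=\one$ and the word is $\Phi_s(2^{s-1}\one)=\one$, which separates nothing.

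\textbf{The constant words.} Consider next $\Phi_s(2^\nu\one)$ for $0\leq\nu\leq\sigma-2$. The digits of $2^\nu$ are $\bfe_\nu$ with top digit $0$ (as $\nu\leq s-2$ by Lemma \ref{lem:sigma-bound}). So the value at $(\zz,h)$ is $h_\nu=\langle h,\bfe_\nu\rangle$. Finally, $2^{s-1}-1$ has digits $1$ in positions $0,\dots,s-2$ and top digit $0$, so $\Phi_s((2^{s-1}-1)\one)$ has value $\sum_{i\le s-2}h_i$. This is not yet the generator in \eqref{eq:Lsigma}, but it differs from $\bfe_{\sigma-1}+\dots+\bfe_{s-2}$ by $\bfe_0+\dots+\bfe_{\sigma-2}$, which already lies in the span. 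Hence the $h$-dependent kernel words are exactly the functionals $h\mapsto\langle h,\ell\rangle$ with $\ell\in L_\sigma$.

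\textbf{The equivalence.} Two coordinates agree on all kernel words if and only if $\chi(\zz)=\chi(\zz')$ and $\langle h+h',\ell\rangle=0$ for all $\ell\in L_\sigma$, that is, $h+h'\in L_\sigma^\perp$. The generators listed in \eqref{eq:Lsigma} are clearly independent (disjoint supports or distinct leading positions), so $\dim L_\sigma=\sigma$.

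\textbf{Surjectivity and block count.} For surjectivity of $\chi$, given $c\in\Z_2^{m-1}$ take $z_i=2^{j_i-1}c_i\in T_{j_i}$ and $z_1=1$. This is a legitimate column by \eqref{eq:z2s-columns}, and it has $\chi(\zz)=c$. The blocks are therefore indexed by pairs (value of $\chi$, coset of $L_\sigma^\perp$ in $\Z_2^{s-1}$). There are $2^{m-1}\cdot2^{\sigma}$ of them, which equals $2^{\kernel(\bar H)-1}$ by \eqref{eq:z2s-kernel}.

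\textbf{Main obstacle.} The expected main point of care is the bookkeeping of digits in the constant words. One must check that $2^\nu$ and $2^{s-1}-1$ have top digit zero, which needs $\sigma-2\leq s-2$, and one must handle the case $\sigma=1$, where the middle family is empty and $L_1$ is spanned by the all-one vector of $\Z_2^{s-1}$. One must also check that $\chi_i$ is well defined, which is already noted before the lemma.
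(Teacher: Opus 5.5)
Your proposal is correct and follows essentially the same route as the paper: you evaluate the basis \eqref{eq:z2s-kernel-basis} at $(\zz,h)$, split the values into the $\zz$-part $\chi_i(\zz)$ and the $h$-part governed by $L_\sigma$, and count the $2^{m-1}\cdot2^{\sigma}$ label pairs; your explicit preimage $z_i=2^{j_i-1}c_i$ for surjectivity is just a concrete form of the paper's parity argument. One small correction: the place where Lemma~\ref{lem:sigma-bound} is really needed is $\sigma\leq s-1$, which makes the last generator $\bfe_{\sigma-1}+\cdots+\bfe_{s-2}$ nonzero and hence gives $\dim L_\sigma=\sigma$ (disjoint supports alone do not suffice); the condition $\sigma-2\leq s-2$ that you flag is automatic.
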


\begin{proof}
By \eqref{eq:kernel-relation} and the convention recorded after Definition
\ref{def:kernel-block-profile}, two binary coordinates lie in the same kernel
block precisely when every vector of the basis \eqref{eq:z2s-kernel-basis}
takes the same value on them.  We therefore evaluate that basis at a binary
coordinate $(\zz,h)$.

Consider first the $m$ basis vectors
$\Phi_s(\frac{\ord(\ww_i)}{2}\ww_i)$, and let $\ww_i$ have level $j_i$.  Then,
$\ord(\ww_i)=2^{s-j_i+1}$, and hence
$\frac{\ord(\ww_i)}{2}\ww_i=2^{\,s-j_i}\ww_i$.  Its entry at the additive
coordinate $\zz$ is $2^{\,s-j_i}z_i$, and, writing $z_i=2^{\,j_i-1}\zeta_i$ as above, this entry equals
$$
2^{\,s-j_i}\cdot2^{\,j_i-1}\zeta_i=2^{\,s-1}\zeta_i=2^{\,s-1}\chi_i(\zz),
$$
the last equality holding by \eqref{eq:parity-identity}.  Its Gray image is
therefore the all-one vector of length $2^{s-1}$ when $\chi_i(\zz)=1$ and the
all-zero vector when $\chi_i(\zz)=0$; in other words, the $i$th basis vector
takes at $(\zz,h)$ the value $\chi_i(\zz)$, which does not depend on $h$.  For
$i=1$, we have $j_1=1$ and $z_1=1$, so $\chi_1(\zz)=1$ for every $\zz$ and the
first basis vector is $\one$, which separates no two coordinates.

Consider next the last $\sigma$ basis vectors.  Each of them has the form
$\Phi_s(\lambda\one)$ for one of the $\sigma$ elements $\lambda$ listed in
\eqref{eq:z2s-kernel-basis}, namely $\lambda=2^{\nu}$ with
$0\leq\nu\leq\sigma-2$ and $\lambda=2^{s-1}-1$.  Since $\lambda\one$ has the
same entry $\lambda$ at every additive coordinate,
\eqref{eq:z2s-coordinate-value} gives
$$
\bigl(\Phi_s(\lambda\one)\bigr)_{(\zz,h)}
=\lambda_{s-1}+\sum_{i\leq s-2}\lambda_ih_i ,
$$
which does not depend on $\zz$.

Putting the two computations together, the
coordinates $(\zz,h)$ and $(\zz',h')$ lie in the same block exactly when
$\chi_i(\zz)=\chi_i(\zz')$ for $2\leq i\leq m$, which is the first condition
of \eqref{eq:z2s-block-condition}, and when
$$
\lambda_{s-1}+\sum_{i\leq s-2}\lambda_ih_i
=\lambda_{s-1}+\sum_{i\leq s-2}\lambda_ih'_i,
\quad\text{that is,}\quad
\sum_{i\leq s-2}\lambda_i(h_i+h'_i)=0,
$$
for each of the $\sigma$ elements $\lambda$.  The left-hand side is the
standard inner product of $h+h'$ with the projection of the digit vector of
$\lambda$ onto the first $s-1$ coordinates, so the second condition says that
$h+h'$ is orthogonal to all those projections, that is, to their span.  Let us
identify that span.  For $\lambda=2^{\nu}$ with $\nu\leq\sigma-2$, the digit
vector is $\bfe_\nu$, and its projection is $\bfe_\nu$ itself, since
$\nu\leq\sigma-2\leq s-2$.  For $\lambda=2^{s-1}-1$, the binary digits equal
$1$ in the positions $0,\dots,s-2$ and $0$ in the position $s-1$, so the
projection is $\bfe_0+\cdots+\bfe_{s-2}$.  Subtracting from that last vector
the vectors $\bfe_0,\dots,\bfe_{\sigma-2}$, which are already in the span,
leaves $\bfe_{\sigma-1}+\cdots+\bfe_{s-2}$.  Hence, the span is exactly the
space $L_\sigma$ displayed in \eqref{eq:Lsigma}, and the second condition
reads $h+h'\in L_\sigma^{\perp}$.

Moreover, $\dim L_\sigma=\sigma$: the remaining
generator $\bfe_{\sigma-1}+\cdots+\bfe_{s-2}$ is supported in
$\{\sigma-1,\dots,s-2\}$, which is disjoint from $\{0,\dots,\sigma-2\}$
and nonempty because $\sigma\leq s-1$ by Lemma \ref{lem:sigma-bound}.  That
generator is therefore nonzero and independent of the previous ones, so
$\dim L_\sigma=(\sigma-1)+1=\sigma$.

It remains to see that every admissible label occurs, and to count the blocks.  By \eqref{eq:z2s-columns}, the
entries $z_i$, for $2\leq i\leq m$, range independently over $T_{j_i}$; and,
as $z_i=2^{\,j_i-1}\zeta_i$ runs over $T_{j_i}$, the element $\zeta_i$ runs
over $\Z_{2^{\,s-j_i+1}}$, which contains elements of both parities because
$s-j_i+1\geq1$.  Hence, $\chi_i(\zz)$ takes both values, independently for the
different $i$, and the map $\zz\mapsto(\chi_2(\zz),\dots,\chi_m(\zz))$ is onto
$\Z_2^{m-1}$.  Finally, by \eqref{eq:z2s-block-condition}, a block is
determined by the value of $(\chi_2(\zz),\dots,\chi_m(\zz))\in\Z_2^{m-1}$
together with the class of $h$ modulo $L_\sigma^{\perp}$, and the latter takes
$|\Z_2^{s-1}/L_\sigma^{\perp}|=|L_\sigma|=2^{\sigma}$ values. Since all these
combinations occur, the number of blocks is
$2^{m-1}2^{\sigma}=2^{\,\sigma+m-1}$, which is $2^{\,\kernel(\bar H)-1}$ by
\eqref{eq:z2s-kernel}.
\end{proof}

\begin{proposition}\label{prop:z2s-transitive}
Let $s\geq2$ and let $\bar H=\bar H^{a_1,\dots,a_s}$ be a nonlinear
$\Z_{2^s}$-linear Hadamard code, and let $\Gamma$ be the group of permutations
of the binary coordinates generated by the maps $\pi_\delta$ with $\delta\in
G$ and by the maps $\theta_\nu$ with $0\leq\nu\leq\sigma-1$.  Then, every
element of $\Gamma$ maps $\bar H$ onto $\bar H$, and $\Gamma$ acts
transitively on the set of kernel blocks of $\bar H$.  Consequently, for any
two kernel blocks $B$ and $B'$ of $\bar H$, the punctured codes
$\bar H|_{B}$ and $\bar H|_{B'}$ are permutation equivalent; in particular
they have the same length and the same local rank.
\end{proposition}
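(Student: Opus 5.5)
The first assertion is immediate. By Lemma \ref{lem:tau}, each $\pi_\delta$ maps $\bar H$ onto $\bar H$. By Lemma \ref{lem:theta}, each $\theta_\nu$ with $0\leq\nu\leq\sigma-1$ does too, and the nonlinearity hypothesis is exactly what that lemma needs. The set of permutations mapping $\bar H$ onto itself is closed under composition and inverses, so it is a group. It therefore contains $\Gamma$.

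For transitivity, I will use the description of the kernel blocks in Lemma \ref{lem:z2s-kernel-structure}. A block is labelled by a pair $(c,[h])$, where $c=\chi(\zz)\in\Z_2^{m-1}$ and $[h]$ is the class of $h$ in $\Z_2^{s-1}/L_\sigma^{\perp}$.

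First I track how the generators act on labels. The map $\pi_\delta$ fixes $h$ and sends $\zz$ to $\zz+\delta$. For $i\geq2$, write $\delta_i=2^{j_i-1}\epsilon_i$; then $\zeta_i$ becomes $\zeta_i+\epsilon_i$ modulo $2^{s-j_i+1}$, so $\chi(\zz+\delta)=\chi(\zz)+(\epsilon_i\bmod 2)_{i\geq2}$. Choosing $\epsilon_i\in\{0,1\}$, which is allowed since $2^{j_i-1}\in T_{j_i}$, realises every translation of $c$ by a vector of $\Z_2^{m-1}$. The map $\theta_\nu$ fixes $\zz$ and sends $h$ to $h+\bfe_\nu$. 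Both operations respect the block relation \eqref{eq:z2s-block-condition}, so $\Gamma$ permutes the blocks; this also follows from Lemma \ref{lem:profile-invariant} applied with $D=C=\bar H$.

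It remains to show that the $\bfe_\nu$ with $\nu\leq\sigma-1$ reach every class of $\Z_2^{s-1}/L_\sigma^{\perp}$. This is the main point to check. Equivalently, their span $E$ satisfies $E+L_\sigma^{\perp}=\Z_2^{s-1}$. Taking orthogonal complements, this is equivalent to $E^{\perp}\cap L_\sigma=\{\zero\}$.

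Now $E^{\perp}$ consists of the vectors supported in $\{\sigma,\dots,s-2\}$. A vector of $L_\sigma$ has the form $\sum_{\nu\leq\sigma-2}c_\nu\bfe_\nu+c(\bfe_{\sigma-1}+\cdots+\bfe_{s-2})$. For it to vanish on $\{0,\dots,\sigma-1\}$ we need every $c_\nu=0$ and $c=0$, because the coordinate $\sigma-1$ lies inside $\{0,\dots,s-2\}$ by Lemma \ref{lem:sigma-bound}. Hence $E^{\perp}\cap L_\sigma=\{\zero\}$. A dimension count agrees: $\dim E=\sigma=\dim L_\sigma$, so $E$ maps isomorphically onto the quotient.

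Since the two families act independently on the two components of the label, composing a suitable $\pi_\delta$ with a suitable product of the $\theta_\nu$ sends any label to any other. So $\Gamma$ is transitive on the blocks.

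Finally, let $\gamma\in\Gamma$ satisfy $\gamma(B)=B'$. Since $\gamma(\bar H)=\bar H$, the punctured code $\bar H|_{B'}=\gamma(\bar H)|_{\gamma(B)}$ is obtained from $\bar H|_B$ by relabelling positions through $\gamma|_B$, as in the discussion before \eqref{eq:relabel}. So the two punctured codes are permutation equivalent. Equal length is then clear, and equal local rank follows from \eqref{eq:relabel} applied to $V=\langle\bar H\rangle$, together with $\gamma(\langle\bar H\rangle)=\langle\bar H\rangle$ from Lemma \ref{lem:rank-kernel-invariance}.
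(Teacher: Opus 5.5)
Your proposal is correct and follows the paper's proof step by step: the generators preserve $\bar H$, the $\pi_\delta$ translate the $\chi$-component of the block label, the $\theta_\nu$ translate the Gray component, and the final relabelling argument is the same. The only cosmetic difference is in the Gray component. You show that $E=\Span\{\bfe_0,\dots,\bfe_{\sigma-1}\}$ maps onto $\Z_2^{s-1}/L_\sigma^{\perp}$ via the duality $E^{\perp}\cap L_\sigma=\{\zero\}$. The paper instead uses the explicit linear map $\sel$, whose kernel is $L_\sigma^{\perp}$, and checks that it sends $\bfe_0,\dots,\bfe_{\sigma-1}$ to the standard basis of $\Z_2^{\sigma}$.
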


\begin{proof}
By Lemmas \ref{lem:tau} and \ref{lem:theta}, every
generator of $\Gamma$ maps $\bar H$ onto $\bar H$, hence so does every element
of $\Gamma$.  Applying Lemma \ref{lem:profile-invariant} with $C=D=\bar H$ and
with the translation vector $\yy=\zero$, we conclude that every element of
$\Gamma$ permutes the kernel blocks of $\bar H$.  It is therefore enough to
prove that $\Gamma$ acts transitively on the set of blocks.

We first attach a convenient label to the blocks.  Taking the inner products
of a vector $h=(h_0,\dots,h_{s-2})$ of $\Z_2^{s-1}$ with the basis vectors of
$L_\sigma$ displayed in \eqref{eq:Lsigma} defines the map
\begin{equation}\label{eq:sel}
\sel:\Z_2^{s-1}\rightarrow\Z_2^{\sigma},
\qquad
\sel(h)=\bigl(h_0,\dots,h_{\sigma-2},\
 h_{\sigma-1}+\cdots+h_{s-2}\bigr).
\end{equation}
Thus, $\sel(h)=\zero$ if and only if $h$ is orthogonal to every basis vector
of $L_\sigma$, and hence
$$
\{h\in\Z_2^{s-1}:\ \sel(h)=\zero\}=L_\sigma^{\perp} .
$$
Consequently, $h+h'$ lies in $L_\sigma^{\perp}$ if and only if
$\sel(h)=\sel(h')$.  Condition \eqref{eq:z2s-block-condition} of Lemma
\ref{lem:z2s-kernel-structure} then says that the block of a binary
coordinate $(\zz,h)$ is determined by the pair $\bigl(\chi(\zz),\sel(h)\bigr)$,
and, as shown at the end of that lemma, every such pair occurs.

We deal with the first component.  The maps $\pi_\delta$ fix $h$, so they do not
change $\sel(h)$, and they replace $\zz$ by $\zz+\delta$.  Since
$\chi_i$ is additive, being the composition of the division by $2^{\,j_i-1}$
with reduction modulo $2$, this replaces $\chi(\zz)$ by
$\chi(\zz)+\chi(\delta)$.  As $\delta_i$ runs over
$T_{j_i}=2^{\,j_i-1}\Z_{2^s}$, the bit $\chi_i(\delta)$ takes both values, and
it does so independently for the different $i$, exactly as in the last part of
the proof of Lemma \ref{lem:z2s-kernel-structure}; hence
$\{\chi(\delta):\delta\in G\}=\Z_2^{m-1}$.  The maps $\pi_\delta$ therefore
act transitively on the first component of the block label, while fixing the
second one.

We deal now with the second component.  The maps $\theta_\nu$ fix $\zz$, so
they do not
change $\chi(\zz)$, and they replace $h$ by $h+\bfe_\nu$, hence $\sel(h)$ by
$\sel(h)+\sel(\bfe_\nu)$ because $\sel$ is linear.  By Lemma
\ref{lem:sigma-bound}, we have $\sigma-1\leq s-2$, so the vectors
$\bfe_0,\dots,\bfe_{\sigma-1}$ all lie in $\Z_2^{s-1}$ and all the
corresponding maps $\theta_\nu$ are available.  Evaluating \eqref{eq:sel} at
$h=\bfe_\nu$ shows that $\sel(\bfe_\nu)$ is the $\nu$th standard basis vector
of $\Z_2^{\sigma}$ for every $\nu$ with $0\leq\nu\leq\sigma-1$.  Indeed, for
$\nu\leq\sigma-2$ the last coordinate of $\sel(\bfe_\nu)$ vanishes and the
$\nu$th one equals $1$; and for $\nu=\sigma-1$ the
first $\sigma-1$ coordinates vanish and the last one equals $1$.  The $\sigma$
vectors $\sel(\bfe_\nu)$ are therefore the standard basis of $\Z_2^{\sigma}$
and they span it, so the maps $\theta_\nu$ act transitively on the second
component of the block label, while fixing the first one.

Given two blocks, we may therefore first apply a suitable $\pi_\delta$ to
match the first components of their labels, and then a suitable composition of
maps $\theta_\nu$ to match the second components, which does not disturb the
first ones.  Hence, $\Gamma$ acts transitively on the set of blocks.

It remains to compare the punctured codes.  Let $B$ and $B'$ be two kernel
blocks and let $\pi\in\Gamma$ be such that $\pi(B)=B'$, and let
$\pi_B:B\rightarrow B'$ be the restriction of $\pi$, which is a bijection.
As in the observation preceding \eqref{eq:relabel}, the vector
$\pi(\xx)|_{B'}$ is obtained from $\xx|_{B}$ by relabelling the positions
through $\pi_B$.  Since $\pi(\bar H)=\bar H$, the code $\bar H|_{B'}$ is therefore obtained from
$\bar H|_{B}$ by that relabelling, so the two are permutation equivalent.  In
particular $|B|=|B'|$, and the two punctured codes have the same dimension of
the linear span, which by \eqref{eq:span-puncture} is the local rank. \qed
\end{proof}

\begin{theorem}\label{thm:z2s-profile-constant}
Let $s\geq2$ and let $\bar H=\bar H^{a_1,\dots,a_s}$ be a nonlinear
$\Z_{2^s}$-linear Hadamard code of length $2^t$.  Then, all the kernel blocks
of $\bar H$ have the same size and the same local rank.  Equivalently, there
is an integer $\rho_{a_1,\dots,a_s}$ such that
\begin{equation}\label{eq:z2s-profile}
\cR\bigl(\bar H^{a_1,\dots,a_s}\bigr)
=\bigl\{\!\bigl\{\rho_{a_1,\dots,a_s}^{\,
 [2^{\,\kernel(\bar H)-1}]}\bigr\}\!\bigr\},
\end{equation}
that is, the kernel-block rank profile of $\bar H$ is constant.
\end{theorem}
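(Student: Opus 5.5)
The theorem is essentially a repackaging of Proposition~\ref{prop:z2s-transitive} together with the block count of Lemma~\ref{lem:z2s-kernel-structure}, so the plan is to combine those two results rather than to compute anything new.

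First, by Proposition~\ref{prop:z2s-transitive}, the group $\Gamma$ generated by the translations $\pi_\delta$ of Lemma~\ref{lem:tau} and the Gray shifts $\theta_\nu$ of Lemma~\ref{lem:theta}, $0\leq\nu\leq\sigma-1$, maps $\bar H$ onto itself and acts transitively on $\cB(\K(\bar H))$. Fix two blocks $B,B'$ and pick $\pi\in\Gamma$ with $\pi(B)=B'$. Since $\pi(\bar H)=\bar H$, Lemma~\ref{lem:profile-invariant}, applied with $C=D=\bar H$ and $\yy=\zero$, gives $\rho_{\bar H}(B')=\rho_{\bar H}(\pi(B))=\rho_{\bar H}(B)$. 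Alternatively, one can argue directly: $\langle\bar H\rangle=\pi(\langle\bar H\rangle)$ by Lemma~\ref{lem:rank-kernel-invariance}, and then \eqref{eq:relabel} with $V=\langle\bar H\rangle$ and $S=B$ gives the same equality. Since $\pi$ restricts to a bijection $B\to B'$, we also get $|B|=|B'|$.

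Hence all blocks share one size and one local rank, and we call this common value $\rho_{a_1,\dots,a_s}$. The multiplicity in \eqref{eq:z2s-profile} is the number of blocks, which Lemma~\ref{lem:z2s-kernel-structure} gives as $2^{\kernel(\bar H)-1}$. This establishes the displayed multiset and its constancy. The size of each block follows as $2^{t}/2^{\kernel(\bar H)-1}$, although the statement does not require it.

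There is no real obstacle, since the substantive work, namely that each $\theta_\nu$ with $\nu\leq\sigma-1$ preserves the code and that these shifts together with the $\pi_\delta$ reach every block label $(\chi(\zz),\sel(h))$, was done in Lemma~\ref{lem:theta} and Proposition~\ref{prop:z2s-transitive}. The one point to state carefully is that the hypothesis of nonlinearity is needed so that the kernel basis of Proposition~\ref{prop:z2s-kernel-basis}, and therefore the block description and count of Lemma~\ref{lem:z2s-kernel-structure}, apply.
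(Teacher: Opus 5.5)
Your proposal is correct and follows the same route as the paper, which also just combines Proposition~\ref{prop:z2s-transitive} (all blocks have equal size and equal local rank) with the block count $2^{\kernel(\bar H)-1}$ from Lemma~\ref{lem:z2s-kernel-structure}. Your re-derivation of the equal local ranks through Lemma~\ref{lem:profile-invariant} only repeats the final assertion of that proposition, so it is harmless but not needed.
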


\begin{proof}
By Proposition \ref{prop:z2s-transitive}, all the kernel blocks of $\bar H$
have the same size and the same local rank, say $\rho_{a_1,\dots,a_s}$.
Since the number of blocks is $2^{\,\kernel(\bar H)-1}$ by Lemma
\ref{lem:z2s-kernel-structure}, we obtain \eqref{eq:z2s-profile}. \qed
\end{proof}

\begin{example}\label{ex:z2s-blocks}
Let $s=3$ and $(a_1,a_2,a_3)=(2,0,0)$, which makes $\bar H^{2,0,0}$ a
$\Z_8$-linear Hadamard code of length $2^5=32$; here $m=2$, both rows have
level $1$, and $\sigma=1$ because $a_1\geq2$.  The matrix
$\bar A^{2,0,0}$ has the eight columns $(1,j)^T$ with $j\in\Z_8$.  Since
$\sigma=1$, formula \eqref{eq:Lsigma} gives
$L_1=\Span\{\bfe_0+\bfe_1\}$, so $\sel(h)=h_0+h_1$ by \eqref{eq:sel} and, by
Lemma \ref{lem:z2s-kernel-structure}, the block of the binary coordinate
$(j,h)$ is determined by the pair $\bigl(j\bmod2,\ h_0+h_1\bigr)$.  There are
thus $4=2^{\kernel-1}$ blocks, in accordance with $\kernel(\bar H^{2,0,0})=3$
given by \eqref{eq:z2s-kernel}, and each of them contains
$32/4=8$ binary coordinates.  The permutations of Lemma \ref{lem:tau} are here
$(j,h)\mapsto(j+\delta,h)$ with $\delta\in\Z_8$, and taking $\delta$ odd
changes the first component of the label. The single permutation
$\theta_0:(j,h)\mapsto(j,h+\bfe_0)$ of Lemma \ref{lem:theta} changes the
second one.  Together they move any block onto any other, so the profile is
constant, as Theorem \ref{thm:z2s-profile-constant} asserts.
\end{example}

\section{The punctured code on a kernel block}\label{sec:descent}

Theorem \ref{thm:z2s-profile-constant} says that all the kernel blocks of
$\bar H^{a_1,\dots,a_s}$ carry the same local rank, but not what that common
value is.  We now identify it, in a form much stronger than a numerical
formula: the punctured code on a kernel block is the $\Z_{2^{s-1}}$-linear
Hadamard code $\bar H^{a_1,\dots,a_{s-1}}$, one alphabet lower.  This is what
makes the case $t_1=1$ of Theorem \ref{thm:cross-family} accessible for every
$s$.

Throughout the section, we assume that $\bar H^{a_1,\dots,a_s}$ is nonlinear
and that $a_1\geq2$, and then $\sigma=1$ by \eqref{eq:sigma}.  For $s\geq3$,
the second assumption implies the first, by the linearity criterion recalled
in Subsection \ref{subsec:z2s-family}, and for $s=2$, the two together mean
$a_1\geq3$.  The hypothesis $a_1\geq2$ costs nothing, as Remark
\ref{rem:chain-reduction} explains at the end of the section.  Since all the
blocks are equivalent to
one another by Theorem \ref{thm:z2s-profile-constant}, it is enough to work
with one of them, and we choose the most convenient one.

\begin{lemma}\label{lem:zero-block}
Let $\bar H^{a_1,\dots,a_s}$ be nonlinear with $a_1\geq2$.  Then, the kernel
block of $\bar H^{a_1,\dots,a_s}$ containing the binary coordinate
$\bigl((1,0,\dots,0),\zero\bigr)$ is
\begin{equation}\label{eq:zero-block}
B_0=\bigl\{(\zz,h):\ z_i\in 2^{j_i}\Z_{2^s}\ \text{ for } 2\leq i\leq m,
\ \text{ and }\ \wt_H(h)\ \text{is even}\bigr\}.
\end{equation}
\end{lemma}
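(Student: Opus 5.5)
This is a direct specialization of Lemma \ref{lem:z2s-kernel-structure}, so I will simply evaluate its two block conditions at the base point $\bigl((1,0,\dots,0),\zero\bigr)$. Since $a_1\geq2$ and $\bar H^{a_1,\dots,a_s}$ is nonlinear, \eqref{eq:sigma} gives $\sigma=1$. Then, by \eqref{eq:Lsigma}, the space $L_1$ is spanned by the single vector $\bfe_0+\cdots+\bfe_{s-2}$, the all-one vector of $\Z_2^{s-1}$. Its orthogonal complement $L_1^{\perp}$ is therefore the set of vectors $h\in\Z_2^{s-1}$ with $\sum_i h_i=0$, that is, those of even Hamming weight. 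With $h'=\zero$, the second condition of \eqref{eq:z2s-block-condition} reads $h\in L_1^{\perp}$, which is exactly the requirement $\wt_H(h)$ even.

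For the first condition, I compute $\chi(\zz_0)$ for $\zz_0=(1,0,\dots,0)$. For $i\geq2$ the entry is $0=2^{j_i-1}\cdot 0$, so $\chi_i(\zz_0)=0$. A coordinate $(\zz,h)$ therefore lies in the same block as the base point if and only if $\chi_i(\zz)=0$ for every $2\leq i\leq m$.

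It remains to translate $\chi_i(\zz)=0$ into $z_i\in2^{j_i}\Z_{2^s}$. Write $z_i=2^{j_i-1}\zeta_i$ with $\zeta_i$ determined modulo $2^{s-j_i+1}$. The condition says that $\zeta_i$ is even, and then $z_i\in 2^{j_i}\Z_{2^s}$. Conversely, if $z_i=2^{j_i}\xi$, then $\zeta_i\equiv 2\xi$ modulo $2^{s-j_i+1}$, and this is even because $s-j_i+1\geq1$. This is the only point needing care: the level $j_i=s$ must be handled consistently. There $2^{j_i}\Z_{2^s}=\{0\}$ and $\zeta_i\in\Z_2$, so both conditions reduce to $z_i=0$.

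Combining the two conditions gives exactly the set $B_0$ of \eqref{eq:zero-block}.
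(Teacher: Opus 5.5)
Your proposal is correct and follows the paper's proof: you specialise Lemma~\ref{lem:z2s-kernel-structure} with $\sigma=1$, identify $L_1^{\perp}$ as the even-weight vectors, and translate $\chi_i(\zz)=0$ into $z_i\in2^{j_i}\Z_{2^s}$. Your explicit treatment of the converse direction and of the level-$s$ rows is slightly more careful than the paper's one-line translation, but the argument is the same.
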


\begin{proof}
Since $a_1\geq2$, we have $\sigma=1$, so the middle family in
\eqref{eq:Lsigma} is empty and
$L_1=\Span\{\bfe_0+\bfe_1+\cdots+\bfe_{s-2}\}$.  Write
$\zz_\ast=(1,0,\dots,0)$ for the additive coordinate in the statement; its
entries satisfy $z_{\ast,i}=0$ for $i\geq2$, so $\chi_i(\zz_\ast)=0$ for
$2\leq i\leq m$.

By \eqref{eq:z2s-block-condition}, a binary coordinate $(\zz,h)$ lies in the
same block as $(\zz_\ast,\zero)$ precisely when $\chi_i(\zz)=\chi_i(\zz_\ast)
=0$ for $2\leq i\leq m$, and when $h+\zero=h$ belongs to $L_1^{\perp}$.  We
translate the two conditions.  The first one says that
$z_i/2^{\,j_i-1}$ is even, that is, that $z_i$ is a multiple of $2^{\,j_i}$;
since $z_i$ already lies in $T_{j_i}=2^{\,j_i-1}\Z_{2^s}$, this is exactly
$z_i\in2^{\,j_i}\Z_{2^s}$.  The second one says that the inner product of $h$
with the single generator $\bfe_0+\cdots+\bfe_{s-2}$ of $L_1$ vanishes, that
is, $h_0+h_1+\cdots+h_{s-2}=0$, which means that $\wt_H(h)$ is even.  This is
\eqref{eq:zero-block}.
\end{proof}

Deleting from $\bar A^{a_1,\dots,a_s}$ the $a_s$ rows of level $s$ leaves the
matrix $\bar A^{a_1,\dots,a_{s-1}}$, which therefore has $m'=m-a_s$ rows.  We
now show that puncturing on $B_0$ realizes exactly that deletion.

\begin{theorem}\label{thm:descent}
Let $s\geq2$ and let $\bar H^{a_1,\dots,a_s}$ be a nonlinear
$\Z_{2^s}$-linear Hadamard code with $a_1\geq2$.  Put $m'=m-a_s$ and write every binary coordinate of the block $B_0$ of \eqref{eq:zero-block} as $(\zz,h)$ with $z_i=2^{j_i}\zeta_i$, where
$\zeta_i\in\Z_{2^{s-j_i}}$ for $2\leq i\leq m$, and with $\wt_H(h)$ even,
and put $h'=(h_1,\dots,h_{s-2})\in\Z_2^{s-2}$, that is, $h'_\mu=h_{\mu+1}$
for $0\leq\mu\leq s-3$.  Then, the map
\begin{equation}\label{eq:descent-bijection}
\beta:\ (\zz,h)\longmapsto(\zz',h'),
\qquad
\zz'=\bigl(1,\,2^{\,j_2-1}\zeta_2,\,\dots,\,2^{\,j_{m'}-1}\zeta_{m'}\bigr)
\in\Z_{2^{s-1}}^{\,m'},
\end{equation}
is a bijection from $B_0$ onto the set of binary coordinates of the
$\Z_{2^{s-1}}$-linear Hadamard code $\bar H^{a_1,\dots,a_{s-1}}$, and it
carries the punctured code onto that code:
\begin{equation}\label{eq:descent}
\beta\Bigl(\bar H^{a_1,\dots,a_s}\big|_{B_0}\Bigr)
=\bar H^{a_1,\dots,a_{s-1}} .
\end{equation}
Here, for $s=2$, the code $\bar H^{a_1}$ over $\Z_2$ is understood to be the
binary linear Hadamard code of length $2^{\,a_1-1}$.
\end{theorem}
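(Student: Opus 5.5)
The plan is a direct computation with the coordinate formula \eqref{eq:z2s-coordinate-value}, made in two stages: first the coordinates, then the words.

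\emph{The map $\beta$ is a bijection.} On $B_0$, a row of level $s$ has $z_i\in2^{s}\Z_{2^s}=\{0\}$, so such rows vanish on $B_0$ and can be discarded; this is why only $m'=m-a_s$ entries survive in $\zz'$. For $2\le i\le m'$ we have $j_i\le s-1$, and the number $\zeta_i\in\Z_{2^{s-j_i}}$ is well defined from $z_i=2^{j_i}\zeta_i$. The element $2^{j_i-1}\zeta_i$ is then well defined in $\Z_{2^{s-1}}$, and it runs bijectively over the ideal $2^{j_i-1}\Z_{2^{s-1}}$, which is the level-$j_i$ set $T_{j_i}$ for the alphabet $\Z_{2^{s-1}}$. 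The additive part of $\beta$ is therefore a bijection onto the columns \eqref{eq:z2s-columns} of $\bar A^{a_1,\dots,a_{s-1}}$. For the Gray part, an $h$ of even weight is determined by $h'$ through $h_0=h'_0+\cdots+h'_{s-3}$, so $h\mapsto h'$ is a bijection onto $\Z_2^{s-2}$. For $s=2$, $h'$ is empty and the statement is read in $\Z_2$.

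\emph{Values on $B_0$.} Fix a message $\xx$ and write $x_1=x_{1,0}+2y$ with $x_{1,0}\in\{0,1\}$. Because $z_1=1$ and $z_i=2^{j_i}\zeta_i$ on $B_0$,
$$W=x_{1,0}+2\Bigl(y+\sum_{i=2}^{m'}x_i2^{j_i-1}\zeta_i\Bigr)=x_{1,0}+2W',$$
where $W'\in\Z_{2^{s-1}}$ is exactly the entry at $\zz'$ of the codeword of $\bar\cH^{a_1,\dots,a_{s-1}}$ with message $\xx'=(y,x_2,\dots,x_{m'})$, the $x_i$ reduced mod $2^{s-1}$. The digits therefore satisfy $W_0=x_{1,0}$ and $W_{\mu+1}=W'_\mu$ for $0\le\mu\le s-2$. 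Substituting these into \eqref{eq:z2s-coordinate-value} and using $h_0=\sum_\mu h'_\mu$ gives
$$\phi_s(W)_h=\phi_{s-1}(W')_{h'}+x_{1,0}\sum_{\mu=0}^{s-3}h'_\mu .$$

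\emph{Identifying the extra term.} The term $\sum_\mu h'_\mu$ is precisely the value at $(\zz',h')$ of $\Phi_{s-1}\bigl((2^{s-2}-1)\one\bigr)$, because the constant $2^{s-2}-1$ has digits $1$ in positions $0,\dots,s-3$ and top digit $0$.

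\emph{Both inclusions.} For $\supseteq$ in \eqref{eq:descent}, take $x_{1,0}=0$: every message $\xx'$ of the smaller code is reached by choosing $y$ and the $x_i$ freely, so every codeword of $\bar H^{a_1,\dots,a_{s-1}}$ appears. For $\subseteq$, the image of a punctured word is a codeword of $\bar H^{a_1,\dots,a_{s-1}}$, possibly plus $\Phi_{s-1}\bigl((2^{s-2}-1)\one\bigr)$, and it remains to see that this sum stays in the code. I expect this to be the main obstacle, and I would split it into cases. If $\bar H^{a_1,\dots,a_{s-1}}$ is nonlinear, then since $a_1\ge2$ we have $\sigma=1$, and Proposition \ref{prop:z2s-kernel-basis} puts $\Phi_{s-1}\bigl((2^{s-2}-1)\one\bigr)$ in the kernel. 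If it is linear, that vector is already a codeword, because $\ww_1=\one$ makes every constant word additive, and a codeword plus a codeword stays in a linear code. For $s=2$ the extra term is empty and the lower code is the binary linear Hadamard code of length $2^{a_1-1}$.
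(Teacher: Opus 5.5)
Your proposal is correct and follows essentially the same approach as the paper. It uses the same entrywise bijection, with the level-$s$ rows vanishing on $B_0$ and even-weight $h$ recovered from $h'$. It uses the same decomposition $W=\varepsilon+2W'$, giving $\phi_s(W)_h=\phi_{s-1}(W')_{h'}+\varepsilon\,\wt_H(h')$. It identifies the extra term as $\Phi_{s-1}\bigl((2^{s-2}-1)\one\bigr)$, which lies in the kernel as the last basis vector of Proposition~\ref{prop:z2s-kernel-basis}, or in the code when that code is linear.

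The differences are cosmetic. You prove the two inclusions separately instead of writing the image as the union $\{\Phi_{s-1}(\uu')+\varepsilon\vv\}$. You also reduce the coefficients modulo $2^{s-1}$ rather than $2^{s-j_i}$, which is harmless.
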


\begin{proof}
The proof has three parts: first we check that $\beta$ is a bijection onto the
announced set, then we compute how the values of the codewords transform, and
finally we identify the resulting set of binary words.

We begin with the bijectivity of $\beta$.  Let
$T'_j=2^{\,j-1}\Z_{2^{s-1}}$ denote the sets \eqref{eq:Ti} for the alphabet
$\Z_{2^{s-1}}$. By \eqref{eq:z2s-columns} applied to
$\bar A^{a_1,\dots,a_{s-1}}$, the columns of that matrix are exactly the
vectors $(1,z'_2,\dots,z'_{m'})$ whose entry in a row of level $j$ lies in
$T'_j$.  We compare the two sides entry by entry.

Fix $i$ with $2\leq i\leq m$ and suppose first that $j_i\leq s-1$, which means
that the $i$th row survives in $\bar A^{a_1,\dots,a_{s-1}}$.  The condition
$z_i\in2^{\,j_i}\Z_{2^s}$ of \eqref{eq:zero-block} says that
$z_i=2^{\,j_i}\zeta_i$; here $\zeta_i$ is determined modulo $2^{\,s-j_i}$,
because $2^{\,j_i}\zeta_i$ is computed modulo $2^{s}$, so $\zeta_i$ runs
exactly over $\Z_{2^{\,s-j_i}}$, a set of $2^{\,s-j_i}$ elements.  On the
other side, the entry $z'_i=2^{\,j_i-1}\zeta_i$ runs over
$T'_{j_i}=2^{\,j_i-1}\Z_{2^{s-1}}$, which has
$2^{\,(s-1)-(j_i-1)}=2^{\,s-j_i}$ elements, and it does so bijectively,
because $2^{\,j_i-1}\zeta_i$ modulo $2^{s-1}$ determines and is determined by
$\zeta_i$ modulo $2^{\,s-j_i}$.  Suppose now that $j_i=s$, which happens for
the last $a_s$ rows.  Then, $T_s=\{0,2^{s-1}\}$ and the condition
$z_i\in2^{s}\Z_{2^s}=\{0\}$ forces $z_i=0$, hence $\zeta_i=0$; those rows are
therefore constant on $B_0$, in agreement with the
fact that $\bar A^{a_1,\dots,a_{s-1}}$ has no row of level $s$.  This is why
they are simply omitted in \eqref{eq:descent-bijection}.

It remains to treat the Gray index.  As $\wt_H(h)$ is even, the entry
$h_0=h_1+\cdots+h_{s-2}$ is determined by the other ones, so $h\mapsto h'$ is
a bijection from the even-weight vectors of $\Z_2^{s-1}$ onto $\Z_2^{s-2}$. Combining the two
observations, $\beta$ is a bijection from $B_0$ onto the set of pairs
$(\zz',h')$ with $\zz'$ a column of $\bar A^{a_1,\dots,a_{s-1}}$ and
$h'\in\Z_2^{s-2}$, that is, onto the set of binary coordinates of
$\bar H^{a_1,\dots,a_{s-1}}$.

We turn to the values of the codewords.  Let $\xx=(x_1,\dots,x_m)$ be a message and let
$\uu\in\bar\cH^{a_1,\dots,a_s}$ be the corresponding additive codeword.  At an
additive coordinate $\zz$ of $B_0$, we may substitute $z_1=1$ and
$z_i=2^{\,j_i}\zeta_i$ into \eqref{eq:z2s-symbol}, which gives
$$
W:=\uu(\zz)=x_1+\sum_{i=2}^{m}x_iz_i
        =x_1+\sum_{i=2}^{m}2^{\,j_i}x_i\zeta_i
        =x_1+2\sum_{i=2}^{m}2^{\,j_i-1}x_i\zeta_i \pmod{2^s},
$$
where the exponents $j_i-1$ are nonnegative because every level is at least
$1$.  Since $2^{\,j_i}x_i\zeta_i$ modulo $2^s$ depends on $x_i$ only through
$x_i$ modulo $2^{\,s-j_i}$, we may replace $x_i$ by
$u_i=x_i\bmod 2^{\,s-j_i}$ without changing $W$.  Writing also
$x_1=\varepsilon+2q$
with $\varepsilon\in\{0,1\}$ and $q\in\Z_{2^{s-1}}$, we obtain
\begin{equation}\label{eq:W-and-Wprime}
W=\varepsilon+2W',\qquad
W':=q+\sum_{i=2}^{m}2^{\,j_i-1}u_i\zeta_i\ \in\Z_{2^{s-1}} .
\end{equation}
Reading off the binary digits of both sides of $W=\varepsilon+2W'$ gives
$W_0=\varepsilon$ and $W_\nu=W'_{\nu-1}$ for $1\leq\nu\leq s-1$.  We can now
compute the value of the codeword at the binary coordinate $(\zz,h)$.  Using
\eqref{eq:z2s-coordinate-value}, then the substitution just made, and finally
$h_0=h_1+\cdots+h_{s-2}=\sum_{\mu=0}^{s-3}h'_\mu$, we get
\begin{equation}\label{eq:descent-value}
\begin{aligned}
\phi_s(W)_h
&=W_{s-1}+h_0W_0+\sum_{\nu=1}^{s-2}h_\nu W_\nu\\
&=W'_{s-2}+\varepsilon\,h_0+\sum_{\nu=1}^{s-2}h_\nu W'_{\nu-1}\\
&=W'_{s-2}+\sum_{\mu=0}^{s-3}h'_\mu W'_\mu
   +\varepsilon\sum_{\mu=0}^{s-3}h'_\mu
 =\phi_{s-1}(W')_{h'}+\varepsilon\,\wt_H(h'),
\end{aligned}
\end{equation}
where in the third line we set $\mu=\nu-1$, and where the last equality uses
\eqref{eq:z2s-coordinate-value} written for the alphabet $\Z_{2^{s-1}}$.

It remains to identify the punctured code.  We claim that $W'$ is exactly
the value, at the additive coordinate $\zz'$ of \eqref{eq:descent-bijection},
of the codeword of $\bar\cH^{a_1,\dots,a_{s-1}}$ determined by the message
$(q,u_2,\dots,u_{m'})$.  Indeed, the first row of
$\bar A^{a_1,\dots,a_{s-1}}$ is the all-one row and contributes $q$; the $i$th
row contributes $u_iz'_i=2^{\,j_i-1}u_i\zeta_i$; and the rows of level $s$,
which have been dropped, contributed nothing to $W'$ because their $\zeta_i$
vanish.  Comparing with \eqref{eq:W-and-Wprime} proves the claim.  Moreover, a
coefficient of a row of level $j$ over $\Z_{2^{s-1}}$ ranges over
$\Z_{2^{s-j}}$, because that row has order $2^{\,(s-1)-j+1}=2^{\,s-j}$, and
this is exactly the range of $u_i$.

Finally, as $\xx$ runs over all the messages of $\bar\cH^{a_1,\dots,a_s}$, the
pair $\bigl(\varepsilon,(q,u_2,\dots,u_{m'})\bigr)$ runs over $\Z_2\times\cM'$,
where $\cM'$ is the message set of $\bar\cH^{a_1,\dots,a_{s-1}}$, and its two
components vary independently: the map $x_1\mapsto(\varepsilon,q)$ is a
bijection
from $\Z_{2^s}$ onto $\Z_2\times\Z_{2^{s-1}}$, and $x_i\mapsto u_i$ is
surjective for every $i\geq2$.  Therefore, \eqref{eq:descent-value} gives
\begin{equation}\label{eq:descent-union}
\beta\Bigl(\bar H^{a_1,\dots,a_s}\big|_{B_0}\Bigr)
=\bigl\{\Phi_{s-1}(\uu')+\varepsilon\vv:\ \uu'\in\bar\cH^{a_1,\dots,a_{s-1}},
\ \varepsilon\in\Z_2\bigr\},
\end{equation}
where $\vv$ is the binary vector whose value at the binary coordinate
$(\zz',h')$ is $\wt_H(h')$ read modulo $2$.

It remains to prove that $\vv$ belongs to $\K(\bar H^{a_1,\dots,a_{s-1}})$,
for then $\bar H^{a_1,\dots,a_{s-1}}+\vv=\bar H^{a_1,\dots,a_{s-1}}$ and the
union \eqref{eq:descent-union} collapses to $\bar H^{a_1,\dots,a_{s-1}}$,
which is \eqref{eq:descent}.  For $s=2$, this is trivial, because
$\Z_2^{s-2}$ reduces to a single point, $\wt_H(h')=0$ and $\vv=\zero$.  Let
$s\geq3$ and put $\lambda=2^{\,s-2}-1\in\Z_{2^{s-1}}$, whose binary digits
equal $1$ in the positions $0,\dots,s-3$ and $0$ in the position $s-2$.  By
\eqref{eq:z2s-coordinate-value} applied over $\Z_{2^{s-1}}$,
$$
\phi_{s-1}(\lambda)_{h'}=\lambda_{s-2}+\sum_{\mu=0}^{s-3}h'_\mu\lambda_\mu
=0+\sum_{\mu=0}^{s-3}h'_\mu=\wt_H(h'),
$$
so $\vv=\Phi_{s-1}(\lambda\one)$.  If $\bar H^{a_1,\dots,a_{s-1}}$ is linear,
then its kernel is the whole code, which contains
$\Phi_{s-1}(\lambda\ww_1)=\vv$ because $\ww_1=\one$.  Otherwise,
$\lambda=2^{\,s'-1}-1$ with $s'=s-1$, which is precisely the last vector of
the basis \eqref{eq:z2s-kernel-basis} of $\K(\bar H^{a_1,\dots,a_{s-1}})$
given by Proposition \ref{prop:z2s-kernel-basis}.  In both cases
$\vv\in\K(\bar H^{a_1,\dots,a_{s-1}})$, which completes the proof.
\end{proof}

\begin{corollary}\label{cor:local-rank-bound}
Let $s\geq2$ and let $\bar H^{a_1,\dots,a_s}$ be a nonlinear
$\Z_{2^s}$-linear Hadamard code with $a_1\geq2$, of length $2^t$ and kernel
dimension $\kappa$.  Then, its constant local rank is
\begin{equation}\label{eq:local-rank-general}
\rho_{a_1,\dots,a_s}=\rank\bigl(\bar H^{a_1,\dots,a_{s-1}}\bigr),
\end{equation}
which does not depend on $a_s$; every kernel block has exactly
$2^{\,t-\kappa+1}$ binary coordinates; and
\begin{equation}\label{eq:local-rank-bound}
\rho_{a_1,\dots,a_s}\ \geq\ t-\kappa+2,
\end{equation}
with equality if and only if $\bar H^{a_1,\dots,a_{s-1}}$ is linear, that is,
if and only if $s=2$, or $s=3$ and $a_1=2$.
\end{corollary}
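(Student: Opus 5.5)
The corollary follows almost directly from Theorem \ref{thm:descent} together with Theorem \ref{thm:z2s-profile-constant}. I will organise the proof in three steps: the value of the local rank, the block size, and then the bound with its equality case.

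\emph{Local rank.} By Theorem \ref{thm:z2s-profile-constant}, all kernel blocks carry the same local rank, so it suffices to compute it on the block $B_0$ of Lemma \ref{lem:zero-block}. By \eqref{eq:span-puncture}, $\rho(B_0)=\dim\langle\bar H|_{B_0}\rangle$. The bijection $\beta$ of Theorem \ref{thm:descent} is a relabelling of positions, so it preserves dimensions of spans; the argument is the same as for \eqref{eq:relabel}. By \eqref{eq:descent}, this dimension is $\rank(\bar H^{a_1,\dots,a_{s-1}})$, which proves \eqref{eq:local-rank-general}. Independence of $a_s$ is then evident.

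\emph{Block size.} There are $2^{\kappa-1}$ blocks by Lemma \ref{lem:z2s-kernel-structure}, and they have equal sizes summing to $2^t$. Hence each block has $2^{t-\kappa+1}$ coordinates. As a consistency check, $\beta$ maps $B_0$ bijectively onto the coordinates of $\bar H^{a_1,\dots,a_{s-1}}$.

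\emph{Bound.} Put $C'=\bar H^{a_1,\dots,a_{s-1}}$. It is a Hadamard code of length $2^{t'}$ with $2^{t'}=2^{t-\kappa+1}$, so $t'=t-\kappa+1$. Any binary code with $2\cdot 2^{t'}$ codewords spans a space of dimension at least $t'+1$, since $|C'|\le 2^{\rank(C')}$. This gives $\rho\ge t-\kappa+2$, with equality if and only if $|C'|=2^{\rank(C')}$, that is, if and only if $C'$ is linear: $C'$ then equals its span, and since $\zero\in C'$ a code equal to its span is linear.

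It remains to translate linearity of $C'$ into conditions on the parameters, using the criteria recalled in Subsection \ref{subsec:z2s-family}.
\begin{itemize}
\item For $s=2$, $C'=\bar H^{a_1}$ is the binary linear Hadamard code by the convention in Theorem \ref{thm:descent}, so equality always holds.
\item For $s=3$, $C'=\bar H^{a_1,a_2}$ is $\Z_4$-linear, and it is linear if and only if $a_1\le 2$. Given $a_1\ge2$, this means $a_1=2$.
\item For $s\ge4$, $C'$ is $\Z_{2^{s-1}}$-linear with $s-1\ge3$. Linearity requires $a_1=1$ by the criterion, which is excluded, so $C'$ is nonlinear and the inequality is strict.
\end{itemize}

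The main point needing care is the equality characterisation. One must confirm that equality $\rank=t'+1$ forces linearity; here $|C'|=2^{t'+1}$ and $\zero\in C'$ settle it. One must also apply the linearity criterion to the lowered alphabet correctly, in particular the $s=3$ case, where the criterion used is the $\Z_4$ one ($a_1\le2$) rather than the $\Z_{2^s}$ pattern for $s\ge3$.
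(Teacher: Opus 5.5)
Your proof is correct and follows the paper's argument essentially step by step. The local rank is computed on $B_0$ via the descent theorem plus transitivity, the block size comes from $2^{\kappa-1}$ equal blocks, and the bound with its equality case comes from $2^{t'+1}=|C'|\le 2^{\rank(C')}$ together with the linearity criteria. The only cosmetic difference is that you justify the invariance of the span dimension under $\beta$ by the relabelling observation behind \eqref{eq:relabel}, whereas the paper cites Lemma \ref{lem:rank-kernel-invariance}; yours is arguably the more apt justification, since $\beta$ maps between two different coordinate sets.
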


\begin{proof}
By Theorem \ref{thm:z2s-profile-constant}, all the blocks have the same local
rank, and therefore that common value may be computed on the particular
block $B_0$ of
Lemma \ref{lem:zero-block}.  By Theorem \ref{thm:descent}, the punctured code
$\bar H^{a_1,\dots,a_s}|_{B_0}$ is carried by the coordinate bijection $\beta$
onto $\bar H^{a_1,\dots,a_{s-1}}$, so the two codes have the same rank by
Lemma \ref{lem:rank-kernel-invariance}.   By \eqref{eq:span-puncture}, that rank is
$\dim\langle\bar H^{a_1,\dots,a_s}|_{B_0}\rangle
=\dim\langle\bar H^{a_1,\dots,a_s}\rangle|_{B_0}
=\rho_{a_1,\dots,a_s}$, which is \eqref{eq:local-rank-general}.  

For the size of the blocks, recall from Lemma \ref{lem:z2s-kernel-structure}
that there are $2^{\kappa-1}$ blocks. By Theorem
\ref{thm:z2s-profile-constant}, they all have the same size, so each of them
has $2^{t}/2^{\kappa-1}=2^{\,t-\kappa+1}$ binary coordinates.

We now prove the bound.  Through $\beta$, the block $B_0$ is the coordinate
set of $\bar H^{a_1,\dots,a_{s-1}}$, and that code has length
$2^{t'}$ with
$
t'=t-\kappa+1 .
$
Being a Hadamard code of that length, it has $2^{\,t'+1}$ codewords.  Any code
is contained in its linear span, so
$$
2^{\,t'+1}=\bigl|\bar H^{a_1,\dots,a_{s-1}}\bigr|
\leq\bigl|\langle \bar H^{a_1,\dots,a_{s-1}}\rangle\bigr|
=2^{\rank(\bar H^{a_1,\dots,a_{s-1}})}.
$$
Therefore, $\rank(\bar H^{a_1,\dots,a_{s-1}})\geq t'+1=t-\kappa+2$, which is
\eqref{eq:local-rank-bound} by \eqref{eq:local-rank-general}.  Equality holds
if and only if the code and its span have the same cardinality, that is, if
and only if $\bar H^{a_1,\dots,a_{s-1}}$ is linear.

It remains to translate that condition into a condition on $s$ and $a_1$,
using the linearity criteria recalled in Subsection \ref{subsec:z2s-family}.
If $s=2$, then $\bar H^{a_1}$ is by convention the binary linear Hadamard
code, so equality always holds.  If $s=3$, then $\bar H^{a_1,a_2}$ is a
$\Z_4$-linear Hadamard code, which is linear if and only if $a_1\leq2$;
combined with the hypothesis $a_1\geq2$, this gives $a_1=2$.  If $s\geq4$,
then $\bar H^{a_1,\dots,a_{s-1}}$ is a $\Z_{2^{s-1}}$-linear Hadamard code
with $s-1\geq3$, and such a code is linear only when its first parameter
equals $1$; since $a_1\geq2$, equality never holds.
\end{proof}

Corollary \ref{cor:local-rank-bound} determines the local rank whenever the
rank one alphabet lower is known, which by Remark \ref{rem:chain-reduction}
is the case for every code of Tables \ref{tab:small} and \ref{tab:large}
except $\bar H^{2,0,0,0,0}$; the value listed there was obtained with {\sc Magma}. 

The $\Z_2\Z_4\Z_8$-linear Hadamard codes obey a parallel rule, which we record next to
\eqref{eq:local-rank-bound} because it is what makes the comparison
transparent.

\begin{lemma}\label{lem:mixed-local-rank-bound}
Let $H^{t_1,t_2,t_3}$ be a nonlinear $\Z_2\Z_4\Z_8$-linear Hadamard code of
length $2^t$ and kernel dimension $\kappa=t_1+t_2+t_3$.  Then, every kernel
block has exactly $2^{\,t-\kappa+1}$ binary coordinates and, in the notation
of \eqref{eq:rho-def},
\begin{equation}\label{eq:mixed-local-rank-bound}
\rho_1=(t-\kappa+2)+\binom{t_1-1}{2},
\qquad
\rho_0=\rho_1+t_1-1 .
\end{equation}
In particular, $\rho_1=t-\kappa+2$ if and only if $t_1\leq2$, and the profile
is constant and equal to $t-\kappa+2$ if and only if $t_1=1$.
\end{lemma}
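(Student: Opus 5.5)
This is a bookkeeping consequence of Theorem \ref{thm:profile} and the length relation \eqref{eq:length-relation}; no new structural argument is needed.

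\emph{Block sizes.} By Theorem \ref{thm:profile}, every block has size $2^{2t_1+t_2}$. By \eqref{eq:length-relation} and $\kappa=t_1+t_2+t_3$ (Theorem \ref{thm:recalled-mixed}), we have $t+1=3t_1+2t_2+t_3$, so
$$t-\kappa+1=(3t_1+2t_2+t_3-1)-(t_1+t_2+t_3)+1=2t_1+t_2 .$$
This is the claimed size $2^{t-\kappa+1}$. As a cross-check, the number of blocks is $2^{t_1+t_2+t_3-1}=2^{\kappa-1}$, and $2^{\kappa-1}\cdot 2^{t-\kappa+1}=2^t$.

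\emph{Local ranks.} Substitute $t-\kappa+2=2t_1+t_2+1$ into \eqref{eq:rho-def}:
$$\rho_1-(t-\kappa+2)=\binom{t_1+1}{2}-2t_1+1=\frac{t_1^2-3t_1+2}{2}=\frac{(t_1-1)(t_1-2)}{2}=\binom{t_1-1}{2}.$$
The identity $\rho_0-\rho_1=t_1-1$ is already stated in Theorem \ref{thm:profile}; it also follows from $\binom{t_1+2}{2}-\binom{t_1+1}{2}=t_1+1$.

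\emph{Final claims.} Since $\binom{t_1-1}{2}=0$ exactly when $t_1-1\leq 1$, and $t_1\geq1$, we get $\rho_1=t-\kappa+2$ if and only if $t_1\leq2$. By Theorem \ref{thm:profile} the profile is constant if and only if $t_1=1$. In that case $\rho_0=\rho_1=t-\kappa+2$, which agrees with \eqref{eq:profile-t1-one}, since $t_2+3=2+t_2+1$. Conversely, a constant profile forces $t_1=1$. The main thing to get right is the binomial simplification.
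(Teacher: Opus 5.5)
Your proposal is correct and follows the paper's proof essentially step for step. You obtain the block size from $t-\kappa+1=2t_1+t_2$ via \eqref{eq:length-relation}, simplify $\rho_1-(t-\kappa+2)$ to $\binom{t_1-1}{2}$, and read the remaining claims off Theorem \ref{thm:profile}.
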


\begin{proof}
By Theorem \ref{thm:profile}, the blocks all have $2^{\,2t_1+t_2}$ binary
coordinates, so we only have to check that the exponent is $t-\kappa+1$.
Using \eqref{eq:length-relation} and $\kappa=t_1+t_2+t_3$, we have 
$
t-\kappa+1=(3t_1+2t_2+t_3-1)-(t_1+t_2+t_3)+1=2t_1+t_2 ,
$
as required.  Next, by \eqref{eq:rho-def},
$$
\rho_1-(t-\kappa+2)=t_2+2+\binom{t_1+1}{2}-(2t_1+t_2+1)
=1+\frac{t_1^2+t_1}{2}-2t_1
=\frac{t_1^2-3t_1+2}{2}=\binom{t_1-1}{2},
$$
which is the first identity of \eqref{eq:mixed-local-rank-bound}, and
$\rho_0-\rho_1=t_1-1$ by Theorem \ref{thm:profile}, which is the second one.
Finally, $\binom{t_1-1}{2}=0$ exactly when $t_1-1\leq1$, that is, when
$t_1\leq2$, and the profile is constant exactly when $t_1=1$, again by Theorem
\ref{thm:profile}; in that case $\binom{t_1-1}{2}=0$ and the constant value is
$t-\kappa+2$.
\end{proof}

We close the section by recording the $\Z_8$ values, which we will quote
explicitly, and the reduction that removes the hypothesis $a_1\geq2$.

\begin{corollary}\label{cor:z8-local-rank}
Let $\bar H^{a,b,c}$ be a nonlinear $\Z_8$-linear Hadamard code with $a\geq2$.
Then, all its kernel blocks have size $2^{\,2a+b-1}$ and
\begin{equation}\label{eq:z8-local-rank}
\rho_{a,b,c}=b+1+\binom{a+1}{2}.
\end{equation}
\end{corollary}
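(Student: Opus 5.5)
This is the case $s=3$ of Corollary \ref{cor:local-rank-bound}, so the plan is to specialise it and then insert the known rank of the $\Z_4$-linear Hadamard code one alphabet lower. The hypotheses match: $\bar H^{a,b,c}$ is nonlinear with $a_1=a\geq2$. Corollary \ref{cor:local-rank-bound} then gives two facts. First, every block has $2^{t-\kappa+1}$ coordinates. Second, $\rho_{a,b,c}=\rank(\bar H^{a,b})$, where $\bar H^{a,b}$ is a $\Z_4$-linear Hadamard code.

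For the block size, we compute $t-\kappa+1$. By \eqref{eq:z2s-length} with $s=3$ we have $t+1=3a+2b+c$. Since $a\geq2$, we have $\sigma=1$, so \eqref{eq:z2s-kernel} gives $\kappa=1+a+b+c$. Hence
$$t-\kappa+1=(3a+2b+c-1)-(1+a+b+c)+1=2a+b-1,$$
which is the claimed size $2^{\,2a+b-1}$.

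For the local rank, we need $\rank(\bar H^{a,b})$ with $s=2$.

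\emph{Case $a=2$.} Here $\bar H^{2,b}$ is linear, so Corollary \ref{cor:local-rank-bound} gives equality: $\rho=t-\kappa+2=2a+b=b+4$. This agrees with $b+1+\binom{3}{2}=b+4$.

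\emph{Case $a\geq3$.} Here $\bar H^{a,b}$ is a nonlinear $\Z_4$-linear Hadamard code, and we need its rank.
\begin{itemize}
\item The first option is the known formula of \cite{Kro:2001:Z4_Had_Perf,PRV06}, $\rank=\binom{a}{2}+2a+b-1$ (in the $\Z_4$ parametrisation with $a$ rows of order $4$ and $b$ of order $2$). This gives $\binom{a}{2}+a+1+b=b+1+\binom{a+1}{2}$, as required.
\item The cleanest internal route avoids importing that parametrisation. Apply Theorem \ref{thm:recalled-equivalence} with $s=2$ and $\ell=1$ to see that $\bar H^{a,b}$ with $b\geq1$ is equivalent to $\bar H^{1,a-1,b-1}$, or to $\Z_2\Z_4$ codes via \cite{KV2015}. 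Alternatively, read the rank from \eqref{eq:z2z4-invariants} after matching $\bar H^{a,b}$ with $H^{U,V}$ of equal length and kernel dimension: $U=a-1$, $V=b+1$, since $2U+V=2a+b-1$ and $U+V=a+b$. Then $\rank=V+2U+\binom{U}{2}=b+1+2(a-1)+\binom{a-1}{2}$. Using $2(a-1)+\binom{a-1}{2}=\binom{a+1}{2}-1$, this gives $\rho=b+\binom{a+1}{2}$.
\end{itemize}

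The second route is off by one from the statement. This signals that the correct matching is not simply by kernel dimension: a $\Z_4$ code $\bar H^{a,b}$ has $\kappa=a+b+1$, which differs from that of the $\Z_2\Z_4$ family with these parameters. I expect this identification to be the main obstacle. The cross-check I would use is the $s=3$ rank formula \eqref{eq:z8-rank} with $(a,b,c)$ replaced by $(1,a-1,b-1)$, together with Theorem \ref{thm:recalled-equivalence} read downward, which gives $\bar H^{1,a-1,b-1}\simeq\bar H^{a,b}$. Indeed, \eqref{eq:z8-rank} at $(1,a-1,b-1)$ yields
$$\tfrac{48}{24}+\tfrac{a-1}{2}(2+a-1+1)+b-1+1=2+\tfrac{(a-1)(a+2)}{2}+b=b+1+\binom{a+1}{2}.$$
The last step is the identity $2+\frac{(a-1)(a+2)}{2}=\frac{a^2+a+2}{2}$. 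This requires $b\geq1$; the case $b=0$ I would handle with the $\Z_4$ rank formula $\binom{a}{2}+a+1+b$ directly. Equality of ranks under equivalence is Lemma \ref{lem:rank-kernel-invariance}.
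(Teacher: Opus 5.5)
Your block-size computation and the case $a=2$ are correct and agree with the paper. The case $a\geq3$, however, is where the content lies, and as written it is not established.

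Your second route is exactly the paper's argument. It breaks only because both matching equations are off by one. Equal length means $2U+V=2a+b$, since \eqref{eq:z2z4-length} and \eqref{eq:z2s-length} (with $s=2$) both express $t+1$, not $t$. Equal kernel dimension means $U+V=1+a+b$; this is the value $\kernel(\bar H^{a,b})=\sigma+a+b$ with $\sigma=1$ that you yourself quote. Solving gives $U=a-1\geq2$ and $V=b+2\geq1$. So $H^{U,V}$ is nonlinear, \eqref{eq:z2z4-invariants} applies, and $\rank(H^{U,V})=b+2+2(a-1)+\binom{a-1}{2}=b+1+\binom{a+1}{2}$. Your diagnosis that ``the correct matching is not simply by kernel dimension'' is therefore mistaken. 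By \cite{KV2015}, $\bar H^{a,b}$ is equivalent to some nonlinear $H^{U,V}$ of the same length. Among those, $\kernel(H^{U,V})=U+V=t+1-U$ determines $U$. Hence Lemma~\ref{lem:rank-kernel-invariance} forces exactly this $(U,V)$ and transfers the rank.

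Your other two routes do not repair this. In the first bullet, $\binom{a}{2}+2a+b-1$ is not the rank of $\bar H^{a,b}$: for $(a,b)=(3,0)$ it gives $8$, while $\rank(\bar H^{3,0})=7$ (cf.\ Example~\ref{ex:descent}). It also does not equal $\binom{a}{2}+a+1+b$; the two differ by $a-2$. So the step ``This gives'' is false and only lands on the target by accident. The correct expression, which is what the matching above produces, is $b+2a+\binom{a-1}{2}$.

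Your cross-check via Theorem~\ref{thm:recalled-equivalence} with $s=2$, $\ell=1$, and \eqref{eq:z8-rank} at $(1,a-1,b-1)$ is correct. It is a genuinely different route: it replaces \cite{KV2015} and the $\Z_2\Z_4$ rank formula by the $\Z_8$ rank formula and the equivalence theorem. But it requires $b\geq1$. For $b=0$, e.g.\ $\bar H^{3,0,c}$, you fall back on a ``$\Z_4$ rank formula $\binom{a}{2}+a+1+b$''. That is just the claim itself, supported only by the misquoted citation of your first bullet.

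Redoing the $H^{U,V}$ matching with the corrected equations closes this gap and covers every $b\geq0$ at once.
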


\begin{proof}
By \eqref{eq:local-rank-general}, $\rho_{a,b,c}$ is the rank of the
$\Z_4$-linear Hadamard code $\bar H^{a,b}$, whose length is $2^{t'}$ with
$t'+1=2a+b$ by \eqref{eq:z2s-length}.  We distinguish two cases.

If $a=2$, that code is linear, since a $\Z_4$-linear Hadamard code is linear
exactly when its first parameter is at most two.  A linear Hadamard code of
length $2^{t'}$ has rank $t'+1$, so here the rank is $t'+1=b+4$, which is
\eqref{eq:z8-local-rank} because $\binom{a+1}{2}=\binom{3}{2}=3$.

If $a\geq3$, the code $\bar H^{a,b}$ is nonlinear and, by \cite{KV2015}, it is
equivalent to a $\Z_2\Z_4$-linear Hadamard code $H^{U,V}$ with
$\alpha_1\not=0$; being equivalent, the two codes have the same length and the
same kernel dimension.  Comparing \eqref{eq:z2z4-length} and
\eqref{eq:z2z4-invariants} with \eqref{eq:z2s-length} and
\eqref{eq:z2s-kernel} for $s=2$ and $\sigma=1$ gives $2U+V=2a+b$ and  $U+V=1+a+b$,
and subtracting the second identity from the first one gives $U=a-1$. Therefore,
$V=b+2$.  Note that $U=a-1\geq2$ and $V=b+2\geq1$, so $H^{U,V}$ is indeed
nonlinear and \eqref{eq:z2z4-invariants} applies.  Hence,
\begin{equation*}
\begin{aligned}
\rho_{a,b,c}=\rank\bigl(H^{U,V}\bigr)
&=V+2U+\binom{U}{2}=b+2+2(a-1)+\binom{a-1}{2}\\
&=b+\frac{4a+(a-1)(a-2)}{2}=b+\frac{a^2+a+2}{2},
\end{aligned}
\end{equation*}
which equals $b+1+\binom{a+1}{2}$, as claimed.

In both cases, finally, the block size is $2^{\,t-\kappa+1}$ by Corollary
\ref{cor:local-rank-bound}, and here $t+1=3a+2b+c$ by \eqref{eq:z2s-length}
and $\kappa=1+a+b+c$ by \eqref{eq:z2s-kernel}, so
$t-\kappa+1=(3a+2b+c-1)-(1+a+b+c)+1=2a+b-1$.
\end{proof}

\begin{remark}\label{rem:chain-reduction}
The hypothesis $a_1\geq2$ costs nothing.  Applying Theorem
\ref{thm:recalled-equivalence} with $\ell=1$ to a $\Z_{2^{s-1}}$-linear
Hadamard code $\bar H^{u_1,\dots,u_{s-1}}$ with $u_{s-1}\geq1$, we see that it
is permutation equivalent to the $\Z_{2^s}$-linear Hadamard code
$\bar H^{1,u_1-1,u_2,\dots,u_{s-2},u_{s-1}-1}$.  Read backwards, this says
that every $\Z_{2^s}$-linear Hadamard code whose first parameter is $1$, say
$\bar H^{1,b_2,\dots,b_s}$, is permutation equivalent to the
$\Z_{2^{s-1}}$-linear Hadamard code
$\bar H^{\,b_2+1,\,b_3,\dots,b_{s-1},\,b_s+1}$.  The process strictly decreases
$s$ and stops as soon as the first parameter is at least two, and it cannot
run past that point: a code $\bar H^{a_1}$ over $\Z_2$ is linear, and so is a
code $\bar H^{1,a_2}$ over $\Z_4$ by the criterion recalled in Subsection
\ref{subsec:z2s-family}, whereas permutation equivalence preserves
nonlinearity.  Iterating, every nonlinear $\Z_{2^s}$-linear Hadamard code is
therefore permutation equivalent to some $\bar H^{a_1,\dots,a_{s'}}$ with
$2\leq s'\leq s$ and $a_1\geq2$; the case $s'=2$ occurs exactly when the code
is equivalent to a
$\Z_4$-linear one, and then $a_1\geq3$, since the code is nonlinear.  By Lemma
\ref{lem:profile-invariant}, Corollary \ref{cor:local-rank-bound} therefore
describes the local rank of every nonlinear $\Z_{2^s}$-linear Hadamard code.
\end{remark}

\begin{example}\label{ex:descent}
Take the $\Z_{2^s}$-linear Hadamard code $\bar H^{2,0,0}$, of length $2^5$ and kernel dimension $3$, as in Example
\ref{ex:z2s-blocks}.  By Corollary \ref{cor:local-rank-bound}, its blocks have
$2^{\,t-\kappa+1}=2^{3}=8$ coordinates, and the punctured code on a block is the $\Z_4$-linear Hadamard code $\bar H^{2,0}$, of length $2^{\,t-\kappa+1}=8$.  That code is linear, so
$\rho_{2,0,0}=\rank(\bar H^{2,0})=4=t-\kappa+2$ and
\eqref{eq:local-rank-bound} is an equality; formula
\eqref{eq:z8-local-rank} gives the same value, $0+1+\binom{3}{2}=4$.  Take now
$\bar H^{3,0,0}$, of length $2^8$ and kernel dimension $4$.  The punctured
code on a block is $\bar H^{3,0}$, of length $2^{8-4+1}=2^5$, which is
nonlinear of rank $7$; hence $\rho_{3,0,0}=7>6=t-\kappa+2$, and
\eqref{eq:z8-local-rank} agrees once more, since
$0+1+\binom{4}{2}=7$.  These are exactly the two situations of Corollary
\ref{cor:local-rank-bound}: equality for $s=3$ and $a_1=2$, strict inequality
as soon as $a_1\geq3$.
\end{example}

\section{The complete classification}\label{sec:cross-family}

We can now compare the $\Z_2\Z_4\Z_8$-linear Hadamard codes $H^{t_1,t_2,t_3}$ with the two families of Subsections \ref{subsec:z2z4-family} and \ref{subsec:z2s-family}.  The comparison splits into two cases according to the value of $t_1$, because, by Theorem
\ref{thm:profile}, the profile of $H^{t_1,t_2,t_3}$ is nonconstant exactly
when $t_1\geq2$, whereas the profiles of the two comparison families are
always constant by Proposition \ref{prop:z2z4-profile} and Theorem
\ref{thm:z2s-profile-constant}.  The case $t_1\geq2$ is therefore immediate,
and the case $t_1=1$ needs the finer information supplied by Section
\ref{sec:descent}.

\subsection{The codes with $t_1\geq2$}\label{subsec:t1-at-least-2}

\begin{theorem}\label{thm:t1-at-least-2}
Let $H^{t_1,t_2,t_3}$ be a $\Z_2\Z_4\Z_8$-linear Hadamard code with
$t_1\geq2$.  Then, $H^{t_1,t_2,t_3}$ is not equivalent to any $\Z_4$-linear
Hadamard code, nor to any $\Z_2\Z_4$-linear Hadamard code, nor to any
$\Z_{2^s}$-linear Hadamard code with $s\geq2$, of the same length.
\end{theorem}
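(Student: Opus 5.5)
The plan is to argue by contradiction, using the kernel-block rank profile as the separating invariant. Assume that $H^{t_1,t_2,t_3}$ with $t_1\geq2$ is equivalent to a code $D$ of the same length $2^t$ lying in one of the comparison families. Both codes contain $\zero$: $H^{t_1,t_2,t_3}$ is the Gray image of an additive code, and the same holds for $D$. Lemma \ref{lem:profile-invariant} then gives $\cR(H^{t_1,t_2,t_3})=\cR(D)$.

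First, I note that $H^{t_1,t_2,t_3}$ is nonlinear. By Theorem \ref{thm:recalled-mixed} it is linear only when $(t_1,t_2)=(1,0)$, which is excluded since $t_1\geq2$. Theorem \ref{thm:profile} therefore applies. The profile takes the value $\rho_0$ on $2^{t_2+t_3-1}\geq1$ blocks and the value $\rho_1$ on $(2^{t_1}-1)2^{t_2+t_3-1}\geq1$ blocks. Since $\rho_0-\rho_1=t_1-1\geq1$, both values really occur and they differ, so $\cR(H^{t_1,t_2,t_3})$ is not constant.

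Next, I show that $\cR(D)$ is always constant, splitting by the type of $D$.

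\textbf{$D$ linear.} This can happen inside any of the families. Then $\cR(D)=\{\!\{1^{[2^t]}\}\!\}$ by Lemma \ref{lem:linear-profile}, which is constant. Alternatively, linearity is preserved by equivalence, so $D$ cannot be equivalent to the nonlinear $H^{t_1,t_2,t_3}$ at all.

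\textbf{$D$ a nonlinear $\Z_4$-linear code.} By \cite{KV2015}, recalled in Subsection \ref{subsec:z2z4-family}, $D$ is equivalent to a nonlinear $H^{U,V}$ with $U\geq2$, $V\geq1$. Composing equivalences and applying Lemma \ref{lem:profile-invariant} again, $\cR(D)=\cR(H^{U,V})$.

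\textbf{$D$ a nonlinear $\Z_2\Z_4$-linear code.} Every $\Z_2\Z_4$-additive Hadamard code with $\alpha_1\neq0$ and $\alpha_2\neq0$ is permutation equivalent to some $H^{U,V}$. A code with $\alpha_2=0$ is linear, and a code with $\alpha_1=0$ is $\Z_4$-linear, so both are covered by the previous cases. In the remaining case, Proposition \ref{prop:z2z4-profile} shows that $\cR(H^{U,V})$ is constant with value $U+1$.

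\textbf{$D$ a nonlinear $\Z_{2^s}$-linear code, $s\geq2$.} By \cite{KernelZ2s}, $D$ is permutation equivalent to some $\bar H^{a_1,\dots,a_s}$, which is nonlinear. Theorem \ref{thm:z2s-profile-constant} shows that its profile is constant.

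In every case $\cR(D)$ is constant while $\cR(H^{t_1,t_2,t_3})$ is not, which is the desired contradiction.

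No step is really an obstacle, since all the substantive work has already been done in Theorem \ref{thm:profile}, Proposition \ref{prop:z2z4-profile} and Theorem \ref{thm:z2s-profile-constant}. The only point needing care is the bookkeeping. I must make sure that each comparison family is reduced to the normalized representatives $H^{U,V}$ or $\bar H^{a_1,\dots,a_s}$ through equivalences, which preserve the profile. I must also make sure that the degenerate linear members, and the $\Z_2\Z_4$ codes with $\alpha_1=0$ or $\alpha_2=0$, are handled explicitly rather than silently assumed away.
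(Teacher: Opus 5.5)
Your proposal is correct and follows the paper's argument. The profile of $H^{t_1,t_2,t_3}$ is nonconstant by Theorem \ref{thm:profile}, every comparison code has constant profile by Lemma \ref{lem:linear-profile}, Proposition \ref{prop:z2z4-profile} and Theorem \ref{thm:z2s-profile-constant}, and Lemma \ref{lem:profile-invariant} then rules out equivalence. The one minor difference is the $\Z_4$-linear case: you route it through \cite{KV2015} and Proposition \ref{prop:z2z4-profile}, whereas the paper treats it directly as the case $s=2$ of Theorem \ref{thm:z2s-profile-constant}; either way is valid.
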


\begin{proof}
Since $t_1\geq2$, we have $(t_1,t_2)\not=(1,0)$, so the code $H^{t_1,t_2,t_3}$
is nonlinear by Theorem \ref{thm:recalled-mixed}, and by Theorem
\ref{thm:profile} its profile takes the two values $\rho_0$ and $\rho_1$ with
$\rho_0-\rho_1=t_1-1\geq1$.  Both values occur, because both multiplicities in
\eqref{eq:profile-formula} are positive; hence the profile is a nonconstant
multiset.

Let now $D$ be a comparison code of the same length, that is, a $\Z_4$-linear,
a $\Z_2\Z_4$-linear or a $\Z_{2^s}$-linear Hadamard code of length $2^t$.  We
show in every case that the profile of $D$ is constant.  If $D$ is linear,
then all its blocks are singletons and $\cR(D)=\{\!\{1^{[2^t]}\}\!\}$ by Lemma
\ref{lem:linear-profile}, which is constant.  If $D$ is a nonlinear
$\Z_2\Z_4$-linear Hadamard code, its profile is constant by Proposition
\ref{prop:z2z4-profile}.  If $D$ is a nonlinear $\Z_{2^s}$-linear Hadamard
code, its profile is constant by Theorem \ref{thm:z2s-profile-constant}.
Finally, a $\Z_4$-linear Hadamard code is the case $s=2$.

Since the profile is an equivalence invariant by Lemma
\ref{lem:profile-invariant}, the two codes are nonequivalent.
\end{proof}

\subsection{The codes with $t_1=1$}\label{subsec:t1-one}

When $t_1=1$, the profile of $H^{1,t_2,t_3}$ is constant by
\eqref{eq:profile-t1-one}, so it no longer separates automatically.  The
results of Section \ref{sec:descent} nevertheless settle almost all the cases
at once, and uniformly in $s$.  Indeed, by Lemma
\ref{lem:mixed-local-rank-bound}, the constant value of the profile of
$H^{1,t_2,t_3}$ is $t-\kappa+2$, whereas by Corollary \ref{cor:local-rank-bound}, a nonlinear
$\Z_{2^s}$-linear Hadamard code with $a_1\geq2$ attains that minimum only in
two very restricted situations.  Whenever the comparison code exceeds the
minimum, the two constant profiles are different and the codes are
nonequivalent; only the two extremal situations require a separate argument,
and they are handled by the rank.

We begin by observing that, once the length and the dimension of the kernel of
the two codes agree, the parameters of the comparison code are completely
determined.

\begin{lemma}\label{lem:t1-one-parameters}
Let $H^{1,t_2,t_3}$ be nonlinear, that is, with $t_2\geq1$, and let its length
be $2^t$ and its kernel dimension be $\kappa=1+t_2+t_3$.
\begin{enumerate}[label=\textup{(\roman*)}]
\item A nonlinear $\Z_2\Z_4$-linear Hadamard code $H^{U,V}$ of the same length
and the same kernel dimension exists if and only if $t_3\geq2$, and then
$(U,V)=(t_2+2,\,t_3-1)$.
\item A nonlinear $\Z_8$-linear Hadamard code $\bar H^{a,b,c}$ with $a\geq2$
of the same length and the same kernel dimension satisfies $b=t_2+3-2a$ and
$c=t_3+a-3$.
\end{enumerate}
\end{lemma}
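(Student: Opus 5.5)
The plan is to reduce both parts to linear algebra on the parameters. I will equate the length exponent and the kernel dimension of the two codes, and then check the nonlinearity constraints.

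For $H^{1,t_2,t_3}$, the relation \eqref{eq:length-relation} gives $t+1=3+2t_2+t_3$, and Theorem \ref{thm:recalled-mixed} gives $\kappa=1+t_2+t_3$.

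For part (i), I take \eqref{eq:z2z4-length} and \eqref{eq:z2z4-invariants}, which give $t+1=2U+V$ and $\kappa=U+V$ for a nonlinear $H^{U,V}$. Equating, I get
\[
2U+V=3+2t_2+t_3,\qquad U+V=1+t_2+t_3 .
\]
Subtracting the second identity from the first yields $U=t_2+2$, and then $V=t_3-1$. For the existence direction I recall that nonlinearity requires $U\geq2$ and $V\geq1$. The first condition holds automatically since $t_2\geq1$. The second is exactly $t_3\geq2$. Conversely, if $t_3\geq2$, then $H^{t_2+2,t_3-1}$ is a nonlinear code of the family, and by construction it has the same length and the same kernel dimension. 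This gives the equivalence.

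For part (ii), I use \eqref{eq:z2s-length} with $s=3$, which gives $t+1=3a+2b+c$. Since $a\geq2$, we have $\sigma=1$, so \eqref{eq:z2s-kernel} gives $\kappa=1+a+b+c$. Equating with the values for $H^{1,t_2,t_3}$, I get
\[
3a+2b+c=3+2t_2+t_3,\qquad a+b+c=t_2+t_3 .
\]
Subtracting gives $2a+b=3+t_2$, hence $b=t_2+3-2a$. Substituting back gives $c=t_2+t_3-a-b=t_3+a-3$.

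The computation is routine. The only point needing care is using the correct value of $\sigma$ (namely $\sigma=1$ because $a\geq2$) and the nonlinear kernel formulas, which is why the statement is restricted to nonlinear codes.
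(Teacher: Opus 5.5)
Your proposal is correct and follows the paper's proof essentially step by step. You equate $t+1$ and $\kappa$ for the two codes, subtract to obtain $U=t_2+2$, $V=t_3-1$ (respectively $b=t_2+3-2a$, $c=t_3+a-3$ using $\sigma=1$), and read off existence from the family constraint $V\geq1$, with nonlinearity guaranteed by $U\geq3$.
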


\begin{proof}
That $t_2\geq1$ follows from Theorem \ref{thm:recalled-mixed}: the code is
nonlinear exactly when $(t_1,t_2)\not=(1,0)$, and here $t_1=1$.  Note also
that, by \eqref{eq:length-relation} with $t_1=1$,
\begin{equation}\label{eq:t1one-length}
t+1=3+2t_2+t_3 .
\end{equation}

We prove item (i).  Equating the lengths of $H^{1,t_2,t_3}$ and $H^{U,V}$ means
equating the two expressions for $t+1$, namely \eqref{eq:t1one-length} and
\eqref{eq:z2z4-length}, which gives $3+2t_2+t_3=2U+V$.  Equating the kernel
dimensions and using \eqref{eq:z2z4-invariants} gives $1+t_2+t_3=U+V$.
Subtracting the second identity from the first one gives $U=t_2+2$, and then
$V=(1+t_2+t_3)-U=t_3-1$.  A $\Z_2\Z_4$-linear Hadamard code with
$\alpha_1\not=0$ has $V\geq1$, so such a code exists exactly when
$t_3-1\geq1$, that is, when $t_3\geq2$; and it is then nonlinear, because
$U=t_2+2\geq3$ and linearity would require $U=1$.

We prove item (ii).  Since $a\geq2$, we have $\sigma=1$ by \eqref{eq:sigma}, so
\eqref{eq:z2s-length} and \eqref{eq:z2s-kernel} read $t+1=3a+2b+c$ and
$\kappa=1+a+b+c$.  Equating them with \eqref{eq:t1one-length} and with
$\kappa=1+t_2+t_3$ gives $3+2t_2+t_3=3a+2b+c$ and $1+t_2+t_3=1+a+b+c.$ The second identity simplifies to $t_2+t_3=a+b+c$.  Subtracting it from the first one gives $3+t_2=2a+b$, so $b=t_2+3-2a$. Substituting this into $t_2+t_3=a+b+c$ gives $c=t_3+a-3$.
\end{proof}

\begin{proposition}\label{prop:t1-one-z2z4}
Let $H^{1,t_2,t_3}$ be nonlinear.  Then, $H^{1,t_2,t_3}$ is not equivalent to
any $\Z_4$-linear Hadamard code, nor to any $\Z_2\Z_4$-linear Hadamard code,
of the same length.
\end{proposition}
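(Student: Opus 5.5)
We reduce first to nonlinear $\Z_2\Z_4$-linear codes $H^{U,V}$ with $U\geq2$, $V\geq1$. By \cite{KV2015}, recalled in Subsection \ref{subsec:z2z4-family}, every $\Z_4$-linear Hadamard code is equivalent to a $\Z_2\Z_4$-linear Hadamard code with $\alpha_1\not=0$, so the $\Z_4$-linear case is covered by the $\Z_2\Z_4$-linear one. If the comparison code $D$ is linear, then it cannot be equivalent to $H^{1,t_2,t_3}$. Indeed, $H^{1,t_2,t_3}$ is nonlinear by Theorem \ref{thm:recalled-mixed}, and linearity of a code containing $\zero$ is preserved by permutation equivalence, because by Lemma \ref{lem:rank-kernel-invariance} the kernel dimension and the rank are preserved, and a code is linear exactly when its kernel dimension equals $\log_2$ of its size. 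Equivalently, a linear $D$ has kernel dimension $t+1$, which exceeds $\kappa=1+t_2+t_3$. So we may assume that $D=H^{U,V}$ is nonlinear.

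Next we use the invariants $(r,k)$. An equivalence forces equal kernel dimension, so Lemma \ref{lem:t1-one-parameters}(i) applies. If $t_3=1$, no such code exists and we are done. Otherwise $(U,V)=(t_2+2,\,t_3-1)$, and we compare ranks. By \eqref{eq:z2z4-invariants},
\[
\rank(H^{U,V})=t_3-1+2(t_2+2)+\binom{t_2+2}{2}.
\]
By \eqref{eq:rank-binomial} with $t_1=1$,
\[
\rank(H^{1,t_2,t_3})=t_3-1+4+3t_2+\binom{t_2+1}{2}.
\]
The difference is
\[
\binom{t_2+2}{2}-\binom{t_2+1}{2}-t_2=t_2+1-t_2=1\neq0,
\]
so the ranks always differ and the codes are nonequivalent by Lemma \ref{lem:rank-kernel-invariance}. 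The same conclusion is consistent with Proposition \ref{prop:recalled-cross-collisions}(i), which already says that a collision of $(r,k)$ requires $(t_1,t_2)=(2,1)$, impossible for $t_1=1$. Citing that result alone would give a one-line proof.

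Alternatively, the profile gives a direct argument. The constant value of $\cR(H^{U,V})$ is $U+1=t_2+3$ by Proposition \ref{prop:z2z4-profile}, and this coincides with \eqref{eq:profile-t1-one}. So the profile does not separate these codes, and the rank must be used.

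The only delicate step is the rank bookkeeping, namely checking that the rank formula \eqref{eq:rank-binomial} specialises correctly at $t_1=1$. The terms $4\binom{t_1}{2}$, $2\binom{t_1}{3}$ and $\binom{t_1}{4}$ vanish, and $t_2\binom{3}{2}=3t_2$. I would double-check this against the small case $H^{1,1,1}$ of length $2^4$ before concluding.
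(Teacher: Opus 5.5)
Your proof is correct and is essentially the paper's own argument. You reduce the $\Z_4$-linear case to the $\Z_2\Z_4$-linear one via \cite{KV2015}, discard linear comparison codes, use equal kernel dimensions and Lemma~\ref{lem:t1-one-parameters}(i) to force $(U,V)=(t_2+2,t_3-1)$, and check that the two ranks then differ by exactly one. Two side remarks need fixing: $H^{1,1,1}$ has length $2^5=32$, not $2^4$, since $t+1=3t_1+2t_2+t_3=6$; and your ``alternatively, the profile gives a direct argument'' paragraph is not an argument, since, as you yourself observe, both profiles equal $\{\!\{(t_2+3)^{[2^{t_2+t_3}]}\}\!\}$.
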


\begin{proof}
A linear comparison code is separated from $H^{1,t_2,t_3}$ by linearity. Let
therefore $H^{U,V}$ be a nonlinear $\Z_2\Z_4$-linear Hadamard code of the same
length.  If its kernel dimension differs from that of $H^{1,t_2,t_3}$, the two
codes are nonequivalent by Lemma \ref{lem:rank-kernel-invariance} and we are
done.  Assume then that the kernel dimensions agree.  By item (i) of Lemma
\ref{lem:t1-one-parameters}, we have $U=t_2+2$ and $V=t_3-1$, so, by
\eqref{eq:z2z4-invariants},
$\rank\bigl(H^{U,V}\bigr)=V+2U+\binom{U}{2}=(t_3-1)+2(t_2+2)+\binom{t_2+2}{2},$
whereas \eqref{eq:rank-binomial} with $t_1=1$ gives
$$
\rank\bigl(H^{1,t_2,t_3}\bigr)=(t_3-1)+4+3t_2+\binom{t_2+1}{2}.
$$
Subtracting the second expression from the first one, the terms $t_3-1$ cancel
and we get
$$
2t_2+4-4-3t_2+\binom{t_2+2}{2}-\binom{t_2+1}{2}
=-t_2+\frac{(t_2+2)(t_2+1)-(t_2+1)t_2}{2}
=-t_2+(t_2+1)=1 .
$$
The two ranks therefore differ by exactly one, so they are different, and the
two codes are nonequivalent by Lemma \ref{lem:rank-kernel-invariance}.
Finally, every $\Z_4$-linear Hadamard code is equivalent to a
$\Z_2\Z_4$-linear one with $\alpha_1\not=0$ and $\alpha_2\not=0$ by
\cite{KV2015}, so it is covered by the preceding argument.
\end{proof}

\begin{theorem}\label{thm:t1-one-z2s}
Let $H^{1,t_2,t_3}$ be nonlinear and let $\bar H$ be a nonlinear
$\Z_{2^s}$-linear Hadamard code of the same length $2^t$, with $s\geq2$.
Unless $t_2=1$ and $\bar H$ is permutation equivalent to $\bar H^{2,0,t-5}$,
the codes $H^{1,t_2,t_3}$ and $\bar H$ are nonequivalent.
\end{theorem}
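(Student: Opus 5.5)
First I will use Remark~\ref{rem:chain-reduction} and Lemma~\ref{lem:profile-invariant} to replace $\bar H$ by a permutation-equivalent code $\bar H^{a_1,\dots,a_{s'}}$ with $2\leq s'\leq s$ and $a_1\geq2$. Permutation equivalence preserves length, kernel dimension, rank and profile, so nothing is lost, and I rename $s'$ as $s$. If $\kernel(\bar H)\not=\kappa=1+t_2+t_3$, Lemma~\ref{lem:rank-kernel-invariance} already separates the codes, so I assume the two kernel dimensions agree. By Lemma~\ref{lem:mixed-local-rank-bound} (the case $t_1=1$), the profile of $H^{1,t_2,t_3}$ is constant with value $t-\kappa+2$. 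By Corollary~\ref{cor:local-rank-bound}, the profile of $\bar H$ is constant with value $\rho_{a_1,\dots,a_s}\geq t-\kappa+2$, the same $t$ and $\kappa$ appearing on both sides. Equality holds only when $s=2$, or when $s=3$ and $a_1=2$. Outside these two situations the profiles differ, and Lemma~\ref{lem:profile-invariant} gives nonequivalence.

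In the case $s=2$, the code $\bar H$ is a nonlinear $\Z_4$-linear Hadamard code. Proposition~\ref{prop:t1-one-z2z4} already shows that no such code is equivalent to $H^{1,t_2,t_3}$, so this case is finished.

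In the case $s=3$ and $a_1=2$, I apply Lemma~\ref{lem:t1-one-parameters}(ii) with $a=2$. It gives $b=t_2-1$ and $c=t_3-1$, so $\bar H=\bar H^{2,t_2-1,t_3-1}$. I then compare ranks. By \eqref{eq:z8-rank} with $a=2$, the first summand equals $(16-16+140+28)/24=7$, so
$\rank(\bar H)=7+\tfrac{b}{2}(b+7)+c+1$. By \eqref{eq:rank-binomial} with $t_1=1$,
$\rank(H^{1,t_2,t_3})=t_3+3+3t_2+\binom{t_2+1}{2}$. Substituting $b=t_2-1$ and $c=t_3-1$ and subtracting, the difference should reduce to a simple expression in $t_2$, namely $\tfrac{(t_2-1)(t_2+6)}{2}-\tfrac{(t_2-1)(t_2+6)}{2}+\dots$. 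The computation still has to be carried out, but I expect the ranks to coincide exactly when $t_2=1$. A quick check at $t_2=1$ gives $b=0$, both ranks equal to $t_3+7$, and $c=t_3-1=t-5$, which matches the exceptional pair $\bar H^{2,0,t-5}$ of Proposition~\ref{prop:recalled-cross-collisions}(ii). For $t_2\geq2$ the rank differs, so the codes are nonequivalent by Lemma~\ref{lem:rank-kernel-invariance}.

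The main obstacle is this rank comparison. It must hold for every $t_2\geq2$, and the requirement $b\geq0$ forces $t_2\geq1$ in the formula. If the difference does not factor cleanly, I will instead invoke Proposition~\ref{prop:recalled-cross-collisions}(ii) directly: among its five colliding pairs, the only one with $t_1=1$ and an $a=2$ partner is $(H^{1,1,t-4},\bar H^{2,0,t-5})$, since $(H^{1,5,t-12},\bar H^{4,0,t-11})$ has $a=4$. That proposition lists all rank and kernel collisions with $\Z_8$ codes, so every other case is separated.
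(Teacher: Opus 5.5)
Your proposal is essentially the paper's proof: the same reduction by Remark~\ref{rem:chain-reduction}, the same comparison of the two constant profiles through Lemma~\ref{lem:mixed-local-rank-bound} and Corollary~\ref{cor:local-rank-bound}, Proposition~\ref{prop:t1-one-z2z4} for $s'=2$, and a rank comparison with $\bar H^{2,t_2-1,t_3-1}$ when $s'=3$ and $a_1=2$. The computation you left open takes one line, and your displayed partial difference is not what the subtraction gives: $\rank(\bar H^{2,t_2-1,t_3-1})-\rank(H^{1,t_2,t_3})=4-3t_2+\tfrac{(t_2-1)(t_2+6)-t_2(t_2+1)}{2}=1-t_2$, which vanishes only for $t_2=1$, exactly as in the paper; your fallback through Proposition~\ref{prop:recalled-cross-collisions}(ii) would also close that case.
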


\begin{proof}
If the two codes have different kernel dimensions, they
are nonequivalent by Lemma \ref{lem:rank-kernel-invariance} and there is
nothing to prove; so assume that they have the same kernel dimension, say
$\kappa$.  By Remark \ref{rem:chain-reduction}, we may replace $\bar H$ by a
permutation equivalent code and assume that $\bar H=\bar H^{a_1,\dots,a_{s'}}$
with $2\leq s'\leq s$ and $a_1\geq2$; this changes neither the length, nor the
kernel dimension, nor the profile, by Lemmas \ref{lem:rank-kernel-invariance}
and \ref{lem:profile-invariant}.

We compare the two profiles.  By Theorem \ref{thm:profile}, the code
$H^{1,t_2,t_3}$ has $2^{\kappa-1}$ kernel blocks, by Lemma
\ref{lem:mixed-local-rank-bound}, they have size $2^{\,t-\kappa+1}$, and its
profile is constant and equal to $t-\kappa+2$, because $t_1=1$.  By Lemma
\ref{lem:z2s-kernel-structure}, the code $\bar H$ also has $2^{\kappa-1}$
kernel blocks, by Corollary \ref{cor:local-rank-bound}, they have size
$2^{\,t-\kappa+1}$, and its profile is constant with some value
$\rho\geq t-\kappa+2$, with equality if and only if $s'=2$, or $s'=3$ and
$a_1=2$.

If neither of those two equalities occurs, then $\rho>t-\kappa+2$, the two
constant profiles take different values, the two multisets are different, and
the codes are nonequivalent by Lemma \ref{lem:profile-invariant}.  Two cases
remain.

Suppose first that $s'=2$.  Then, $\bar H$ is a $\Z_4$-linear Hadamard code, and
Proposition \ref{prop:t1-one-z2z4} shows that it is not equivalent to
$H^{1,t_2,t_3}$.

Suppose finally that $s'=3$ and $a_1=2$; then, $\bar H=\bar H^{2,b,c}$ for some $b$
and $c$.  By item (ii) of Lemma \ref{lem:t1-one-parameters} with $a=2$, we get
$b=t_2+3-4=t_2-1$ and $c=t_3+2-3=t_3-1$; in particular, such a comparison code
exists only when $t_2\geq1$, which holds because $H^{1,t_2,t_3}$ is nonlinear.
We compare the ranks.  Substituting $(a,b,c)=(2,t_2-1,t_3-1)$ into
\eqref{eq:z8-rank}, and using
$(2^4-2\cdot2^3+35\cdot2^2+14\cdot2)/24=168/24=7$ and
$b(a^2+a+b+1)/2=(t_2-1)(t_2+6)/2$,
we obtain
$$
\rank\bigl(\bar H^{2,t_2-1,t_3-1}\bigr)
=7+\frac{t_2^2+5t_2-6}{2}+(t_3-1)+1
=7+\frac{t_2^2+5t_2-6}{2}+t_3 ,
$$
whereas, as computed in the proof of Proposition \ref{prop:t1-one-z2z4},
$$
\rank\bigl(H^{1,t_2,t_3}\bigr)=(t_3-1)+4+3t_2+\binom{t_2+1}{2}
=t_3+3+3t_2+\frac{t_2^2+t_2}{2}.
$$
Subtracting the second expression from the first one, the terms $t_3$ cancel
and
$$
7-3-3t_2+\frac{(t_2^2+5t_2-6)-(t_2^2+t_2)}{2}
=4-3t_2+\frac{4t_2-6}{2}=4-3t_2+2t_2-3=1-t_2 .
$$
Hence, the two ranks differ whenever $t_2\not=1$, and in that case the codes
are nonequivalent by Lemma \ref{lem:rank-kernel-invariance}.  If $t_2=1$, then
$b=0$ and $c=t_3-1$, and \eqref{eq:t1one-length} with $t_2=1$ gives
$t+1=3+2+t_3$, that is, $t_3=t-4$ and $c=t-5$.  The comparison code is thus
$\bar H^{2,0,t-5}$, which is the excluded case.
\end{proof}

The pair left open by Theorem \ref{thm:t1-one-z2s} is therefore $H^{1,1,t-4}$
and $\bar H^{2,0,t-5}$ with $t\geq5$, and these two codes share a remarkable
amount of structure.  Both have kernel dimension $t-2$ and rank $t+3$, by
\eqref{eq:z8-five}; both have $2^{\,t-3}$ kernel blocks of size $8$; and, by
\eqref{eq:profile-t1-one} and Corollary \ref{cor:z8-local-rank}, the local
rank on every single block is $4$ for both.  In the language of Section
\ref{sec:descent}, both are extremal, their common local rank attaining the
lower bound $t-\kappa+2=4$: on the $\Z_2\Z_4\Z_8$ side because $t_1=1$, and on the
other side because the punctured code on a block is the linear Hadamard
code of length $8$.  Hence, neither the rank nor the dimension of the kernel,
nor the profile can separate them, and we must refine the invariant.

\subsection{A two-block refinement}\label{subsec:two-block}

The profile records the dimension of the span punctured on one block at a
time.  The natural refinement is to puncture on the union of two blocks, and
that is enough to settle the remaining family.

\begin{definition}\label{def:mu}
Let $C\subseteq\Z_2^N$ be a binary code with $\zero\in C$, and assume that $C$
has at least two kernel blocks.  We define
\begin{equation}\label{eq:mu}
\mu(C)=\max\bigl\{\dim\langle C\rangle|_{B\cup B'}:\ B,B'\in\cB(\K(C)),
\ B\not=B'\bigr\}.
\end{equation}
\end{definition}

\begin{lemma}\label{lem:mu-invariant}
The integer $\mu(C)$ is invariant under binary code equivalence.
\end{lemma}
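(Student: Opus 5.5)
The plan is to copy the argument of Lemma \ref{lem:profile-invariant}, now applied to unions of two blocks instead of single blocks. Let $C,D\subseteq\Z_2^N$ be equivalent codes with $\zero\in C$ and $\zero\in D$, say $D=\yy+\pi(C)$ for a coordinate permutation $\pi$ and a vector $\yy$.

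First, by Lemma \ref{lem:profile-invariant}, $\pi$ maps $\cB(\K(C))$ bijectively onto $\cB(\K(D))$. Hence $C$ has at least two kernel blocks exactly when $D$ does, so $\mu(D)$ is defined precisely when $\mu(C)$ is. Moreover, $B\mapsto\pi(B)$ induces a bijection between unordered pairs of distinct blocks of $C$ and of $D$, and
$$
\pi(B\cup B')=\pi(B)\cup\pi(B').
$$

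Second, Lemma \ref{lem:rank-kernel-invariance} gives $\langle D\rangle=\pi(\langle C\rangle)$. Apply \eqref{eq:relabel} with $V=\langle C\rangle$ and $S=B\cup B'$. This yields
$$
\dim\langle D\rangle|_{\pi(B)\cup\pi(B')}=\dim\pi(\langle C\rangle)|_{\pi(B\cup B')}=\dim\langle C\rangle|_{B\cup B'}.
$$

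Third, taking the maximum over all pairs of distinct blocks, and using the bijection between pairs from the first step, gives $\mu(D)=\mu(C)$.

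No step is a genuine obstacle. The only care needed is the translation vector $\yy$. It is absorbed because Lemma \ref{lem:rank-kernel-invariance} shows the span of $D$ is $\pi(\langle C\rangle)$ with no translate. That in turn relies on $\zero\in C$ and $\zero\in D$, which forces $\yy\in\pi(C)$, so $\yy\in\pi(\langle C\rangle)$.
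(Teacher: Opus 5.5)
Your proposal is correct and follows essentially the same route as the paper: the block bijection from Lemma~\ref{lem:profile-invariant}, the identity $\langle D\rangle=\pi(\langle C\rangle)$ from Lemma~\ref{lem:rank-kernel-invariance}, and \eqref{eq:relabel} applied with $S=B\cup B'$, followed by taking the maximum over the pairs of distinct blocks. You also observe that $D$ has at least two kernel blocks exactly when $C$ does, so that $\mu(D)$ is defined whenever $\mu(C)$ is; the paper leaves this point implicit.
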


\begin{proof}
Let $D=\yy+\pi(C)$ be a code with $\zero\in C\cap D$.  By Lemma
\ref{lem:profile-invariant}, the permutation $\pi$ maps the kernel blocks of
$C$ bijectively onto those of $D$; hence it maps unordered pairs of distinct
kernel blocks of $C$ onto unordered pairs of distinct kernel blocks of $D$,
and this correspondence between pairs is a bijection.

We now compare the restricted spaces.  By Lemma
\ref{lem:rank-kernel-invariance}, we have $\langle D\rangle=\pi(\langle
C\rangle)$, so \eqref{eq:relabel} applies with $V=\langle C\rangle$.  Taking
$S=B\cup B'$ for two distinct blocks $B,B'\in\cB(\K(C))$, and using
$\pi(B\cup B')=\pi(B)\cup\pi(B')$, we obtain
$$
\dim\langle D\rangle|_{\pi(B)\cup\pi(B')}
=\dim\langle C\rangle|_{B\cup B'} .
$$
As the pair $\{B,B'\}$ runs over the unordered pairs of distinct kernel blocks
of $C$, the pair $\{\pi(B),\pi(B')\}$ runs over those of $D$.  Taking the
maximum over all pairs on both sides therefore gives $\mu(D)=\mu(C)$.
\end{proof}

By Lemma \ref{lem:evaluation-principle}, the quantity
$\dim\langle C\rangle|_{B\cup B'}$ is the dimension of the sum of the
two spaces of coordinate functions attached to $B$ and to $B'$; this is how we
will compute it.  The next lemma shows how $\mu$ behaves under a Plotkin
extension, and it is what makes the computation finite: it reduces the whole
infinite family to the single shortest member.

\begin{lemma}\label{lem:mu-plotkin}
Let $C\subseteq\Z_2^N$ be a binary code with $\zero\in C$ and $\one\in\K(C)$,
having at least two kernel blocks, and let $C^{+}$ be its Plotkin extension
\eqref{eq:plotkin-extension}.  Then,
\begin{equation}\label{eq:mu-recurrence}
\mu(C^{+})=\max\Bigl\{\mu(C),\ 1+\max_{B\in\cB(\K(C))}\rho_C(B)\Bigr\}.
\end{equation}
\end{lemma}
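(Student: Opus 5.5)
We use Lemma \ref{lem:plotkin-blocks}, which describes both the kernel blocks of $C^{+}$ and the span $\langle C^{+}\rangle$ in terms of those of $C$. The kernel blocks of $C^{+}$ are the sets $B^{0}$ (the copy of $B$ in the first half) and $B^{1}$ (the copy in the second half), for $B\in\cB(\K(C))$. Moreover,
$$
\langle C^{+}\rangle=\{(\zz,\zz)+\varepsilon(\zero,\one):\zz\in\langle C\rangle,\ \varepsilon\in\Z_2\}.
$$
An unordered pair of distinct blocks of $C^{+}$ is therefore of one of three kinds: $\{B^{0},B'^{0}\}$, $\{B^{1},B'^{1}\}$ with $B\not=B'$, or $\{B^{0},B'^{1}\}$ with $B$ and $B'$ arbitrary, possibly equal. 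We compute $\dim\langle C^{+}\rangle|_{S}$ for each kind separately and then take the maximum.

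\emph{Pairs inside the first copy.} Restricted to $B^{0}\cup B'^{0}$, the term $\varepsilon(\zero,\one)$ vanishes. Hence the restricted space is $\langle C\rangle|_{B\cup B'}$, and its dimension is the one entering $\mu(C)$.

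\emph{Pairs inside the second copy.} The restriction is $\{\zz|_{B\cup B'}+\varepsilon\one\}$. Since $\one\in\K(C)\subseteq C$, the vector $\one$ already lies in $\langle C\rangle$. Adding $\varepsilon\one$ therefore does not enlarge the space, and the dimension is again $\dim\langle C\rangle|_{B\cup B'}$.

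\emph{Mixed pairs.} The restriction is
$$
V_{B,B'}=\{(\zz|_B,\ \zz|_{B'}+\varepsilon\one):\zz\in\langle C\rangle,\ \varepsilon\in\Z_2\}.
$$
It is the image of $\langle C\rangle|_{B\cup B'}$ when $B\not=B'$, or of the diagonal copy of $\langle C\rangle|_B$ when $B=B'$, plus the vector $(\zero,\one)$.

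\emph{Upper bound.} Whether $(\zero,\one)$ lies in the first part or not, we get $\dim V_{B,B'}\le\dim\langle C\rangle|_{B\cup B'}+1$ for $B\not=B'$, and $\dim V_{B,B}\le\rho_C(B)+1$. The first of these bounds is too weak, and this is the main obstacle: we must show $\dim V_{B,B'}\le\max\{\mu(C),1+\max\rho_C\}$ also for $B\not=B'$. The key observation is that on a kernel block every vector of $\K(C)$ is constant, and $\one\in\K(C)$ is constant everywhere. So $\one|_{B}$ and $\one|_{B'}$ should be handled through the block structure. First try to prove that, for distinct blocks $B\not=B'$, the vector $(\zero_B,\one_{B'})$ already lies in $\langle C\rangle|_{B\cup B'}$. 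This holds because distinct blocks are separated by some kernel word $\xx$, constant on each block with different values. Then either $\xx$ or $\xx+\one$, both in $\langle C\rangle$, restricts to $(\zero,\one)$ on $B\cup B'$. Consequently $V_{B,B'}=\langle C\rangle|_{B\cup B'}$ for $B\not=B'$, so these pairs contribute at most $\mu(C)$.

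\emph{The diagonal mixed pair.} For $B=B'$, the space $V_{B,B}$ is the diagonal of $\langle C\rangle|_B$ plus $(\zero,\one)$. The vector $(\zero,\one)$ is not on the diagonal, since the diagonal has equal halves and $\one\neq\zero$. Hence $\dim V_{B,B}=\rho_C(B)+1$ exactly.

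\emph{Conclusion.} Taking the maximum over all three kinds of pairs gives $\mu(C^{+})\le\max\{\mu(C),1+\max_B\rho_C(B)\}$. Both values are attained: $\mu(C)$ by a pair inside the first copy, and $1+\rho_C(B)$ by the pair $\{B^{0},B^{1}\}$. This gives \eqref{eq:mu-recurrence}.
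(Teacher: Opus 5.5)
Your proof is correct and follows the paper's argument. The pair $\{B^{0},B^{1}\}$ contributes exactly $\rho_C(B)+1$ because $(\zero,\one)$ is off the diagonal. For $B\not=B'$, the restriction of $(\zero,\one)$ already lies in $\langle C\rangle|_{B\cup B'}$, either as a constant vector or as a separating kernel word, adjusted by $\one$ if necessary.

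Your three-way split (first copy, second copy, mixed) is only a regrouping of the paper's cases $\varepsilon=\varphi$ and $\varepsilon\not=\varphi$. You also state explicitly that both terms of the maximum are attained, which the paper leaves implicit.
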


\begin{proof}
By Lemma \ref{lem:plotkin-blocks}, the kernel blocks of $C^{+}$ are the sets
$B^{0}$ and $B^{1}$, with $B$ running over the kernel blocks of $C$, and
$\langle C^{+}\rangle$ is described by \eqref{eq:plotkin-kernel-span}.  A pair
of distinct blocks of $C^{+}$ is therefore of the form
$\{B^{\varepsilon},B'^{\varphi}\}$ with $B,B'$ blocks of $C$ and
$\varepsilon,\varphi\in\{0,1\}$, the two members being distinct.  Two kinds of
pair occur, and we compute the punctured dimension in each of them.

Consider first a pair with $B=B'$ and $\varepsilon\not=\varphi$, that is, the
pair $\{B^{0},B^{1}\}$.  Restricted to $B^{0}\cup B^{1}$, the diagonal part
$\{(\yy,\yy):\yy\in\langle C\rangle\}$ of \eqref{eq:plotkin-kernel-span} gives
the space $\{(\yy|_B,\yy|_B)\}$, which is isomorphic to $\langle C\rangle|_B$
and therefore has dimension $\rho_C(B)$.  The extra vector $(\zero,\one)$
restricts to the vector that vanishes on $B^{0}$ and is the all-one vector on
$B^{1}$; that vector is not of the form $(\yy|_B,\yy|_B)$, because $B\not=\emptyset$.  Hence, it adds exactly one dimension, and
$$
\dim\langle C^{+}\rangle|_{B^{0}\cup B^{1}}=\rho_C(B)+1 .
$$

Consider next a pair with $B\not=B'$.  Restricted to
$B^{\varepsilon}\cup B'^{\varphi}$, the diagonal part gives a space isomorphic
to $\langle C\rangle|_{B\cup B'}$, of dimension
$\dim\langle C\rangle|_{B\cup B'}$, and we must decide whether the extra
vector $(\zero,\one)$ adds a dimension.  If $\varepsilon=\varphi$, then its
restriction is constant, equal to $\zero$ when $\varepsilon=0$ and to the
all-one vector when $\varepsilon=1$; both of these already belong to
$\langle C\rangle|_{B\cup B'}$, because $\one\in\K(C)\subseteq\langle
C\rangle$.  If $\varepsilon\not=\varphi$, then its restriction is constant
equal to $\varepsilon$ on $B$ and constant equal to $\varphi$ on $B'$.  Since
$B$ and $B'$ are different kernel blocks, there exists $\xx\in\K(C)$ taking
different constant values on $B$ and on $B'$. Adding $\one$ to $\xx$ if
necessary, we may assume that those two values are $\varepsilon$ and
$\varphi$, and then the restriction of $\xx$ to $B\cup B'$ is exactly the
vector under consideration, which therefore lies in $\langle C\rangle|_{B\cup
B'}$ as well.  In both cases, no new dimension is added, and
$$
\dim\langle C^{+}\rangle|_{B^{\varepsilon}\cup B'^{\varphi}}
=\dim\langle C\rangle|_{B\cup B'} .
$$
Taking the maximum over the two kinds of pairs gives
\eqref{eq:mu-recurrence}.
\end{proof}

\begin{lemma}\label{lem:mu-base}
At binary length $32$, we have $\mu(H^{1,1,1})=6$ and
$\mu(\bar H^{2,0,0})=7$.
\end{lemma}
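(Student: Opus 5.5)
The plan is to compute both values of $\mu$ directly from Boolean coordinate functions, using Lemma \ref{lem:evaluation-principle}. By that lemma, $\dim\langle C\rangle|_{B\cup B'}$ is the dimension of $\Span\{f_\ell:\ell\in B\}+\Span\{f_\ell:\ell\in B'\}$. Both codes have length $32$, kernel dimension $3$, and hence $2^{\kappa-1}=4$ kernel blocks of size $8$ each. So there are only $\binom42=6$ pairs to examine in each code. Since the rank of each code is $t+3=8$ by \eqref{eq:z8-five}, every pair has dimension at most $8$.

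For $H^{1,1,1}$, I would first reduce to $t_3=1$, where the blocks are $B_{u,v}$ with $u,v\in\Z_2$. I would then read their local bases from Proposition \ref{prop:recalled-local-bases}, with $t_1=t_2=1$ and variables $\epsilon,a,b,c,d,e$:
\begin{itemize}
\item on $B_{0,0}$: $\{\epsilon,a,d,b\}$;
\item on $B_{0,1}$: $\{a,d,\epsilon+e,b+ad\}$;
\item on $B_{1,0}$: $\{a,d,b,\epsilon+c\}$;
\item on $B_{1,1}$: $\{a,d,b+d,\epsilon+c+e+bd\}$.
\end{itemize}
Every pair shares $a$ and $d$, so each sum is computed by reducing the remaining generators to distinct monomials in algebraic normal form. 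I expect the sums to have dimension $6$ for the pairs $\{B_{00},B_{01}\}$, $\{B_{01},B_{10}\}$ and $\{B_{01},B_{11}\}$, and dimension $5$ for the other three pairs. For instance, $B_{01}\cup B_{11}$ yields $\Span\{a,d,b,ad,\epsilon+e,\epsilon+c+bd\}$. Independence in each case follows from the uniqueness of the algebraic normal form. Hence $\mu(H^{1,1,1})=6$.

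For $\bar H^{2,0,0}$, I would use the description of Example \ref{ex:z2s-blocks}. The messages are $(x_1,x_2)\in\Z_8^2$, the symbol at the column $(1,j)^T$ is $W=x_1+jx_2$, and the coordinate function at $(j,h)$ is $W_2+h_0W_0+h_1W_1$. The block of $(j,h)$ is labelled by $(j\bmod2,\,h_0+h_1)$. I would write $x_1=p_0+2p_1+4p_2$ and $x_2=q_0+2q_1+4q_2$, and expand the digits of $jx_2$ and of the sum with carries using Lemma \ref{lem:carries}. Then I would let $j$ range over its parity class and apply Lemma \ref{lem:mobius} to the free bits of $j$. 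This gives the span on each block and then on each union of two blocks.

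The natural candidate for $7$ is a pair with the same parity of $j$ and different $h$-classes, since then both the $W_0$- and $W_1$-dependent parts survive together. Alternatively, one can start from two blocks whose local spaces (each of dimension $4$) meet only in a $1$-dimensional space.

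I expect the main obstacle to be this $\Z_8$ computation, for two reasons. First, the carries from $jx_2$ into the digits $W_1$ and $W_2$ produce quadratic and cubic terms such as $p_0q_0$, and these must be tracked carefully. Second, one must show that no pair reaches the full rank $8$: the argument has to exhibit, for each of the six pairs, one basis function of $\langle\bar H\rangle$ missing from the sum. As a cross-check, Proposition \ref{prop:z2s-transitive} implies that the group $\Gamma$ permutes pairs of blocks, so only the orbits of pairs under $\Gamma$ (the $\pi_\delta$ with $\delta$ odd and $\theta_0$) need to be computed. This cuts the work to about two or three representative pairs.
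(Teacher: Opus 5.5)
Your treatment of $H^{1,1,1}$ is the paper's argument. You use the same four local spaces read off Proposition~\ref{prop:recalled-local-bases} and the same six sums. The dimensions are $6$ on $\{B_{0,0},B_{0,1}\}$, $\{B_{0,1},B_{1,0}\}$ and $\{B_{0,1},B_{1,1}\}$, and $5$ on the other three. There is one slip, which does not affect the dimension $6$: $\cV_{0,1}+\cV_{1,1}=\Span\{a,d,b,ad,\epsilon+e,c+bd\}$. The function $\epsilon+c+bd$ you wrote is not in this space, since $\epsilon$ itself is not.

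\textbf{The gap is in the half concerning $\bar H^{2,0,0}$.} You set up exactly the paper's parametrization, but you stop before the computation, which here \emph{is} the proof. The one concrete prediction you make is also wrong: two blocks with the same $\chi=j\bmod2$ and different $\tau=h_0+h_1$ give dimension $5$, not $7$. To see this, write $j=\chi+2\xi_0+4\xi_1$, $h_1=h_0+\tau$ and $w=x_1+\chi x_2$. The coordinate function is then $w_2+\tau w_1+q_0\xi_1+(q_1+w_1q_0+\tau q_0)\xi_0+(w_0+w_1+q_0\xi_0)h_0$. By Lemma~\ref{lem:mobius}, the local space is $\cV_{\chi,\tau}=\Span\{w_2+\tau w_1,\ q_1+w_1q_0,\ q_0,\ w_0+w_1\}$. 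Changing $\tau$ with $\chi$ fixed adds only $w_1$, which is one new dimension. So the $h$-dependence is not what produces $7$.

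What produces $7$ is changing $\chi$. For $\chi=0$ we have $w=x_1$. For $\chi=1$, by Lemma~\ref{lem:carries}, the digits of $w=x_1+x_2$ carry the products $p_0q_0$, $p_1q_1$, $p_0p_1q_0$, $p_0q_0q_1$ and the variable $q_2$. After reduction modulo $\cV_{0,\tau}$, the three generators of $\cV_{1,\tau'}$ other than $q_0$ contribute the new monomials $p_0q_0$, $q_1q_0$ and $q_2$. Hence $\cV_{0,\tau}\cap\cV_{1,\tau'}=\Span\{q_0\}$, and all four mixed pairs have dimension $4+3=7$. The full list $5,5,7,7,7,7$ then also gives the upper bound, so no pair reaches the rank $8$.

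Your idea of using $\Gamma$ is sound and saves work compared with the paper, which computes all six sums. $\Gamma$ preserves $\bar H$ and acts on the labels $(\chi,\tau)\in\Z_2^2$ by all translations. So the orbits of unordered pairs of blocks are indexed by the nonzero difference of their labels. Three representatives therefore suffice, for instance $\{(0,0),(0,1)\}$, $\{(0,0),(1,0)\}$ and $\{(0,0),(1,1)\}$, with values $5,7,7$. Without the $\Z_8$ computation actually carried out, however, your proposal establishes $\mu(H^{1,1,1})=6$ but not $\mu(\bar H^{2,0,0})=7$.
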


\begin{proof}
Both codes have four kernel blocks and all their local ranks are equal to
$4$: for $H^{1,1,1}$ by Theorem \ref{thm:profile} with $t_1=t_2=t_3=1$, and
for $\bar H^{2,0,0}$ by Lemma \ref{lem:z2s-kernel-structure} and Corollary
\ref{cor:z8-local-rank} with $(a,b,c)=(2,0,0)$.  In each case, there are
therefore $\binom{4}{2}=6$ pairs of distinct blocks to examine.  For each code,
we list the four local spaces of coordinate functions and then compute the dimensions
of the six sums. By Lemma \ref{lem:evaluation-principle}, the dimension of the
sum of two local spaces is the punctured rank on the union of the
corresponding two blocks.  

We start with the code $H^{1,1,1}$, where $t_1=t_2=t_3=1$.  Write a codeword as $\epsilon\ww_0+x_1\ww_1+y_1\vv_1$
with $\epsilon\in\Z_2$, $x_1=a_1+2b_1+4c_1\in\Z_8$ and
$y_1=d_1+2e_1\in\Z_4$, as in \eqref{eq:message-digits}, and write
$\cV_{u,v}$ for the space of coordinate functions of the block $B_{u,v}$, the
labels $u$ and $v$ being single bits here.  We specialise Proposition
\ref{prop:recalled-local-bases}.

For $u=0$ and $v=0$, formula \eqref{eq:B00-basis} lists $\epsilon$, $a_1$,
$d_1$ and $b_1$; there is no product $a_ia_k$, since $t_1=1$.  For $u=0$ and
$v=1$, we have $R=\{1\}$, so $F_v=\epsilon+\eta_R+\sym_2(d_R)=\epsilon+e_1$
and $g_1=b_1+a_1\delta_R=b_1+a_1d_1$, and formula \eqref{eq:B0v-basis} lists
$a_1$, $d_1$, $\epsilon+e_1$ and $b_1+a_1d_1$.  For $u=1$, we have $S=\{1\}$
and $i_0=1$, so there is no function $\Psi_i$ and no product $a_ia_k$, and
formula \eqref{eq:Buv-basis} lists $a_1$, $d_1$, $Q_0$ and $F_0$. Evaluating
\eqref{eq:Q0F0} with $|S|=1$, which makes $\sym_2(a_S)$, $\sym_2(b_S)$ and
$\sym_4(a_S)$ vanish, and using that $t_2=1$ forces $|R|\leq1$ and hence
$\sym_2(d_R)=0$, we get $Q_0=b_1+\delta_R$ and
$F_0=\epsilon+c_1+\eta_R+b_1\delta_R$, that is, $Q_0=b_1$ and
$F_0=\epsilon+c_1$ when $v=0$, and $Q_0=b_1+d_1$ and
$F_0=\epsilon+c_1+e_1+b_1d_1$ when $v=1$.  Altogether,
$$
\begin{aligned}
\cV_{0,0}&=\Span\{\epsilon,\ a_1,\ b_1,\ d_1\},&
\cV_{0,1}&=\Span\{a_1,\ d_1,\ \epsilon+e_1,\ b_1+a_1d_1\},\\
\cV_{1,0}&=\Span\{a_1,\ d_1,\ b_1,\ \epsilon+c_1\},&
\cV_{1,1}&=\Span\{a_1,\ d_1,\ b_1,\ \epsilon+c_1+e_1+b_1d_1\},
\end{aligned}
$$
where in $\cV_{1,1}$ we replaced the generator $Q_0=b_1+d_1$ by $b_1$, which
is legitimate because $d_1$ is also a generator.  Each of the four spaces has
dimension $4$, in accordance with \eqref{eq:profile-t1-one}.

We now compute the six sums.  Adding $\cV_{0,0}$ to $\cV_{0,1}$ produces
$\epsilon$ and $\epsilon+e_1$, hence $e_1$, and $b_1$ and $b_1+a_1d_1$, hence
$a_1d_1$. Adding $\cV_{0,0}$ to $\cV_{1,0}$ produces $c_1$; and adding
$\cV_{0,0}$ to $\cV_{1,1}$ produces $c_1+e_1+b_1d_1$.  Adding $\cV_{0,1}$ to
$\cV_{1,0}$ produces $b_1$, hence $a_1d_1$, together with $\epsilon+c_1$.
Adding $\cV_{0,1}$ to $\cV_{1,1}$ produces $b_1$, hence $a_1d_1$, together
with $(\epsilon+c_1+e_1+b_1d_1)+(\epsilon+e_1)=c_1+b_1d_1$; and adding
$\cV_{1,0}$ to $\cV_{1,1}$ produces
$(\epsilon+c_1)+(\epsilon+c_1+e_1+b_1d_1)=e_1+b_1d_1$.  Hence,
$$
\begin{aligned}
\cV_{0,0}+\cV_{0,1}&=\Span\{\epsilon,a_1,b_1,d_1,e_1,a_1d_1\},&&\dim=6,\\
\cV_{0,0}+\cV_{1,0}&=\Span\{\epsilon,a_1,b_1,d_1,c_1\},&&\dim=5,\\
\cV_{0,0}+\cV_{1,1}&=\Span\{\epsilon,a_1,b_1,d_1,c_1+e_1+b_1d_1\},&&\dim=5,\\
\cV_{0,1}+\cV_{1,0}&=\Span\{a_1,d_1,b_1,a_1d_1,\epsilon+e_1,\epsilon+c_1\},
  &&\dim=6,\\
\cV_{0,1}+\cV_{1,1}&=\Span\{a_1,d_1,b_1,a_1d_1,\epsilon+e_1,c_1+b_1d_1\},&&\dim=6,\\
\cV_{1,0}+\cV_{1,1}&=\Span\{a_1,d_1,b_1,\epsilon+c_1,e_1+b_1d_1\},&&\dim=5,
\end{aligned}
$$
Each of the six generating sets is linearly independent, because every
function listed in it contains a monomial occurring in none of the others; in
the fourth set, for instance, these monomials are $a_1$, $d_1$, $b_1$,
$a_1d_1$, $e_1$ and $c_1$.  The largest of the six dimensions is $6$, so
$\mu(H^{1,1,1})=6$.

We pass to the code $\bar H^{2,0,0}$.
Here, $s=3$, $m=2$ and both rows have level $1$, so by \eqref{eq:z2s-columns},
the matrix $\bar A^{2,0,0}$ is the $2\times8$ matrix whose columns are
$(1,j)^T$ with $j\in\Z_8$.  The message is $(x_1,x_2)\in\Z_8^2$, and we
expand $x_1=p_0+2p_1+4p_2$ and $x_2=q_0+2q_1+4q_2$ in binary; these six digits
are the independent binary variables of the present computation.  By
\eqref{eq:z2s-symbol}, the entry at the additive coordinate $(1,j)$ is
$W=x_1+x_2j$, and by \eqref{eq:z2s-coordinate-value}, the binary coordinate
$(j,h)$, with $h=(h_0,h_1)$, carries the value $W_2+h_0W_0+h_1W_1$.

By Lemma \ref{lem:z2s-kernel-structure} with $\sigma=1$, the block of $(j,h)$
is determined by the pair $(\chi,\tau)$ with $\chi=j\bmod2$ and
$\tau=h_0+h_1$, as we already observed in Example \ref{ex:z2s-blocks}.  Let us
parametrise a block: writing $j=\chi+2\xi$ with $\xi=\xi_0+2\xi_1\in\Z_4$, and
$h=(h_0,h_0+\tau)$, the block with label $(\chi,\tau)$ is described by the
three free bits $\xi_0$, $\xi_1$ and $h_0$, so it has $8$ binary coordinates,
as it must.

Put $w=x_1+\chi x_2$, with binary digits $w_0,w_1,w_2$. Thus, $w=x_1$ when
$\chi=0$ and, when $\chi=1$, Lemma \ref{lem:carries} applied to the sum of
$x_1$ and $x_2$ gives
$$
w_0=p_0+q_0,\qquad w_1=p_1+q_1+p_0q_0,\qquad
w_2=p_2+q_2+p_1q_1+p_0q_0p_1+p_0q_0q_1 .
$$
Since $W=x_1+x_2(\chi+2\xi)=w+2x_2\xi$ and, modulo $8$,
$$
2x_2\xi=2x_2(\xi_0+2\xi_1)\equiv 2q_0\xi_0+4\bigl(q_0\xi_1+q_1\xi_0\bigr),
$$
adding $2x_2\xi$ to $w$ leaves the bottom digit unchanged, adds $q_0\xi_0$ to
the middle digit with a carry $w_1q_0\xi_0$ into the top digit, and adds
$q_0\xi_1+q_1\xi_0$ to the top digit; that is,
$$
W_0=w_0,\qquad W_1=w_1+q_0\xi_0,\qquad
W_2=w_2+q_0\xi_1+q_1\xi_0+w_1q_0\xi_0 .
$$
Substituting these three digits and $h_1=h_0+\tau$ into
$W_2+h_0W_0+h_1W_1$ and grouping the terms, we obtain the following
coordinate function for $(j,h)$:
$$
w_2+q_0\xi_1+q_1\xi_0+w_1q_0\xi_0+\tau\bigl(w_1+q_0\xi_0\bigr)
 +h_0\bigl(w_0+w_1+q_0\xi_0\bigr).
$$
This expression is a multilinear polynomial in the three free parameters
$\xi_0$, $\xi_1$ and $h_0$, so Lemma \ref{lem:mobius} applies and the local
space is spanned by the coefficients of $1$, $\xi_0$, $\xi_1$, $h_0$ and
$h_0\xi_0$, namely
$$
w_2+\tau w_1,\qquad q_1+w_1q_0+\tau q_0,\qquad q_0,\qquad w_0+w_1,\qquad q_0 .
$$
The fifth coefficient repeats the third one.  Since $q_0$ belongs to the
space, we may add multiples of $q_0$ to the second and to the fourth
coefficient.  For $\chi=0$, this gives,
using $w_1q_0=p_1q_0$ and $w_0+w_1=p_0+p_1$,
$$
\cV_{0,\tau}=\Span\{p_2+\tau p_1,\ q_1+p_1q_0,\ q_0,\ p_0+p_1\},
$$
and, for $\chi=1$, using $w_1q_0=p_1q_0+q_1q_0+p_0q_0$ and
$w_0+w_1=p_0+p_1+q_1+p_0q_0+q_0$,
$$
\cV_{1,\tau}=\Span\{w_2+\tau w_1,\ q_1+p_1q_0+q_1q_0+p_0q_0,\ q_0,\
 p_0+p_1+q_1+p_0q_0\} .
$$
All four spaces have dimension $4$, in agreement with Corollary
\ref{cor:z8-local-rank}; for instance, the four generators of $\cV_{1,\tau}$
contain respectively the private monomials $q_2$, $q_1q_0$, $q_0$ and $p_0$.

We now compute the six sums.  The two blocks with $\chi=0$ give
$$
\cV_{0,0}+\cV_{0,1}=\Span\{p_2,\ p_1,\ q_1+p_1q_0,\ q_0,\ p_0+p_1\}
=\Span\{p_0,p_1,p_2,q_0,q_1+p_1q_0\},
$$
of dimension $5$, the two generators $p_2$ and $p_2+p_1$ producing $p_1$ and
hence $p_0$.  The two blocks with $\chi=1$ give
$\cV_{1,0}+\cV_{1,1}=\cV_{1,0}+\Span\{w_1\}$, whose dimension is $5$ as soon
as $w_1=p_1+q_1+p_0q_0$ does not lie in $\cV_{1,0}$.  Suppose that it did.
The monomial $q_2$ occurs in $w_2$ and in no other generator of $\cV_{1,0}$,
so $w_2$ cannot occur in the combination. The monomial $p_0$ then occurs only
in $p_0+p_1+q_1+p_0q_0$, which is excluded as well; and $w_1$ is not a
combination of $q_1+p_1q_0+q_1q_0+p_0q_0$ and $q_0$, because the monomial
$p_1$ occurs in $w_1$ and in neither of them.  Hence, $w_1\notin\cV_{1,0}$ and
the sum has dimension $5$.

For the four remaining pairs, one block has $\chi=0$ and the other has
$\chi=1$.  Starting from $\cV_{0,\tau}$, of dimension $4$, we add the three
generators of $\cV_{1,\tau'}$ other than $q_0$, which is already present.
Adding $p_0+p_1+q_1+p_0q_0$ and reducing by $p_0+p_1$ and by $q_1+p_1q_0$
leaves $p_0q_0+p_1q_0$, which is new because of the monomial $p_0q_0$. Adding
$q_1+p_1q_0+q_1q_0+p_0q_0$ and reducing by $q_1+p_1q_0$ leaves
$q_1q_0+p_0q_0$, which is new because of the monomial $q_1q_0$; and the last
generator is new because it involves the variable $q_2$, which occurs nowhere
else.  Hence, each of these four sums has dimension $4+3=7$.

Therefore, the six two-block ranks of $\bar H^{2,0,0}$ are $5,5,7,7,7,7$, and
$\mu(\bar H^{2,0,0})=7$.
\end{proof}

\begin{corollary}\label{cor:exceptional-family}
For every $t\geq5$, we have $\mu\bigl(H^{1,1,t-4}\bigr)=6$ and
$\mu\bigl(\bar H^{2,0,t-5}\bigr)=7$, and consequently the codes $H^{1,1,t-4}$
and $\bar H^{2,0,t-5}$ are nonequivalent.
\end{corollary}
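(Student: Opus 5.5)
The plan is to induct on $t$, starting from Lemma~\ref{lem:mu-base} at $t=5$ and advancing one step at a time with the Plotkin recurrence of Lemma~\ref{lem:mu-plotkin}.

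First, we observe that both families grow by Plotkin extensions when $t$ increases by one. For the mixed codes, the duplication appending a row of order $2$ takes $H^{1,1,t_3}$ to $H^{1,1,t_3+1}$. Its Gray image is the Plotkin extension \eqref{eq:plotkin-extension}, up to a coordinate permutation. The argument is the same as the one given in the proof of Proposition~\ref{prop:z2z4-profile} for $\Z_2\Z_4$ codes, since the appended row is $\zero$ on the first copy and $(\one\mid\two\mid\four)$ on the second, whose Gray image is the all-one vector there. For $\bar H^{2,0,c}$, increasing $c$ adds a row of level $3$ with entries in $T_3=\{0,4\}$. The columns split into those with last entry $0$ and those with last entry $4$. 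Since $\phi_3(4w'+w)=\phi_3(4w')+\phi_3(w)$ by Lemma~\ref{lem:gray-shift}, this is again a Plotkin extension up to permutation. The hypothesis $\one\in\K$ holds in both cases: for $\bar H$ because $\Phi_3(4\one)=\one$ and $4\one$ lies in the code with order two, and in general by the last clause of Lemma~\ref{lem:plotkin-blocks}. Equivalences preserve $\mu$ by Lemma~\ref{lem:mu-invariant}, so we may work with the Plotkin forms directly.

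Second, we apply \eqref{eq:mu-recurrence} at each step. In both families all local ranks equal $4$ at every length. This follows from \eqref{eq:profile-t1-one} with $t_2=1$ for $H^{1,1,t-4}$, and from Corollary~\ref{cor:z8-local-rank} with $(a,b)=(2,0)$ for $\bar H^{2,0,t-5}$; it also follows inductively from Lemma~\ref{lem:plotkin-blocks}. Hence $1+\max_B\rho_C(B)=5$, and the recurrence gives $\mu(C^{+})=\max\{\mu(C),5\}$. Starting from the base values $6$ and $7$ of Lemma~\ref{lem:mu-base}, induction on $t$ gives $\mu(H^{1,1,t-4})=6$ and $\mu(\bar H^{2,0,t-5})=7$ for every $t\geq5$. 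The two values differ, so Lemma~\ref{lem:mu-invariant} yields nonequivalence.

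The main obstacle is the identification in the first step, specifically for $\bar H^{2,0,c}$. There one must check that the new row genuinely produces the Plotkin shape. The check is that the codeword contributed by a coefficient $x\in\{0,1\}$ of the new row is $\Phi_3$ of a word equal to $0$ on the first half and $4$ on the second half. This equals $(\zero,\one)$ in binary, and since every column of $\bar A^{2,0,c+1}$ is a column of $\bar A^{2,0,c}$ extended by $0$ or $4$, the two halves carry identical copies of $\bar H^{2,0,c}$. For the mixed code, the analogous fact is already used in \cite{Z2Z4Z8Profile} to derive Theorem~\ref{thm:profile} from the case $t_3=1$, so it can be cited.
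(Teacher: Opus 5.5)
Your proposal is correct and is the paper's proof: induction from Lemma \ref{lem:mu-base}, each increase of $t$ identified as a Plotkin extension up to permutation in both families, and Lemma \ref{lem:mu-plotkin} with maximal local rank $4$, giving $\max\{6,5\}=6$ and $\max\{7,5\}=7$. One small point: $\one\in\K(H^{1,1,1})$ at the base step should be cited from Theorem \ref{thm:recalled-mixed}, since $\Phi(\ww_0)=\one$ and $\ww_0$ has order two; the last clause of Lemma \ref{lem:plotkin-blocks} only carries this property to the later steps.
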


\begin{proof}
We argue by induction on $t$.  For $t=5$, the two codes are $H^{1,1,1}$ and
$\bar H^{2,0,0}$, and the two values of $\mu$ are given by Lemma
\ref{lem:mu-base}.

Assume the statement for some $t\geq5$, and pass from $t$ to $t+1$.  On the
$\Z_2\Z_4\Z_8$ side, going from $H^{1,1,t-4}$ to $H^{1,1,t-3}$ increases $t_3$ by one,
which, by the construction recalled in Subsection
\ref{subsec:mixed-recalled}, amounts to applying the duplication that appends
a row of order $2$. As explained in the last part of the proof of Proposition
\ref{prop:z2z4-profile}, and using $\phi_2(2)=(1,1)$ and
$\phi_3(4)=(1,1,1,1)$, this is the Plotkin extension
\eqref{eq:plotkin-extension} up to a permutation of coordinates.  On the other
side, going from $\bar H^{2,0,t-5}$ to $\bar H^{2,0,t-4}$ increases $a_3$ by
one, which by \eqref{eq:z2s-columns} doubles the set of columns, with the new row
taking the values $0$ and $2^{s-1}=4$ on the two copies. Since
$\phi_3(w+4)=\phi_3(w)+\one$ by Lemma \ref{lem:gray-shift}, this is again the
Plotkin extension \eqref{eq:plotkin-extension} up to a permutation of
coordinates.  In both cases the all-one vector belongs to the kernel, by
Theorem \ref{thm:recalled-mixed} and by Proposition
\ref{prop:z2s-kernel-basis} respectively, so Lemma \ref{lem:mu-plotkin}
applies.

It remains to evaluate the right-hand side of \eqref{eq:mu-recurrence}.  By
Theorem \ref{thm:profile}, the profile of $H^{1,1,t-4}$ is constant with value
$t_2+3=4$, and by Corollary \ref{cor:z8-local-rank}, the profile of
$\bar H^{2,0,t-5}$ is constant with value $0+1+\binom{3}{2}=4$; neither value
depends on the last parameter, so the maximum local rank is $4$ at every step.
Since $1+4=5$ is smaller than both $6$ and $7$, the recurrence
\eqref{eq:mu-recurrence} gives
$\mu\bigl(H^{1,1,t-3}\bigr)=\max\{6,5\}=6$ and 
$\mu\bigl(\bar H^{2,0,t-4}\bigr)=\max\{7,5\}=7,$
which is the statement for $t+1$.  The induction is complete, and the two
codes are nonequivalent for every $t\geq5$ because $\mu$ is an equivalence
invariant by Lemma \ref{lem:mu-invariant}.
\end{proof}

\subsection{The classification theorem}\label{subsec:final}

Everything is now in place.

\begin{theorem}\label{thm:cross-family}
Let $H^{t_1,t_2,t_3}$ be a nonlinear $\Z_2\Z_4\Z_8$-linear Hadamard code of
length $2^t$.  Then, for every $s\geq2$, the code $H^{t_1,t_2,t_3}$ is not
equivalent to any $\Z_4$-linear Hadamard code, nor to any $\Z_2\Z_4$-linear
Hadamard code, nor to any $\Z_{2^s}$-linear Hadamard code, of the same length
$2^t$.
\end{theorem}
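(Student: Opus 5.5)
The plan is to assemble the theorem from the results of this section, splitting first on $t_1$ and then on the nature of the comparison code $D$ of length $2^t$. If $t_1\geq2$, Theorem \ref{thm:t1-at-least-2} already gives the conclusion for every kind of comparison code and every $s\geq2$. We may therefore assume $t_1=1$. Nonlinearity of $H^{1,t_2,t_3}$ then forces $t_2\geq1$ by Theorem \ref{thm:recalled-mixed}.

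In the case $t_1=1$, first dispose of a linear $D$. Such a $D$ is equal to its kernel, whereas $H^{1,t_2,t_3}$ is not, so Lemma \ref{lem:rank-kernel-invariance} separates the two codes; equivalently, one can argue through the profile, since Lemma \ref{lem:linear-profile} gives the value $1$ while $t_2+3\geq4$. Next, if $D$ is a $\Z_4$-linear or a nonlinear $\Z_2\Z_4$-linear Hadamard code, Proposition \ref{prop:t1-one-z2z4} applies directly. Finally, let $D$ be a nonlinear $\Z_{2^s}$-linear Hadamard code with $s\geq2$. Theorem \ref{thm:t1-one-z2s} separates it from $H^{1,t_2,t_3}$ except when $t_2=1$ and $D$ is permutation equivalent to $\bar H^{2,0,t-5}$.

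In that remaining case, $t_2=1$ and the length relation \eqref{eq:length-relation} give $t_3=t-4$, so the code under consideration is $H^{1,1,t-4}$. Since $t_3\geq1$, we have $t\geq5$. Corollary \ref{cor:exceptional-family} shows that $H^{1,1,t-4}$ and $\bar H^{2,0,t-5}$ are nonequivalent. Equivalence is an equivalence relation, so $H^{1,1,t-4}$ is not equivalent to $D$ either.

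No step requires new work. The only point needing care is bookkeeping in the exceptional case: translating the condition on $t_2$ and $D$ into the exact parameters $(1,1,t-4)$ and checking that $t\geq5$, so that Corollary \ref{cor:exceptional-family} genuinely covers it. The rest is verifying that the case distinctions on $t_1$, on linearity of $D$, and on the family of $D$ are exhaustive.
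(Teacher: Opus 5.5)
Your proposal is correct and follows the paper's proof essentially step for step. It splits on $t_1$, invokes Theorem~\ref{thm:t1-at-least-2} for $t_1\geq2$, uses Proposition~\ref{prop:t1-one-z2z4} and Theorem~\ref{thm:t1-one-z2s} for $t_1=1$, and settles the residual pair $H^{1,1,t-4}$, $\bar H^{2,0,t-5}$ by Corollary~\ref{cor:exceptional-family}, transferring the conclusion to any code permutation equivalent to $\bar H^{2,0,t-5}$. Your explicit justification of the linear case via the kernel dimension and your check that $t\geq5$ follows from $t_3\geq1$ are small details that the paper leaves implicit.
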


\begin{proof}
Since $t_1\geq1$, exactly one of the two cases $t_1\geq2$ and $t_1=1$ occurs.
If $t_1\geq2$, the statement is Theorem \ref{thm:t1-at-least-2}.  Assume
therefore $t_1=1$.  The comparison with the $\Z_4$-linear and the
$\Z_2\Z_4$-linear Hadamard codes is Proposition \ref{prop:t1-one-z2z4}.  For
the $\Z_{2^s}$-linear Hadamard codes, a linear comparison code is separated by
linearity, because $H^{1,t_2,t_3}$ is nonlinear, and a nonlinear one is
settled by Theorem \ref{thm:t1-one-z2s}, except when $t_2=1$ and the
comparison code is permutation equivalent to $\bar H^{2,0,t-5}$.  In that
remaining case, the $\Z_2\Z_4\Z_8$-linear Hadamard code is $H^{1,1,t_3}$ with, by
\eqref{eq:t1one-length}, $t_3=t-4$, so the pair is $H^{1,1,t-4}$ and
$\bar H^{2,0,t-5}$, which is settled by Corollary
\ref{cor:exceptional-family}; and a code permutation equivalent to
$\bar H^{2,0,t-5}$ is not equivalent to $H^{1,1,t-4}$ either.
\end{proof}

Combining Theorem \ref{thm:cross-family} with the internal classification of
\cite{Z2Z4Z8Profile}, the classification may be stated in its final form: for every
$t\geq3$, the $\lfloor(t^2+6)/12\rfloor$ codes $H^{t_1,t_2,t_3}$ of length
$2^t$ are pairwise nonequivalent, and none of them, apart from the unique
linear one, is equivalent to a $\Z_4$-linear, a $\Z_2\Z_4$-linear or a
$\Z_{2^s}$-linear Hadamard code of the same length, for any $s\geq2$.  

Table \ref{tab:families} summarises the three profiles side by side.  The
striking feature is the last column: the profile of the $\Z_2\Z_4\Z_8$-linear Hadamard codes is
nonconstant as soon as $t_1\geq2$, whereas the two comparison families are
always constant, and this single difference is what does most of the work
above.  The three families are moreover indistinguishable by the two coarser
data displayed in the third and fourth columns, since all of them have
$2^{\kappa-1}$ blocks of size $2^{\,t-\kappa+1}$; only the distribution of the
rank over those blocks differs.

\begin{table}[ht]
\small
\centering
\caption{The kernel-block rank profile of the three families of Hadamard codes
of length $2^t$ and kernel dimension $\kappa$.  For the $\Z_2\Z_4\Z_8$-linear Hadamard codes,
$\rho_0$ and $\rho_1$ are given by \eqref{eq:rho-def} and
$\rho_0-\rho_1=t_1-1$; for the $\Z_{2^s}$-linear family the value
$\rho_{a_1,\dots,a_s}=\rank(\bar H^{a_1,\dots,a_{s-1}})$ of
\eqref{eq:local-rank-general} is used, which is legitimate for every nonlinear
code by Remark \ref{rem:chain-reduction}.  The second column gives, for the
first two families, the exact criterion of nonlinearity; for the third one it
gives the criterion under which the codes are listed here, the general one
being read off \eqref{eq:sigma} and the linearity criterion recalled in
Subsection \ref{subsec:z2s-family}.}
\label{tab:families}
\begin{tabular}{@{}lcccc@{}}
\toprule
Code & nonlinear when & blocks & block size & profile\\
\midrule
$H^{t_1,t_2,t_3}$ & $(t_1,t_2)\not=(1,0)$ & $2^{\kappa-1}$
 & $2^{\,t-\kappa+1}$
 & $\bigl\{\!\bigl\{\rho_0^{\,[2^{\kappa-t_1-1}]},
    \rho_1^{\,[(2^{t_1}-1)2^{\kappa-t_1-1}]}\bigr\}\!\bigr\}$\\[1mm]
$H^{U,V}$ & $U\geq2$ & $2^{\kappa-1}$ & $2^{\,t-\kappa+1}$
 & $\bigl\{\!\bigl\{(U+1)^{\,[2^{\kappa-1}]}\bigr\}\!\bigr\}$\\[1mm]
$\bar H^{a_1,\dots,a_s}$ & $a_1\geq2$ ($a_1\geq3$ if $s=2$) & $2^{\kappa-1}$
 & $2^{\,t-\kappa+1}$
 & $\bigl\{\!\bigl\{\rho_{a_1,\dots,a_s}^{\,[2^{\kappa-1}]}\bigr\}\!\bigr\}$\\
\midrule
\multicolumn{5}{@{}l@{}}{The profile is constant for the last two families
always, and for the first one if and only if $t_1=1$.}\\
\bottomrule
\end{tabular}
\end{table}

Tables \ref{tab:small} and \ref{tab:large} list the type, the pair $(r,k)$
and the profile of every $\Z_4$-linear, $\Z_2\Z_4$-linear, $\Z_8$-linear,
$\Z_{2^4}$-linear, $\Z_{2^5}$-linear and $\Z_2\Z_4\Z_8$-linear Hadamard code
of length $2^t$, for $4\leq t\leq9$.  Four features deserve comment.  First,
the linear codes of each length are those with $r=k=t+1$, and they all have
the profile $\{\!\{1^{[2^t]}\}\!\}$ by Lemma \ref{lem:linear-profile}; the
interest of the tables lies in the nonlinear ones.  Second, the entries marked
with $\star$ agree in the length, the rank, the dimension of the kernel and
the whole profile, and only the two-block invariant $\mu$ separates the
$\Z_2\Z_4\Z_8$-linear one from the others; they are the members of the
exceptional family of Corollary \ref{cor:exceptional-family}.  Third, several
rows of the $\Z_{2^4}$-linear and $\Z_{2^5}$-linear families repeat the
invariants of a row above, which is Remark \ref{rem:chain-reduction} in
action: a $\Z_{2^s}$-linear Hadamard code with $a_1=1$ is permutation
equivalent to the $\Z_{2^{s-1}}$-linear one displayed there, and the two rows
are therefore the same code.  Fourth, seven codes have no entry in the rank
column, namely those that the same remark does not reduce below $s=4$, because
no formula for the rank of a $\Z_{2^s}$-linear Hadamard code is known for
$s\geq4$. Their profile, on the contrary, is available, because Corollary
\ref{cor:local-rank-bound} expresses it as the rank of a code one alphabet
lower.  This is the concrete form of the advantage claimed in the
introduction.

\begin{table}[!ht]
\scriptsize
\setlength{\tabcolsep}{3pt}
\caption{Type, rank $r$, dimension of the kernel $k$ and kernel-block rank profile $\cR$ of all the $\Z_4$-linear, $\Z_2\Z_4$-linear, $\Z_8$-linear, $\Z_{2^4}$-linear, $\Z_{2^5}$-linear and $\Z_2\Z_4\Z_8$-linear Hadamard codes of length $2^t$, for $4\leq t\leq6$.  By Remark \ref{rem:chain-reduction}, a $\Z_{2^s}$-linear Hadamard code with $a_1=1$ is permutation equivalent to a $\Z_{2^{s-1}}$-linear one, which is why some rows repeat the invariants of a row above.  The codes marked with $\star$ all share the length, the rank, the dimension of the kernel and the profile; they are the members of the exceptional family of Corollary \ref{cor:exceptional-family}, and the invariant $\mu$ of Definition \ref{def:mu} separates the $\Z_2\Z_4\Z_8$-linear one from the others.}
\label{tab:small}
\centering
\begin{tabular}{|c||ccc|ccc|ccc|}
\cline{1-10}
 & \multicolumn{3}{c|}{$t=4$}& \multicolumn{3}{c|}{$t=5$}& \multicolumn{3}{c|}{$t=6$}\\
\cline{2-10}
 & type & $(r,k)$ & $\cR$ & type & $(r,k)$ & $\cR$ & type & $(r,k)$ & $\cR$\\[0.2em]
\hline
\multirow{3}{*}{$\Z_4$} & $(2^{3}; 1,3)$ & $(5,5)$ & $\{\!\{1^{[2^{4}]}\}\!\}$ & $(2^{4}; 1,4)$ & $(6,6)$ & $\{\!\{1^{[2^{5}]}\}\!\}$ & $(2^{5}; 1,5)$ & $(7,7)$ & $\{\!\{1^{[2^{6}]}\}\!\}$\\
 & $(2^{3}; 2,1)$ & $(5,5)$ & $\{\!\{1^{[2^{4}]}\}\!\}$ & $(2^{4}; 2,2)$ & $(6,6)$ & $\{\!\{1^{[2^{5}]}\}\!\}$ & $(2^{5}; 2,3)$ & $(7,7)$ & $\{\!\{1^{[2^{6}]}\}\!\}$\\
 &  &  &  & $(2^{4}; 3,0)$ & $(7,4)$ & $\{\!\{3^{[8]}\}\!\}$ & $(2^{5}; 3,1)$ & $(8,5)$ & $\{\!\{3^{[2^{4}]}\}\!\}$\\[0.2em]
\hline
\multirow{3}{*}{$\Z_2\Z_4$} & $(8,4; 1,3)$ & $(5,5)$ & $\{\!\{1^{[2^{4}]}\}\!\}$ & $(16,8; 1,4)$ & $(6,6)$ & $\{\!\{1^{[2^{5}]}\}\!\}$ & $(32,16; 1,5)$ & $(7,7)$ & $\{\!\{1^{[2^{6}]}\}\!\}$\\
 & $(4,6; 2,1)$ & $(6,3)$ & $\{\!\{3^{[4]}\}\!\}$ & $(8,12; 2,2)$ & $(7,4)$ & $\{\!\{3^{[8]}\}\!\}$ & $(16,24; 2,3)$ & $(8,5)$ & $\{\!\{3^{[2^{4}]}\}\!\}$\\
 &  &  &  &  &  &  & $(8,28; 3,1)$ & $(10,4)$ & $\{\!\{4^{[8]}\}\!\}$\\[0.2em]
\hline
\multirow{4}{*}{$\Z_8$} & $(2^{2}; 1,0,2)$ & $(5,5)$ & $\{\!\{1^{[2^{4}]}\}\!\}$ & $(2^{3}; 1,0,3)$ & $(6,6)$ & $\{\!\{1^{[2^{5}]}\}\!\}$ & $(2^{4}; 1,0,4)$ & $(7,7)$ & $\{\!\{1^{[2^{6}]}\}\!\}$\\
 & $(2^{2}; 1,1,0)$ & $(5,5)$ & $\{\!\{1^{[2^{4}]}\}\!\}$ & $(2^{3}; 1,1,1)$ & $(6,6)$ & $\{\!\{1^{[2^{5}]}\}\!\}$ & $(2^{4}; 1,1,2)$ & $(7,7)$ & $\{\!\{1^{[2^{6}]}\}\!\}$\\
 &  &  &  & $(2^{3}; 2,0,0){}^{\star}$ & $(8,3)$ & $\{\!\{4^{[4]}\}\!\}$ & $(2^{4}; 1,2,0)$ & $(8,5)$ & $\{\!\{3^{[2^{4}]}\}\!\}$\\
 &  &  &  &  &  &  & $(2^{4}; 2,0,1){}^{\star}$ & $(9,4)$ & $\{\!\{4^{[8]}\}\!\}$\\[0.2em]
\hline
\multirow{3}{*}{$\Z_{2^4}$} & $(2; 1,0,0,1)$ & $(5,5)$ & $\{\!\{1^{[2^{4}]}\}\!\}$ & $(2^{2}; 1,0,0,2)$ & $(6,6)$ & $\{\!\{1^{[2^{5}]}\}\!\}$ & $(2^{3}; 1,0,0,3)$ & $(7,7)$ & $\{\!\{1^{[2^{6}]}\}\!\}$\\
 &  &  &  & $(2^{2}; 1,0,1,0)$ & $(6,6)$ & $\{\!\{1^{[2^{5}]}\}\!\}$ & $(2^{3}; 1,0,1,1)$ & $(7,7)$ & $\{\!\{1^{[2^{6}]}\}\!\}$\\
 &  &  &  &  &  &  & $(2^{3}; 1,1,0,0){}^{\star}$ & $(9,4)$ & $\{\!\{4^{[8]}\}\!\}$\\[0.2em]
\hline
\multirow{2}{*}{$\Z_{2^5}$} & $(1; 1,0,0,0,0)$ & $(5,5)$ & $\{\!\{1^{[2^{4}]}\}\!\}$ & $(2; 1,0,0,0,1)$ & $(6,6)$ & $\{\!\{1^{[2^{5}]}\}\!\}$ & $(2^{2}; 1,0,0,0,2)$ & $(7,7)$ & $\{\!\{1^{[2^{6}]}\}\!\}$\\
 &  &  &  &  &  &  & $(2^{2}; 1,0,0,1,0)$ & $(7,7)$ & $\{\!\{1^{[2^{6}]}\}\!\}$\\[0.2em]
\hline
\multirow{3}{*}{$\Z_2\Z_4\Z_8$} & $(4,2,2; 1,0,2)$ & $(5,5)$ & $\{\!\{1^{[2^{4}]}\}\!\}$ & $(8,4,4; 1,0,3)$ & $(6,6)$ & $\{\!\{1^{[2^{5}]}\}\!\}$ & $(16,8,8; 1,0,4)$ & $(7,7)$ & $\{\!\{1^{[2^{6}]}\}\!\}$\\
 &  &  &  & $(4,6,4; 1,1,1){}^{\star}$ & $(8,3)$ & $\{\!\{4^{[4]}\}\!\}$ & $(8,12,8; 1,1,2){}^{\star}$ & $(9,4)$ & $\{\!\{4^{[8]}\}\!\}$\\
 &  &  &  &  &  &  & $(4,6,12; 2,0,1)$ & $(12,3)$ & $\{\!\{6^{[1]},\,5^{[3]}\}\!\}$\\[0.2em]
\hline
\end{tabular}
\end{table}

\begin{table}[!ht]
\scriptsize
\setlength{\tabcolsep}{2pt}
\caption{The same data for $7\leq t\leq9$.  A dash in the column $r$ records that no formula for that rank is known. This happens exactly for the $\Z_{2^s}$-linear codes that Remark \ref{rem:chain-reduction} does not reduce below $s=4$, whereas their profile is given by Corollary \ref{cor:local-rank-bound}.  The single exception is the entry marked $\dagger$, where that corollary gives $\rho_{2,0,0,0,0}=\rank(\bar H^{2,0,0,0})$ and no formula for that rank is known; its value was obtained the value listed there was obtained with {\sc Magma}.}
\label{tab:large}
\centering
\resizebox{\textwidth}{!}{%
\begin{tabular}{|c||ccc|ccc|ccc|}
\cline{1-10}
 & \multicolumn{3}{c|}{$t=7$}& \multicolumn{3}{c|}{$t=8$}& \multicolumn{3}{c|}{$t=9$}\\
\cline{2-10}
 & type & $(r,k)$ & $\cR$ & type & $(r,k)$ & $\cR$ & type & $(r,k)$ & $\cR$\\[0.2em]
\hline
\multirow{5}{*}{$\Z_4$} & $(2^{6}; 1,6)$ & $(8,8)$ & $\{\!\{1^{[2^{7}]}\}\!\}$ & $(2^{7}; 1,7)$ & $(9,9)$ & $\{\!\{1^{[2^{8}]}\}\!\}$ & $(2^{8}; 1,8)$ & $(10,10)$ & $\{\!\{1^{[2^{9}]}\}\!\}$\\
 & $(2^{6}; 2,4)$ & $(8,8)$ & $\{\!\{1^{[2^{7}]}\}\!\}$ & $(2^{7}; 2,5)$ & $(9,9)$ & $\{\!\{1^{[2^{8}]}\}\!\}$ & $(2^{8}; 2,6)$ & $(10,10)$ & $\{\!\{1^{[2^{9}]}\}\!\}$\\
 & $(2^{6}; 3,2)$ & $(9,6)$ & $\{\!\{3^{[2^{5}]}\}\!\}$ & $(2^{7}; 3,3)$ & $(10,7)$ & $\{\!\{3^{[2^{6}]}\}\!\}$ & $(2^{8}; 3,4)$ & $(11,8)$ & $\{\!\{3^{[2^{7}]}\}\!\}$\\
 & $(2^{6}; 4,0)$ & $(11,5)$ & $\{\!\{4^{[2^{4}]}\}\!\}$ & $(2^{7}; 4,1)$ & $(12,6)$ & $\{\!\{4^{[2^{5}]}\}\!\}$ & $(2^{8}; 4,2)$ & $(13,7)$ & $\{\!\{4^{[2^{6}]}\}\!\}$\\
 &  &  &  &  &  &  & $(2^{8}; 5,0)$ & $(16,6)$ & $\{\!\{5^{[2^{5}]}\}\!\}$\\[0.2em]
\hline
\multirow{4}{*}{$\Z_2\Z_4$} & $(64,32; 1,6)$ & $(8,8)$ & $\{\!\{1^{[2^{7}]}\}\!\}$ & $(128,64; 1,7)$ & $(9,9)$ & $\{\!\{1^{[2^{8}]}\}\!\}$ & $(256,128; 1,8)$ & $(10,10)$ & $\{\!\{1^{[2^{9}]}\}\!\}$\\
 & $(32,48; 2,4)$ & $(9,6)$ & $\{\!\{3^{[2^{5}]}\}\!\}$ & $(64,96; 2,5)$ & $(10,7)$ & $\{\!\{3^{[2^{6}]}\}\!\}$ & $(128,192; 2,6)$ & $(11,8)$ & $\{\!\{3^{[2^{7}]}\}\!\}$\\
 & $(16,56; 3,2)$ & $(11,5)$ & $\{\!\{4^{[2^{4}]}\}\!\}$ & $(32,112; 3,3)$ & $(12,6)$ & $\{\!\{4^{[2^{5}]}\}\!\}$ & $(64,224; 3,4)$ & $(13,7)$ & $\{\!\{4^{[2^{6}]}\}\!\}$\\
 &  &  &  & $(16,120; 4,1)$ & $(15,5)$ & $\{\!\{5^{[2^{4}]}\}\!\}$ & $(32,240; 4,2)$ & $(16,6)$ & $\{\!\{5^{[2^{5}]}\}\!\}$\\[0.2em]
\hline
\multirow{8}{*}{$\Z_8$} & $(2^{5}; 1,0,5)$ & $(8,8)$ & $\{\!\{1^{[2^{7}]}\}\!\}$ & $(2^{6}; 1,0,6)$ & $(9,9)$ & $\{\!\{1^{[2^{8}]}\}\!\}$ & $(2^{7}; 1,0,7)$ & $(10,10)$ & $\{\!\{1^{[2^{9}]}\}\!\}$\\
 & $(2^{5}; 1,1,3)$ & $(8,8)$ & $\{\!\{1^{[2^{7}]}\}\!\}$ & $(2^{6}; 1,1,4)$ & $(9,9)$ & $\{\!\{1^{[2^{8}]}\}\!\}$ & $(2^{7}; 1,1,5)$ & $(10,10)$ & $\{\!\{1^{[2^{9}]}\}\!\}$\\
 & $(2^{5}; 1,2,1)$ & $(9,6)$ & $\{\!\{3^{[2^{5}]}\}\!\}$ & $(2^{6}; 1,2,2)$ & $(10,7)$ & $\{\!\{3^{[2^{6}]}\}\!\}$ & $(2^{7}; 1,2,3)$ & $(11,8)$ & $\{\!\{3^{[2^{7}]}\}\!\}$\\
 & $(2^{5}; 2,0,2){}^{\star}$ & $(10,5)$ & $\{\!\{4^{[2^{4}]}\}\!\}$ & $(2^{6}; 1,3,0)$ & $(12,6)$ & $\{\!\{4^{[2^{5}]}\}\!\}$ & $(2^{7}; 1,3,1)$ & $(13,7)$ & $\{\!\{4^{[2^{6}]}\}\!\}$\\
 & $(2^{5}; 2,1,0)$ & $(12,4)$ & $\{\!\{5^{[8]}\}\!\}$ & $(2^{6}; 2,0,3){}^{\star}$ & $(11,6)$ & $\{\!\{4^{[2^{5}]}\}\!\}$ & $(2^{7}; 2,0,4){}^{\star}$ & $(12,7)$ & $\{\!\{4^{[2^{6}]}\}\!\}$\\
 &  &  &  & $(2^{6}; 2,1,1)$ & $(13,5)$ & $\{\!\{5^{[2^{4}]}\}\!\}$ & $(2^{7}; 2,1,2)$ & $(14,6)$ & $\{\!\{5^{[2^{5}]}\}\!\}$\\
 &  &  &  & $(2^{6}; 3,0,0)$ & $(17,4)$ & $\{\!\{7^{[8]}\}\!\}$ & $(2^{7}; 2,2,0)$ & $(17,5)$ & $\{\!\{6^{[2^{4}]}\}\!\}$\\
 &  &  &  &  &  &  & $(2^{7}; 3,0,1)$ & $(18,5)$ & $\{\!\{7^{[2^{4}]}\}\!\}$\\[0.2em]
\hline
\multirow{9}{*}{$\Z_{2^4}$} & $(2^{4}; 1,0,0,4)$ & $(8,8)$ & $\{\!\{1^{[2^{7}]}\}\!\}$ & $(2^{5}; 1,0,0,5)$ & $(9,9)$ & $\{\!\{1^{[2^{8}]}\}\!\}$ & $(2^{6}; 1,0,0,6)$ & $(10,10)$ & $\{\!\{1^{[2^{9}]}\}\!\}$\\
 & $(2^{4}; 1,0,1,2)$ & $(8,8)$ & $\{\!\{1^{[2^{7}]}\}\!\}$ & $(2^{5}; 1,0,1,3)$ & $(9,9)$ & $\{\!\{1^{[2^{8}]}\}\!\}$ & $(2^{6}; 1,0,1,4)$ & $(10,10)$ & $\{\!\{1^{[2^{9}]}\}\!\}$\\
 & $(2^{4}; 1,0,2,0)$ & $(9,6)$ & $\{\!\{3^{[2^{5}]}\}\!\}$ & $(2^{5}; 1,0,2,1)$ & $(10,7)$ & $\{\!\{3^{[2^{6}]}\}\!\}$ & $(2^{6}; 1,0,2,2)$ & $(11,8)$ & $\{\!\{3^{[2^{7}]}\}\!\}$\\
 & $(2^{4}; 1,1,0,1){}^{\star}$ & $(10,5)$ & $\{\!\{4^{[2^{4}]}\}\!\}$ & $(2^{5}; 1,1,0,2){}^{\star}$ & $(11,6)$ & $\{\!\{4^{[2^{5}]}\}\!\}$ & $(2^{6}; 1,0,3,0)$ & $(13,7)$ & $\{\!\{4^{[2^{6}]}\}\!\}$\\
 & $(2^{4}; 2,0,0,0)$ & $(-,3)$ & $\{\!\{8^{[4]}\}\!\}$ & $(2^{5}; 1,1,1,0)$ & $(13,5)$ & $\{\!\{5^{[2^{4}]}\}\!\}$ & $(2^{6}; 1,1,0,3){}^{\star}$ & $(12,7)$ & $\{\!\{4^{[2^{6}]}\}\!\}$\\
 &  &  &  & $(2^{5}; 2,0,0,1)$ & $(-,4)$ & $\{\!\{8^{[8]}\}\!\}$ & $(2^{6}; 1,1,1,1)$ & $(14,6)$ & $\{\!\{5^{[2^{5}]}\}\!\}$\\
 &  &  &  &  &  &  & $(2^{6}; 1,2,0,0)$ & $(18,5)$ & $\{\!\{7^{[2^{4}]}\}\!\}$\\
 &  &  &  &  &  &  & $(2^{6}; 2,0,0,2)$ & $(-,5)$ & $\{\!\{8^{[2^{4}]}\}\!\}$\\
 &  &  &  &  &  &  & $(2^{6}; 2,0,1,0)$ & $(-,4)$ & $\{\!\{9^{[8]}\}\!\}$\\[0.2em]
\hline
\multirow{7}{*}{$\Z_{2^5}$} & $(2^{3}; 1,0,0,0,3)$ & $(8,8)$ & $\{\!\{1^{[2^{7}]}\}\!\}$ & $(2^{4}; 1,0,0,0,4)$ & $(9,9)$ & $\{\!\{1^{[2^{8}]}\}\!\}$ & $(2^{5}; 1,0,0,0,5)$ & $(10,10)$ & $\{\!\{1^{[2^{9}]}\}\!\}$\\
 & $(2^{3}; 1,0,0,1,1)$ & $(8,8)$ & $\{\!\{1^{[2^{7}]}\}\!\}$ & $(2^{4}; 1,0,0,1,2)$ & $(9,9)$ & $\{\!\{1^{[2^{8}]}\}\!\}$ & $(2^{5}; 1,0,0,1,3)$ & $(10,10)$ & $\{\!\{1^{[2^{9}]}\}\!\}$\\
 & $(2^{3}; 1,0,1,0,0){}^{\star}$ & $(10,5)$ & $\{\!\{4^{[2^{4}]}\}\!\}$ & $(2^{4}; 1,0,0,2,0)$ & $(10,7)$ & $\{\!\{3^{[2^{6}]}\}\!\}$ & $(2^{5}; 1,0,0,2,1)$ & $(11,8)$ & $\{\!\{3^{[2^{7}]}\}\!\}$\\
 &  &  &  & $(2^{4}; 1,0,1,0,1){}^{\star}$ & $(11,6)$ & $\{\!\{4^{[2^{5}]}\}\!\}$ & $(2^{5}; 1,0,1,0,2){}^{\star}$ & $(12,7)$ & $\{\!\{4^{[2^{6}]}\}\!\}$\\
 &  &  &  & $(2^{4}; 1,1,0,0,0)$ & $(-,4)$ & $\{\!\{8^{[8]}\}\!\}$ & $(2^{5}; 1,0,1,1,0)$ & $(14,6)$ & $\{\!\{5^{[2^{5}]}\}\!\}$\\
 &  &  &  &  &  &  & $(2^{5}; 1,1,0,0,1)$ & $(-,5)$ & $\{\!\{8^{[2^{4}]}\}\!\}$\\
 &  &  &  &  &  &  & $(2^{5}; 2,0,0,0,0)$ & $(-,3)$ & $\{\!\{14^{[4]}\}\!\}^{\dagger}$\\[0.2em]
\hline
\multirow{7}{*}{$\Z_2\Z_4\Z_8$} & $(32,16,16; 1,0,5)$ & $(8,8)$ & $\{\!\{1^{[2^{7}]}\}\!\}$ & $(64,32,32; 1,0,6)$ & $(9,9)$ & $\{\!\{1^{[2^{8}]}\}\!\}$ & $(128,64,64; 1,0,7)$ & $(10,10)$ & $\{\!\{1^{[2^{9}]}\}\!\}$\\
 & $(16,24,16; 1,1,3){}^{\star}$ & $(10,5)$ & $\{\!\{4^{[2^{4}]}\}\!\}$ & $(32,48,32; 1,1,4){}^{\star}$ & $(11,6)$ & $\{\!\{4^{[2^{5}]}\}\!\}$ & $(64,96,64; 1,1,5){}^{\star}$ & $(12,7)$ & $\{\!\{4^{[2^{6}]}\}\!\}$\\
 & $(8,28,16; 1,2,1)$ & $(13,4)$ & $\{\!\{5^{[8]}\}\!\}$ & $(16,56,32; 1,2,2)$ & $(14,5)$ & $\{\!\{5^{[2^{4}]}\}\!\}$ & $(32,112,64; 1,2,3)$ & $(15,6)$ & $\{\!\{5^{[2^{5}]}\}\!\}$\\
 & $(8,12,24; 2,0,2)$ & $(13,4)$ & $\{\!\{6^{[2]},\,5^{[6]}\}\!\}$ & $(16,24,48; 2,0,3)$ & $(14,5)$ & $\{\!\{6^{[4]},\,5^{[12]}\}\!\}$ & $(16,120,64; 1,3,1)$ & $(19,5)$ & $\{\!\{6^{[2^{4}]}\}\!\}$\\
 &  &  &  & $(8,28,48; 2,1,1)$ & $(19,4)$ & $\{\!\{7^{[2]},\,6^{[6]}\}\!\}$ & $(32,48,96; 2,0,4)$ & $(15,6)$ & $\{\!\{6^{[8]},\,5^{[24]}\}\!\}$\\
 &  &  &  &  &  &  & $(16,56,96; 2,1,2)$ & $(20,5)$ & $\{\!\{7^{[4]},\,6^{[12]}\}\!\}$\\
 &  &  &  &  &  &  & $(8,28,112; 3,0,1)$ & $(26,4)$ & $\{\!\{10^{[1]},\,8^{[7]}\}\!\}$\\[0.2em]
\hline
\end{tabular}
}
\end{table}

We close the section by recording explicitly what happens on the six infinite
families of pairs, listed in Proposition \ref{prop:recalled-cross-collisions},
that the rank and the dimension of the kernel leave undecided.

\begin{corollary}\label{cor:displayed-pairs}
The six infinite families of pairs of Proposition
\ref{prop:recalled-cross-collisions} consist of nonequivalent codes, for every
indicated value of $t$:
$$
\begin{array}{llll}
H^{2,1,t-7} &\text{and}& H^{5,t-9}, & t\geq10;\\[1mm]
H^{1,1,t-4} &\text{and}& \bar H^{2,0,t-5}, & t\geq5;\\[1mm]
H^{2,1,t-7} &\text{and}& \bar H^{1,5,t-12}, & t\geq12;\\[1mm]
H^{1,5,t-12} &\text{and}& \bar H^{4,0,t-11}, & t\geq13;\\[1mm]
H^{2,6,t-17} &\text{and}& \bar H^{4,3,t-17}, & t\geq18;\\[1mm]
H^{4,5,t-21} &\text{and}& \bar H^{6,2,t-21}, & t\geq22 .
\end{array}
$$
Only the second pair requires the invariant $\mu$; all the others are
separated by the kernel-block rank profile alone.
\end{corollary}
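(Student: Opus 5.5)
The plan is to go through the six families one at a time, reusing what is already proved.

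\emph{Pairs with $t_1\geq2$.} In the first, third, fifth and sixth pairs the $\Z_2\Z_4\Z_8$-linear member has $t_1\geq2$: $t_1=2$ for $H^{2,1,t-7}$ and $H^{2,6,t-17}$, and $t_1=4$ for $H^{4,5,t-21}$. By Theorem \ref{thm:profile} its profile takes two values differing by $t_1-1\geq1$, each with positive multiplicity, so it is nonconstant. The comparison codes are $H^{5,t-9}$ and $\bar H^{1,5,t-12}$, $\bar H^{4,3,t-17}$, $\bar H^{6,2,t-21}$. All of them are nonlinear, since they share rank and kernel dimension with a nonlinear code by Proposition \ref{prop:recalled-cross-collisions}. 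Their profiles are therefore constant, by Proposition \ref{prop:z2z4-profile} for the first and Theorem \ref{thm:z2s-profile-constant} for the others. Lemma \ref{lem:profile-invariant} then separates each pair. This is exactly the argument of Theorem \ref{thm:t1-at-least-2}, so the profile alone suffices here.

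\emph{Pair $H^{1,5,t-12}$, $\bar H^{4,0,t-11}$.} This pair has $t_1=1$, so both profiles are constant and we must compare the values. By \eqref{eq:profile-t1-one} the value for $H^{1,5,t-12}$ is $t_2+3=8$. For $\bar H^{4,0,t-11}$ we have $a=4\geq2$, so Corollary \ref{cor:z8-local-rank} gives $0+1+\binom{5}{2}=11$. Since $8\neq11$, the profiles differ. As a check, $\kappa=t-6$ gives $t-\kappa+2=8$, consistent with Lemma \ref{lem:mixed-local-rank-bound}, while $\bar H^{4,0}$ is nonlinear, consistent with the strict inequality in Corollary \ref{cor:local-rank-bound}.

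\emph{Pair $H^{1,1,t-4}$, $\bar H^{2,0,t-5}$.} Both profiles equal $\{\!\{4^{[2^{t-3}]}\}\!\}$, so the profile cannot separate them. They are separated by $\mu$, since $\mu=6$ on one side and $\mu=7$ on the other, by Corollary \ref{cor:exceptional-family}. This pair is also the only one where the profile fails, which justifies the final sentence of the statement.

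The only step needing care is checking the type parameters, in particular that each comparison code is nonlinear and that Corollary \ref{cor:z8-local-rank} applies ($a\geq2$) in the fourth pair. No genuine obstacle is expected.
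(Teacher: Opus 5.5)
Your proposal is correct and follows the paper's proof essentially step for step. The four pairs with $t_1\geq2$ are separated by a nonconstant profile against a constant one, as in Theorem \ref{thm:t1-at-least-2}, the pair $H^{1,5,t-12}$, $\bar H^{4,0,t-11}$ by the constant values $8\neq 11$, and the remaining pair by $\mu$ via Corollary \ref{cor:exceptional-family}. Your nonlinearity check on the comparison codes is correct but not needed, since Theorem \ref{thm:t1-at-least-2} already covers linear comparison codes through Lemma \ref{lem:linear-profile}.
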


\begin{proof}
The $\Z_2\Z_4\Z_8$-linear Hadamard code of the first, third, fifth and sixth pairs has $t_1\geq2$, so
Theorem \ref{thm:t1-at-least-2} applies to each of them.  For the first one,
for instance, the profile of $H^{2,1,t-7}$ is
$\{\!\{7^{[2^{t-7}]},6^{[3\cdot2^{t-7}]}\}\!\}$ by Theorem \ref{thm:profile},
since $\rho_0=1+\binom{4}{2}=7$ and $\rho_1=1+2+\binom{3}{2}=6$, whereas that
of $H^{5,t-9}$ is constant, equal to $\{\!\{6^{[2^{t-5}]}\}\!\}$ by
Proposition \ref{prop:z2z4-profile}. The first multiset is nonconstant and the
second one is not, so they differ.

The fourth pair has $t_1=1$, so both profiles are constant and we compare
their values, as in Theorem \ref{thm:t1-one-z2s}.  Both codes have kernel
dimension $t-6$, hence $2^{\,t-7}$ kernel blocks of $2^{7}$ coordinates each;
but the constant value is $t_2+3=8$ for $H^{1,5,t-12}$ by
\eqref{eq:profile-t1-one}, and $0+1+\binom{5}{2}=11$ for $\bar H^{4,0,t-11}$
by Corollary \ref{cor:z8-local-rank}.  The two profiles are therefore
different.

Finally, the second pair is Corollary \ref{cor:exceptional-family}; it is the
only one of the six in which the two codes have the same profile, and
therefore the only one for which $\mu$ is needed.
\end{proof}
\section{Conclusions and further research}\label{sec:conclusions}

We have computed the kernel-block rank profile of the two families of Hadamard
codes with which the $\Z_2\Z_4\Z_8$-linear ones have to be compared, and we
have used it to complete their classification.

For the $\Z_2\Z_4$-linear Hadamard codes the computation is direct: a
nonlinear $H^{U,V}$ has $2^{U+V-1}$ kernel blocks, all of size $2^{U}$, and
constant profile of value $U+1$.  For the $\Z_{2^s}$-linear Hadamard codes $\bar H^{a_1,\dots,a_s}$ we proved, in Theorem \ref{thm:z2s-profile-constant}, that the profile is constant for every $s\geq2$.  Its proof produces coordinate permutations acting transitively on the kernel blocks: they translate the additive coordinates and
shift the Gray coordinates, the number of admissible shifts being governed by
the parameter $\sigma$ of \eqref{eq:sigma}.  No rank formula is used, which is
 why the argument is uniform in $s$. Table \ref{tab:large} shows three concrete codes for which the rank is unavailable while the profile is not.

The second main result is the descent Theorem \ref{thm:descent}, which
identifies the code punctured on a kernel block of a nonlinear
$\bar H^{a_1,\dots,a_s}$ with $a_1\geq2$ as the $\Z_{2^{s-1}}$-linear Hadamard
code $\bar H^{a_1,\dots,a_{s-1}}$.  It gives the constant local rank as
$\rank(\bar H^{a_1,\dots,a_{s-1}})$ and the sharp lower bound
$\rho_{a_1,\dots,a_s}\geq t-\kappa+2$, with equality exactly when $s=2$, or
$s=3$ and $a_1=2$.  Since the codes $H^{1,t_2,t_3}$ attain the same minimum,
only those two extremal situations remain, and the rank settles them except
for the family $H^{1,1,t-4}$ and $\bar H^{2,0,t-5}$ with $t\geq5$, whose
members share the length, the rank, the dimension of the kernel and the whole
profile, but are separated by the two-block invariant $\mu$ of Definition \ref{def:mu}
separates.  The outcome is Theorem \ref{thm:cross-family}, and it answers the
question stated as further research in \cite[\S 6]{Z2Z4Z8Linearity}, without
any computer equivalence test.  Ultimately, the families are separated by how much their local ranks exceed the lower bound $t-\kappa+2$, which is valid
for all three families. For the $\Z_2\Z_4\Z_8$-linear Hadamard codes, that excess takes the two values $\binom{t_1-1}{2}$ and $\binom{t_1-1}{2}+t_1-1$, by Lemma \ref{lem:mixed-local-rank-bound}; for the $\Z_2\Z_4$-linear codes it is always zero, since $t-\kappa+2=U+1$ there; and for a $\Z_{2^s}$-linear code it is the same on every block and measures how far the Hadamard code one alphabet lower is from being linear.

Several natural questions remain.  The first one concerns the rank of the
$\Z_{2^s}$-linear Hadamard codes.  Corollary \ref{cor:local-rank-bound}
expresses a local invariant of $\bar H^{a_1,\dots,a_s}$ as a global invariant
of a $\Z_{2^{s-1}}$-linear Hadamard code.  Read in the opposite direction, it
says that the rank of $\bar H^{a_1,\dots,a_{s-1}}$, which is known only for
$s-1\leq3$ \cite{Kro:2001:Z4_Had_Perf,PRV06,fernandez2019mathbb} and remains open
for $s-1\geq4$, is a local quantity of the code one alphabet higher,
computable on a single kernel block, which is much smaller than the whole
code.  The descent of Theorem \ref{thm:descent} thus reduces that open problem
to understanding how the rank grows when passing from a block to the whole
code.

The second question concerns ranks on unions of several blocks.  The invariant
$\mu$ of Definition \ref{def:mu} is the second level of a natural hierarchy:
for every $\ell\geq1$, we may consider the multiset of the dimensions
$\dim\langle C\rangle|_{B_1\cup\cdots\cup B_\ell}$, taken over all the
$\ell$-subsets of kernel blocks.  For $\ell=1$, this is the kernel-block rank
profile, and for $\ell=2$ it already separates the exceptional family of
Corollary \ref{cor:exceptional-family}, which the case $\ell=1$ cannot do.
It would be interesting to determine how much of the structure of an
arbitrary additive code is recovered by the whole hierarchy, and, in
particular, whether some $\ell$ always suffices to classify a given family.

The third question concerns other realisations and other alphabets.  As
recalled in \cite{Z2Z4Z8Profile}, the triple $(t_1,t_2,t_3)$ does not determine, up
to equivalence, every $\Z_2\Z_4\Z_8$-additive Hadamard code of that abstract
type, so all the statements of the present paper concern the recursively
constructed family of \cite{Z2Z4Z8Construction}, together with the
constructions already proved there to give permutation equivalent codes;
classifying all the $\Z_2\Z_4\Z_8$-additive Hadamard codes of a given abstract
type remains open.  It would also be natural to generalise the construction
and the present comparison to $\Z_2\Z_4\cdots\Z_{2^s}$-linear Hadamard codes
with all the $\alpha_i$ nonzero, or even to
$\Z_p\Z_{p^2}\cdots\Z_{p^s}$-linear generalized Hadamard codes with $p$ prime,
in the spirit of
\cite{HadamardZps,ZpZp2Construction,ZpZp2Classification}.  The kernel-block
rank profile extends verbatim to that generality and Lemma
\ref{lem:profile-invariant} holds without any change, so the invariant is
available. What remains to be done is to compute the labels and of the
local ranks.  In the same direction, the existence of $\Z_4\Z_8$-additive
Hadamard codes, that is, the case $\alpha_1=0$, $\alpha_2\not=0$,
$\alpha_3\not=0$, is still open, whereas the case $\alpha_1\not=0$,
$\alpha_2=0$, $\alpha_3\not=0$ cannot occur \cite{Z2Z4Z8Linearity}.

\bibliographystyle{elsarticle-num}
\bibliography{manuscript}

\end{document}